\documentclass[]{revtex4-2}
\usepackage{graphicx} 
\usepackage{amsmath}
\usepackage{amssymb}
\usepackage{amsthm}
\usepackage{braket}
\usepackage{comment}
\usepackage{xcolor}
\usepackage{hyperref}

\newtheorem{definition}{Definition}[section]
\newtheorem{example}[definition]{Example}

\theoremstyle{plain}
\newtheorem{theorem}{Theorem}
\newtheorem{proposition}{Proposition}[section]

\begin{document}
\title{No classical particle limit for massless quanta}

\author{Riccardo Falcone}
\thanks{These authors contributed equally to this work.}
\affiliation{Institute for Quantum Optics and Quantum Information (IQOQI), Austrian Academy of Sciences, Boltzmanngasse 3, A-1090 Vienna, Austria}

\author{Simon Fuchs}
\thanks{These authors contributed equally to this work.}
\affiliation{Institute for Theoretical Physics, ETH Zurich, Wolfgang-Pauli-Strasse 27, 8093 Zurich, Switzerland}

\begin{abstract}
We investigate whether relativistic massless classical particles may emerge as the classical limit of massless quanta. To address this question independently of any specific dynamics, environment, or pointer basis, we develop an axiomatic and purely kinematical framework for the coarse-graining approach. In this formulation, a candidate classical phase space is taken as the outcome space of a POVM subject only to minimal classicality and covariance under the relevant spacetime symmetry group. Applying this framework to the Poincaré group, we prove a no-go theorem for massless particles: the covariance requirement is incompatible with the operational conditions for classicality. The theorem leaves open field-like limits of massless quanta, for example the emergence of electromagnetic or gravitational fields, while ruling out classical massless particles, such as classical photons or gravitons.
\end{abstract}
\maketitle

\section{Introduction}\label{Introduction}

Understanding the classical limit of quantum theory involves two related but distinct questions. The first is \textit{how} an effectively classical description can arise from an underlying quantum theory. The second is \textit{which} classical theory emerges in the first place. In ordinary nonrelativistic quantum mechanics this second question is usually taken for granted rather than treated as an independent problem. For a single nonrelativistic particle with Hilbert space $L^2(\mathbb{R}^3)$, the natural classical limit is the ordinary Hamiltonian mechanics on the canonical phase space $T^*\mathbb{R}^3$, with position and momentum as classical variables \cite{landsman2012mathematical}.

The situation is conceptually different in the second-quantized quantum mechanics. In this framework, fields and particles are unified: the same quantum degrees of freedom may be equivalently described in terms of field operators or in terms of particle-number sectors \cite{fetter1971quantum}. At the classical level, however, particles and fields are different kinds of theories. A classical particle theory is formulated in terms of positions, momenta, and trajectories or phase-space distributions, whereas a classical field theory is formulated in terms of field configurations and their conjugate variables. A complete account of the quantum-to-classical transition should therefore explain not only why classical behaviour emerges, but also when the emergent classical description is particle-like and when it is field-like \cite{Hepp1974}. The issue becomes even more central in relativistic quantum theory, where quantum fields and particles appear where the second-quantized formulation is the natural one \cite{weinberg1995quantum,duncan2012conceptual}.

The general question of \textit{how} quantum systems become effectively classical has been extensively studied \cite{landsman2005classicalquantum}. Environment-induced decoherence gives a dynamical explanation of this process, showing how interactions with uncontrolled environmental degrees of freedom suppress interference and stabilize classical records \cite{Zeh1970, PhysRevD.24.1516, PhysRevD.26.1862, Joos1985, RevModPhys.75.715}. The coarse-graining approach gives a complementary perspective, based on finite measurement resolution or fuzziness: rather than asking how coherences are dynamically destroyed, it asks which statistics remain operationally accessible when measurements have finite resolution, and whether those statistics admit an effective classical probabilistic description \cite{Omnes1988, peres1995quantum, PhysRevLett.99.180403, PhysRevLett.101.090403, kofler2010macroscopic, PhysRevLett.112.010402, bibak2026classicallimitquantummechanics}. In this sense, decoherence provides a dynamical route to classicality, while coarse graining provides a kinematical criterion for identifying which finite-resolution quantum observables admit an effective classical interpretation.

The question of the target theory enters these approaches in different ways. In decoherence, the classical description is tied to the pointer states selected by the system-environment dynamics. In coarse graining, it is encoded instead in the choice of finite-resolution observables whose outcomes are interpreted as regions of a classical state space. Different choices of finite-resolution observables can select different target classical descriptions.

In the nonrelativistic single-particle case, both decoherence and coarse graining recover the same classical phase-space limit on $T^*\mathbb{R}^3$ by employing canonical coherent states, also known as Schr\"odinger or Weyl--Heisenberg coherent states \cite{schrodinger1928collected,perelomov1977generalized}. This family consists of displaced minimum-uncertainty Gaussian wave packets $\ket{\mathbf{x},\mathbf{p}} = e^{ -\frac{i}{\hbar}\mathbf{x}\cdot\hat{\mathbf{P}}} e^{\frac{i}{\hbar}  \mathbf{p}\cdot\hat{\mathbf{X}}} \ket{\mathbf{0},\mathbf{0}}$, 
where $\ket{\mathbf{0},\mathbf{0}}$ is a Gaussian wave packet centered at the origin, and $\hat{\mathbf X}$ and $\hat{\mathbf P}$ are the canonical position and momentum operators. Each state is localized around $\mathbf{x}$ in position space and around $\mathbf{p}$ in momentum space, and is covariant under phase-space translations. In decoherence, localized Gaussian wave packets arise as natural candidates for pointer states in nonrelativistic particle models \cite{PhysRevLett.70.1187,PhysRevE.50.2538,PhysRevLett.85.3552}. In the coarse-graining approach, the phase-space localization and covariance of these states make them a natural kernel for phase-space POVMs, assigning to each cell $C$ an effect $\hat{\Pi}_{C} = \int_{C} d^3\mathbf{x}d^3\mathbf{p} \ket{\mathbf{x}, \mathbf{p}} \bra{\mathbf{x}, \mathbf{p}}$. For sufficiently coarse cells, the corresponding measurement statistics admit an effective classical probabilistic description on the canonical position-momentum phase space. In this limit, the outcome $C$ of the POVM can be interpreted as the finite-resolution statement that the particle lies in the classical phase-space cell $C$.

Here, we ask whether the same particle-like target exists for relativistic massless particles. At first sight, one might expect a positive answer: relativistic point particles are standard objects in classical mechanics and relativity, where they are described by spacetime trajectories and four-momenta \cite{landau1975classical,goldstein2002classical}. However, the most familiar classical limits of massless quantum degrees of freedom are often field-like. For instance, in classical electrodynamics, the massive charged matter is described in terms of particles or currents, whereas the massless electromagnetic sector is described by a field. The question we address here is not whether massless quantum systems can have a classical limit at all, but whether they can have a classical \emph{particle} phase-space limit.

We show that this particle-like target is obstructed already at the kinematical level. The obstruction first appears if one tries to construct relativistic analogues of the nonrelativistic coherent states $\ket{\mathbf{x},\mathbf{p}}$. Poincar\'e covariance requires the seed state $\ket{\mathbf{0},\mathbf{p}}$ associated with a null momentum $p^\mu$ to be invariant under the corresponding little group. This forces its momentum-space wavefunction $\psi_{\mathbf{0},\mathbf{p}}(\mathbf{k})$ to depend on $\mathbf{k}$ and $\mathbf{p}$ only through the invariant combination $k^\mu p_\mu = |\mathbf{k}|\,|\mathbf{p}|-\mathbf{k}\cdot\mathbf{p}$. Along the direction $\mathbf{k}\parallel\mathbf{p}$ this invariant is independent of the magnitude $|\mathbf{k}|$. The wavefunction therefore cannot be localized around a definite value of $|\mathbf{p}|$. Thus, the same covariance condition that would make the states transform as classical massless particles prevents them from being localized in the full momentum space. Consequently, coarse-grained measurements cannot resolve different energy scales along the same null direction, even though these correspond to distinct momentum labels in a classical phase space.

The same obstruction is relevant for the decoherence program: unlike in the nonrelativistic massive case, there is no Poincar\'e-covariant analogue of Weyl--Heisenberg coherent states, optimally localized in both position and momentum, that could serve as natural pointer states for massless particles. Hence, the standard route to a particle-like classical phase space, whether formulated in terms of coarse graining or in terms of decoherence, is already obstructed at the kinematical level.

The failure of this coherent-state construction is not, by itself, a general no-go theorem. One might still hope that a more general coarse-grained POVM, not built from rank-one coherent-state projectors, could define a particle-like classical phase space. For this reason, we formulate the problem directly at the level of phase-space effects. We ask whether there exists a POVM on the candidate classical phase space of massless particles, $C \mapsto \hat{\Pi}_{C}$, satisfying four main requirements: (i) the assignment $C\mapsto\hat{\Pi}_{C}$ must be Poincar\'e covariant; (ii) for any sufficiently coarse partition $\{ C_i \}_{i=1}^N$, the corresponding outcomes should be mutually exclusive and stable under immediate repetition at the coarse-grained level, i.e., $\hat{\Pi}_{C_i}\hat{\Pi}_{C_j} \simeq \delta_{ij}\hat{\Pi}_{C_i}$; (iii) the momentum marginal must agree, up to coarse-grained accuracy, with the standard momentum spectral measure, $ \hat{\Pi}_{\mathbb{R}^3\times Q} \approx \hat{E}_{\hat{\mathbf{P}}}(Q):= \int_Q \frac{d^3\mathbf{k}}{2|\mathbf{k}|} \ket{\mathbf{k}}\bra{\mathbf{k}}$; and (iv) the coarse-grained sector must contain minimal directional information gain, allowing one to ask whether the particle propagates within a sufficiently large angular region or within its complement. We prove that, for massless particles, these requirements cannot be satisfied simultaneously. More precisely, the two triples of requirements (i),(ii),(iv) and (i),(iii),(iv) are independently inconsistent.

As consistency checks, we show that the same axiomatic framework does not lead to an obstruction in the nonrelativistic massive case, where Weyl--Heisenberg coherent states provide the expected solution. There, the wavefunction $\psi_{\mathbf{0},\mathbf{p}}(\mathbf{k})$ depends on $\mathbf{p}$ and $\mathbf{k}$ through the translation-invariant combination $\mathbf{k}-\mathbf{p}$. This allows $\psi_{\mathbf{0},\mathbf{p}}(\mathbf{k})$ to be peaked around $\mathbf{k}=\mathbf{p}$, and therefore to localize the state in momentum space, including along the direction $\mathbf{k}\parallel\mathbf{p}$. Accordingly, our no-go theorem does not apply in the nonrelativistic setting, in agreement with the positive results obtained in that regime. The obstruction is also absent for relativistic massive particles. This can be traced back to the fact that, for massive on-shell momenta, one has $k^\mu p_\mu=\sqrt{m^2+|\mathbf{k}|^2} \sqrt{m^2+|\mathbf{p}|^2} - \mathbf{k}\cdot\mathbf{p}$, which, unlike $|\mathbf{k}| |\mathbf{p}| - \mathbf{k}\cdot\mathbf{p}$, is not constant along the direction $\mathbf{k}\parallel\mathbf{p}$. The obstruction is therefore genuinely tied to the massless representation of the Poincar\'e group. 

The paper is organized as follows. In Sect.~\ref{Phasespace_measurements} we define the candidate classical target phase space for a massless particle and, assuming that this description emerges from an underlying quantum theory, we represent measurements of phase-space events by a POVM. In Sect.~\ref{Minimal_requirements_for_Poincaré_group} we give an operational formulation of Poincar\'e covariance, requiring the emergent classical transformation law on phase space to be induced by the unitary Poincar\'e transformations of the underlying quantum theory. In Sect.~\ref{Coarse_partitions} we define the family of coarse partitions, distinguishing directly classical coarse measurements from finer phase-space events that may still be probed but need not themselves be classical. In Sect.~\ref{requirements} we state the minimal operational requirements for a coarse-grained classical particle interpretation: weak non-invasive repeatability, weak compatibility with the quantum momentum observable, and minimal directional information gain. In Sect.~\ref{nogo_theorem} we prove that, for massless Poincar\'e representations, these assumptions are mutually inconsistent. In Sect.~\ref{massive} we show that the obstruction is lifted for massive particles by constructing coherent-state coarse-graining schemes in both the nonrelativistic and relativistic cases. Conclusions are drawn in Sect.~\ref{Conclusions}. Appendices~\ref{Regularity_conditions}--\ref{appendix_minimal_information} provide alternative or more detailed operational justifications for some assumptions used in the main text. In appendix \ref{stronger_classicality_nonrelativistic}, we discuss stronger classicality requirements for massive nonrelativistic particles. Appendix \ref{proofs} contains the technical proofs.

\section{Phase-space measurements}\label{Phasespace_measurements}

In this section, we introduces the kinematical setting for a particle-like classical limit. We first specify the candidate classical target phase space for a massless particle, whose points are labelled by position and nonzero momentum. We then assume an underlying quantum theory and formalize measurements of phase-space events by a POVM. This provides the microscopic probabilistic structure on which the later covariance and classicality requirements will be imposed.

We start by specifying the target classical state space. For a massless particle, we take this space to be
\begin{equation}\label{Gamma_R_R_}
    \Gamma := \mathbb{R}^3\times\mathbb{R}^3_*, \qquad \mathbb{R}^3_*:=\mathbb{R}^3\setminus\{\mathbf 0\}.
\end{equation}
The variables $(\mathbf x, \mathbf p) \in \Gamma$ have their usual classical interpretation: $\mathbf x$ is the position, $\mathbf p/|\mathbf p|$ the direction of propagation, and $|\mathbf p|$ the energy, in units $c=1$.

The elementary classical yes/no observable associated with a measurable phase-space cell $C\subseteq\Gamma$ is the effect $E_{C}$, corresponding to the question whether the particle lies in $C$. A complete finite-resolution measurement is then represented by a finite measurable partition of $\Gamma$. We denote the set of such partitions by
\begin{equation}
    \mathcal{P}_\Gamma := \left\{ \{C_i\}_{i=1}^N \,\middle|\, N\in\mathbb N,\  C_i\in\sigma_\Gamma,\  \bigsqcup_{i=1}^NC_i=\Gamma \right\}.
\end{equation}
Here, $\sigma_{\Gamma}$ is the $\sigma$-algebra of measurable subsets of $\Gamma$ with respect to the Lebesgue measure $d\nu(\mathbf{x},\mathbf{p}) = d^3\mathbf{x}d^3\mathbf{p}$.

We regard the classical phase-space theory described above as an emergent description of an underlying fundamental theory, which we take to be quantum mechanical. The operational content of a particle-like classical limit is therefore encoded in an assignment of quantum effects to classical phase-space events. We assume the existence of a positive operator-valued measure (POVM) $C \mapsto \hat \Pi_{C}$, which assigns to every measurable subset of the classical phase-space $C\in\sigma_\Gamma$ an effect $\hat{\Pi}_{C} \ge 0$, such that
\begin{equation}\label{identity_C_i}
\hat{\Pi}_\Gamma=\hat{\mathbb I},
\end{equation}
and, for every countable family $\{C_n\}_{n\in\mathbb N}\subset\sigma_\Gamma$ of pairwise disjoint sets,
\begin{equation}\label{additivity}
\hat{\Pi}_{\bigsqcup_{n\in\mathbb N}C_n}=\sum_{n\in\mathbb N}\hat{\Pi}_{C_n}.
\end{equation}

The effects $\hat \Pi_{C}$ act on the Hilbert space of the positive-energy scalar massless representation of the Poincar\'e group, $\mathcal H=L^2(V_0^+,d\mu_0)$ with $V_0^+:=\{k\in\mathbb R^{1,3}:k^0=|\mathbf k|,\ k^0>0\}$ and where $d\mu_0(k)=d^3\mathbf k/(2|\mathbf k|)$ is the Lorentz-invariant measure on the forward light cone \cite{wigner1939unitary}. Given a density operator $\hat{\rho}$ on $\mathcal H$, the POVM assigns probabilities to an experiment $P \in \mathcal{P}_\Gamma$ according to the Born rule, $p_{\hat \rho}(C) = \text{tr}(\hat \rho \hat \Pi_{C})$ for $C \in P$ \cite{holevo2011probabilistic}.

We assume that the POVM $C\mapsto\hat\Pi_{C}$ admits an operator-valued phase-space density representation. Namely, there exist a positive measure on $\Gamma$, $d\mu(\mathbf x,\mathbf p)=\rho(\mathbf x,\mathbf p)\,d^3\mathbf x\,d^3\mathbf p$ and a positive operator-valued density $(\mathbf x,\mathbf p)\mapsto\hat \pi_{(\mathbf x,\mathbf p)}$ such that, for every measurable region $C\in\sigma_\Gamma$,
\begin{equation}\label{Pi_C_density}
\hat{\Pi}_{C}=\int_{C}d\mu(\mathbf x,\mathbf p)\,\hat \pi_{(\mathbf x,\mathbf p)}=\int_{C}d^3\mathbf x\,d^3\mathbf p\,\rho(\mathbf x,\mathbf p)\hat \pi_{(\mathbf x,\mathbf p)}.
\end{equation}
In particular, the normalization condition \eqref{identity_C_i} becomes
\begin{equation}\label{identity}
\int_\Gamma d\mu(\mathbf x,\mathbf p)\,\hat \pi_{(\mathbf x,\mathbf p)}=\hat{\mathbb I}.
\end{equation}
Sufficient regularity conditions on the POVM $C\mapsto\hat\Pi_{C}$ guaranteeing this representation are discussed in Appendix~\ref{Regularity_conditions}.

At this point, one could impose a rank-one, or atomic, form for the operator-valued density by requiring $\hat\pi_{(\mathbf x,\mathbf p)}=\ket{\mathbf x,\mathbf p}\bra{\mathbf x,\mathbf p}$, for some family of normalized states $\ket{\mathbf x,\mathbf p}\in\mathcal H$. In this case, Eq.~\eqref{Pi_C_density} would reduce to a coherent-state POVM of the usual Perelomov type \cite{perelomov1977generalized}. We do not impose this additional assumption. Instead, we allow the density $\hat\pi_{(\mathbf x,\mathbf p)}$ to be an arbitrary positive operator-valued density compatible with the normalization condition \eqref{identity}.

\section{Covariance under Poincar\'e group}\label{Minimal_requirements_for_Poincaré_group}

In this section, we formulate the covariance requirement connecting the emergent classical phase-space description with the underlying quantum theory. The guiding idea is that classical Poincar\'e covariance should arise from the unitary Poincar\'e representation acting on the quantum Hilbert space. Operationally, Poincar\'e covariance means that inertial observers related by a Poincar\'e transformation assign compatible probabilities to the corresponding physical events. Since generic Lorentz transformations do not preserve the reference hypersurface $t=0$, this requirement cannot be imposed as a global transformation law on a single equal-time phase space. We therefore first formulate it locally, for transformations that map one point of the reference hypersurface to another. We then express this local covariance condition in terms of the probabilities assigned to small phase-space neighbourhoods around Poincar\'e-related points.

Before discussing covariance specifically for relativistic massless particles, we briefly recall how covariance is formulated for a general phase-space POVM \cite{holevo2011probabilistic}. Let $\Gamma$ be the classical phase space associated with a candidate emergent description, not necessarily the one given by Eq.~\eqref{Gamma_R_R_}, and let $z\in\Gamma$ denote a generic phase-space point. Let $G$ be a group acting measurably on $\Gamma$, and suppose that the corresponding quantum transformations are implemented on the Hilbert space $\mathcal H$ by a unitary representation
\begin{equation} \label{U_g}
g \mapsto \hat U(g),
\qquad
\hat U(g)\hat U(g')=\hat U(gg'),
\qquad
\hat U(g^{-1})=\hat U^\dagger(g).
\end{equation}
As in Sect.~\ref{Phasespace_measurements}, quantum measurements on $\Gamma$ are represented by a POVM $C\mapsto\hat\Pi_{C}$.
\begin{definition}[$G$-covariant phase-space POVM]\label{def:covariant-phase-space-povm}
Let $G$ be a group acting on $\Gamma$, and let $g\mapsto\hat U(g)$ be a unitary representation of $G$ on $\mathcal H$. A POVM $C\mapsto\hat\Pi_{C}$ on $\Gamma$ is called $G$-covariant if, for every $g\in G$ and every $C\in\sigma_\Gamma$,
\begin{equation}\label{general_covariance}
\hat\Pi_{gC}=\hat U(g)\hat\Pi_{C}\hat U^\dagger(g).
\end{equation}
\end{definition}

This formulation captures the basic idea that the classical symmetry transformations should emerge from the corresponding quantum ones. Operationally, transforming the classical region $C$ into $gC$ should be equivalent to applying the inverse transformation to the quantum state. Indeed, Eq.~\eqref{general_covariance} implies that, for every state $\hat\rho\in S(\mathcal H)$,
\begin{equation}
\operatorname{tr}\!\left(\hat\rho\hat\Pi_{gC}\right)=\operatorname{tr}\!\left(\hat U(g^{-1})\hat\rho\hat U^\dagger(g^{-1})\hat\Pi_{C}\right).
\end{equation}
In this sense, covariance expresses the compatibility between the action of $G$ on the emergent classical phase space and its unitary implementation in the underlying quantum theory.

In the nonrelativistic case, this structure can usually be implemented on a single equal-time phase space. For example, for the Galilei group without time translations, one may take
\begin{equation}
\Gamma=\mathbb R^3\times\mathbb R^3,
\end{equation}
with points labelled by position and momentum, $(\mathbf x,\mathbf p)$. Spatial translations, rotations, and Galilean boosts act on these variables while preserving the equal-time phase space itself. In particular, because Galilean spacetime has an absolute notion of simultaneity, a boost maps configurations at fixed time to configurations that can again be represented within the same phase space, with the usual transformation of position and momentum. Thus all observers may be regarded as assigning probabilities to subsets of one common phase space $\Gamma$.

In the relativistic setting, the relevant symmetry group is the proper orthochronous Poincar\'e group,
\begin{equation}
P^\uparrow_+:=\mathbb R^{1,3}\rtimes SO^+(1,3).
\end{equation}
The subgroup generated by spatial translations and spatial rotations preserves the reference hypersurface $t=0$ and acts globally on the reference phase space $\Gamma$. If $h=(\mathbf a,R)$, with $\mathbf a\in\mathbb R^3$ and $R\in SO(3)$, we write
\begin{equation}\label{Euclidean_action}
h\cdot(\mathbf x,\mathbf p):=(R\mathbf x+\mathbf a,R\mathbf p).
\end{equation}
For this Euclidean subgroup, the general covariance condition \eqref{general_covariance} takes the following form.
\begin{definition}[Euclidean covariance of the phase-space POVM]\label{def:euclidean-covariance}
Let $C\mapsto\hat\Pi_C$ be a phase-space POVM on $(\Gamma,\sigma_\Gamma)$. We say that it is covariant under spatial translations and rotations if, for every $\mathbf a\in\mathbb R^3$, every $R\in SO(3)$, and every measurable region $C\in\sigma_\Gamma$,
\begin{equation}\label{Euclidean_integrated_covariance}
\hat\Pi_{hC}=\hat U(h)\hat\Pi_C\hat U^\dagger(h),
\qquad h\cdot(\mathbf x,\mathbf p)=(R\mathbf x+\mathbf a,R\mathbf p).
\end{equation}
\end{definition}

Unlike spatial translations and rotations, Lorentz boosts change the equal-time hypersurface associated with an inertial observer. Nevertheless, a Poincar\'e transformation may map a given event of the reference hypersurface $t=0$ to another event of the same hypersurface. We will only use Poincar\'e transformations in this local sense. Given $(\mathbf x,\mathbf p)\in\Gamma$, set $x:=(0,\mathbf x)$ and $p:=(|\mathbf p|,\mathbf p)$. For $g=(a,\Lambda)\in P^\uparrow_+$, suppose that $(\Lambda x+a)^0=0$. Then $g$ maps the phase-space point $(\mathbf x,\mathbf p)$ to another point of the reference phase space, which we denote by
\begin{equation}\label{g_x_p_prime}
g\cdot(\mathbf x,\mathbf p):=(\mathbf x',\mathbf p'),\qquad (0,\mathbf x')=\Lambda(0,\mathbf x)+a,\qquad \mathbf p'=\Lambda\mathbf p,
\end{equation}
where, by a slight abuse of notation, $\Lambda\mathbf p$ denotes the spatial part of the transformed null momentum $\Lambda(|\mathbf p|,\mathbf p)$.

The classical covariance of the emergent phase-space description should be implemented by the unitary representation of the same Poincar\'e transformation in the quantum theory. Since a generic boost does not preserve the whole reference hypersurface $t=0$, this statement cannot be imposed as a global action on all phase-space cells of $\Gamma$. Instead, we impose it locally, for those Poincar\'e transformations that map a given phase-space point of the reference hypersurface to another point of the same hypersurface.

\begin{definition}[Microscopic Poincar\'e covariance]\label{def:microscopic-poincare-covariance}
Assume that the POVM $C\mapsto\hat\Pi_{C}$ admits the density representation \eqref{Pi_C_density}. We say that the density $\hat\pi_{(\mathbf x,\mathbf p)}$ satisfies microscopic Poincar\'e covariance if the following condition holds. Let $(\mathbf x,\mathbf p)\in\Gamma$, let $g=(a,\Lambda)\in P^\uparrow_+$, and suppose that $g\cdot(\mathbf x,\mathbf p)=(\mathbf x',\mathbf p')$ is defined as in Eq.~\eqref{g_x_p_prime}. Then
\begin{equation}\label{microscopic_covariance_coordinates}
\hat U(g)\hat\pi_{(\mathbf x,\mathbf p)}\hat U^\dagger(g)=\hat\pi_{(\mathbf x',\mathbf p')}.
\end{equation}
\end{definition}

Definition~\ref{def:microscopic-poincare-covariance} is the density-level form of local Poincar\'e covariance. It states that the local phase-space effect assigned to a point and the one assigned to its Poincar\'e-related image are connected by the same unitary transformation that implements the Poincar\'e transformation in the quantum theory. This is the microscopic compatibility law that will be used in this work.

However, Eq.~\eqref{microscopic_covariance_coordinates} is stated directly at the level of the operator-valued density. To keep the formulation operational, we now give a sufficient condition expressed only in terms of probabilities assigned by the POVM to small but finite phase-space cells. The relevant comparison is not between the total probabilities of shrinking cells, since these probabilities vanish as the cells shrink. Rather, one must compare the corresponding probability densities with respect to the scalar measure $\mu$ used in Eq.~\eqref{Pi_C_density}. Thus the following condition is a condition on the chosen density representation $(\hat\Pi,\mu)$, not on the abstract POVM alone.

\begin{definition}[Operational local Poincar\'e covariance]\label{def:operational-local-poincare-covariance}
Let $C\mapsto\hat\Pi_{C}$ be a POVM on $(\Gamma,\sigma_\Gamma)$, and let $\mu$ be the positive measure appearing in the density representation \eqref{Pi_C_density}. We say that the pair $(\hat\Pi,\mu)$ satisfies operational local Poincar\'e covariance if the following condition holds. Let $(\mathbf x,\mathbf p)\in\Gamma$, let $g=(a,\Lambda)\in P^\uparrow_+$, and suppose that $g\cdot(\mathbf x,\mathbf p)=(\mathbf x',\mathbf p')$ is defined as in Eq.~\eqref{g_x_p_prime}. For any pair of regular shrinking neighbourhoods $B_\epsilon(\mathbf x,\mathbf p)\subset\Gamma$ and $B_\epsilon(\mathbf x',\mathbf p')\subset\Gamma$, centered at $(\mathbf x,\mathbf p)$ and $(\mathbf x',\mathbf p')$, with nonzero $\mu$-measure, and for every state $\hat\rho\in S(\mathcal H)$, one has
\begin{equation}\label{operational_local_poincare_covariance}
\lim_{\epsilon\to0}\left[
\frac{\operatorname{tr}\!\left(\hat\rho\,\hat\Pi_{B_\epsilon(\mathbf x',\mathbf p')}\right)}{\mu(B_\epsilon(\mathbf x',\mathbf p'))}
-
\frac{\operatorname{tr}\!\left(\hat U^\dagger(g)\hat\rho\,\hat U(g)\hat\Pi_{B_\epsilon(\mathbf x,\mathbf p)}\right)}{\mu(B_\epsilon(\mathbf x,\mathbf p))}
\right]=0.
\end{equation}
\end{definition}

The normalization by the measure of the shrinking neighbourhood is essential. The probabilities of both cells vanish in the limit $\epsilon\to0$, so an unnormalized comparison would be trivial. Since Definition~\ref{def:operational-local-poincare-covariance} is meant to compare the local response of the POVM at two Poincar\'e-related phase-space points, the relevant quantity is the probability per unit $\mu$-measure. The measure $\mu$ is the same scalar measure that appears in the density representation of the POVM, Eq.~\eqref{Pi_C_density}. Thus local Poincar\'e covariance is not a condition on the abstract POVM alone, but on the pair $(\hat\Pi,\mu)$, or equivalently on the chosen density representation of the POVM.

Definition~\ref{def:operational-local-poincare-covariance} uses the local structure of the POVM, since it refers to arbitrarily small phase-space neighbourhoods. This locality is unavoidable: even at the classical level, if one wants to compare Poincar\'e-related descriptions while remaining within the fixed reference hypersurface $t=0$, one must consider transformations that act locally on phase-space points. Indeed, a generic boost does not preserve the whole hypersurface, but it may map a particular event of that hypersurface to another event of the same hypersurface. The limiting procedure in Eq.~\eqref{operational_local_poincare_covariance} extracts precisely this local covariance law. This should not be interpreted as assigning a direct classical meaning to arbitrary fine-grained phase-space measurements. The local probabilities appearing in Definition~\ref{def:operational-local-poincare-covariance} may be accessed indirectly, in particular by repeated preparations of the same state, by using different coarse or indirect measurements, and by inferring local data through statistical post-processing. A direct measurement of an arbitrarily small phase-space cell may fail the repeatability or non-invasiveness expected of a classical measurement;this issue will be addressed later when the coarse-grained classical sector is defined. For the present purpose, the condition only states that the fine local structure used to describe the POVM must be compatible with the empirically observed classical Poincar\'e covariance and with its unitary implementation in the quantum theory.

The following proposition shows that the operational condition is sufficient to obtain the microscopic density-level covariance law. The proof is given in Appendix~\ref{proof:operational-implies-microscopic-covariance}.

\begin{proposition}[Operational local covariance implies microscopic covariance]\label{prop:operational-implies-microscopic-covariance}
Assume that the POVM admits the density representation \eqref{Pi_C_density} with weakly continuous density. If operational local Poincar\'e covariance holds in the sense of Definition~\ref{def:operational-local-poincare-covariance}, then the density satisfies microscopic Poincar\'e covariance in the sense of Definition~\ref{def:microscopic-poincare-covariance}.
\end{proposition}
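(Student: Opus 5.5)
The plan is to compute both limits in Eq.~\eqref{operational_local_poincare_covariance} using the density representation \eqref{Pi_C_density} together with a Lebesgue-differentiation argument. Continuity of the density is what allows the two shrinking-cell averages to be evaluated pointwise.

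First I fix a state $\hat\rho$ and consider the scalar function $f_{\hat\rho}(\mathbf y,\mathbf q):=\operatorname{tr}(\hat\rho\,\hat\pi_{(\mathbf y,\mathbf q)})$. By Eq.~\eqref{Pi_C_density}, applied with the trace taken inside the integral (justified by positivity and monotone convergence, since $\hat\rho$ is a convex combination of rank-one projectors),
\begin{equation*}
\frac{\operatorname{tr}\!\left(\hat\rho\,\hat\Pi_{B_\epsilon(\mathbf x',\mathbf p')}\right)}{\mu(B_\epsilon(\mathbf x',\mathbf p'))}=\frac{1}{\mu(B_\epsilon)}\int_{B_\epsilon(\mathbf x',\mathbf p')}d\mu\,f_{\hat\rho}.
\end{equation*}
Weak continuity of $(\mathbf y,\mathbf q)\mapsto\hat\pi_{(\mathbf y,\mathbf q)}$ gives continuity of $f_{\hat\rho}$ for pure states. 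For mixed states the same holds provided the density is locally bounded in norm, which I would either extract from weak continuity via the uniform boundedness principle on compact neighbourhoods or impose as part of the regularity assumption. Since $f_{\hat\rho}$ is continuous and the neighbourhoods are regular and shrinking with $\mu(B_\epsilon)>0$, the average converges to $f_{\hat\rho}(\mathbf x',\mathbf p')$. In the same way, the second term equals the $\mu$-average over $B_\epsilon(\mathbf x,\mathbf p)$ of $\operatorname{tr}(\hat U^\dagger(g)\hat\rho\,\hat U(g)\,\hat\pi)$, and it converges to $\operatorname{tr}(\hat\rho\,\hat U(g)\hat\pi_{(\mathbf x,\mathbf p)}\hat U^\dagger(g))$ by cyclicity of the trace.

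Hence the hypothesis gives
\begin{equation*}
\operatorname{tr}\!\big(\hat\rho\,[\hat\pi_{(\mathbf x',\mathbf p')}-\hat U(g)\hat\pi_{(\mathbf x,\mathbf p)}\hat U^\dagger(g)]\big)=0
\end{equation*}
for every density operator $\hat\rho$. Taking $\hat\rho=\ket{\psi}\bra{\psi}$ shows that the diagonal matrix elements of the difference vanish for all $\psi$ in the form domain. Polarization then gives equality of all matrix elements, and therefore equality of the operators (as quadratic forms, or as bounded operators when the density is bounded). This is exactly Eq.~\eqref{microscopic_covariance_coordinates}.

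The main obstacle is the differentiation step. Because the density is only weakly continuous and may be unbounded, one has to justify both exchanging the trace with the integral and the convergence of the averages for general mixed $\hat\rho$. To avoid this, I would restrict to pure states, which suffices for polarization. For a fixed vector $\psi$, the function $\braket{\psi|\hat\pi|\psi}$ is continuous, so its average over a shrinking regular neighbourhood converges by an elementary $\varepsilon$--$\delta$ argument, with no need for the Lebesgue differentiation theorem. A further subtlety is that $\mu=\rho\,d\nu$ must give positive measure to the neighbourhoods, but this is assumed in Definition~\ref{def:operational-local-poincare-covariance}.
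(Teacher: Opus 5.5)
Your proposal is correct and follows essentially the paper's own argument: both normalized shrinking-cell probabilities are written as $\mu$-averages of $\operatorname{tr}(\hat\rho\,\hat\pi)$, weak continuity gives the pointwise limits, cyclicity moves $\hat U(g)$ onto the density, and equality of expectations for all states gives the operator identity. Restricting to pure states with polarization, and using uniform boundedness to cover mixed states, only spells out technical points that the paper's proof leaves implicit.
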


In Appendix~\ref{Local_objectivity}, we present an alternative geometric route to the same microscopic covariance law. That route is based on a different set of assumptions: frame-dependent phase spaces, covariance of the integrated frame-dependent POVMs, and a local objectivity condition at common points of intersecting hypersurfaces.

Combining Definition~\ref{def:euclidean-covariance}, Definition~\ref{def:microscopic-poincare-covariance}, and the required regularity of the density representation, we obtain the following definition.

\begin{definition}[Regular Poincar\'e-covariant phase-space POVM]\label{def:regular-poincare-covariant-povm}
Let $C\mapsto\hat\Pi_C$ be a POVM on $(\Gamma,\sigma_\Gamma)$. We say that $C\mapsto\hat\Pi_C$ is a regular Poincar\'e-covariant phase-space POVM if the following conditions hold:
\begin{enumerate}
\item it admits the density representation \eqref{Pi_C_density}, with scalar measure $d\mu(\mathbf x,\mathbf p)=\rho(\mathbf x,\mathbf p)\,d\nu(\mathbf x,\mathbf p)$ and positive operator-valued density $(\mathbf x,\mathbf p)\mapsto\hat\pi_{(\mathbf x,\mathbf p)}$;
\item the density representation is nondegenerate, in the sense that $\hat\pi_{(\mathbf x,\mathbf p)}$ does not vanish identically on any subset of positive $\nu$-measure;
\item the integrated POVM is Euclidean covariant in the sense of Definition~\ref{def:euclidean-covariance};
\item the operator-valued density satisfies microscopic Poincar\'e covariance in the sense of Definition~\ref{def:microscopic-poincare-covariance}.
\end{enumerate}
\end{definition}

Under these contitions, the scalar density $\rho(\mathbf x,\mathbf p)$ is constrained by the following proposition.

\begin{proposition}[Spatial homogeneity and isotropy of the scalar density]\label{prop:rho-mu-depends-on-p-modulus}
Let $C\mapsto\hat\Pi_C$ be a regular Poincar\'e-covariant phase-space POVM in the sense of Definition~\ref{def:regular-poincare-covariant-povm}. Then the scalar density $\rho$ is invariant under spatial translations and rotations:
\begin{equation}\label{rho_mu_translation_rotation_invariance}
\rho(R\mathbf x+\mathbf a,R\mathbf p)=\rho(\mathbf x,\mathbf p)
\end{equation}
for $\nu$-almost every $(\mathbf x,\mathbf p)\in\Gamma$, for every $\mathbf a\in\mathbb R^3$ and every $R\in SO(3)$. Consequently, with a slight abuse of notation, we may write
\begin{equation}\label{rho_mu_mod_p}
\rho(\mathbf x,\mathbf p)=\rho(|\mathbf p|)
\end{equation}
for $\nu$-almost every $(\mathbf x,\mathbf p)\in\Gamma$.
\end{proposition}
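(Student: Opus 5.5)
The plan is to compare the two covariance conditions of Definition~\ref{def:regular-poincare-covariant-povm}: the integrated Euclidean covariance (item~3) and the density-level law (item~4). Their difference isolates the scalar density $\rho$.

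\emph{Step 1: reduce to an integral identity.} Fix $h=(\mathbf a,R)$ acting as in Eq.~\eqref{Euclidean_action}. This $h$ is a Poincar\'e transformation with $\Lambda$ a pure rotation and $a=(0,\mathbf a)$. It therefore maps every point of the hypersurface $t=0$ to another point of it, and $g\cdot(\mathbf x,\mathbf p)$ in Eq.~\eqref{g_x_p_prime} coincides with $h\cdot(\mathbf x,\mathbf p)$. Microscopic covariance then gives $\hat U(h)\hat\pi_z\hat U^\dagger(h)=\hat\pi_{hz}$ for all $z\in\Gamma$.

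The Lebesgue measure $\nu$ is invariant under $h$. Changing variables $z\mapsto hz$ in Eq.~\eqref{Pi_C_density} gives
\begin{equation*}
\hat\Pi_{hC}=\int_C d\nu(z)\,\rho(hz)\,\hat\pi_{hz}.
\end{equation*}
On the other hand, Euclidean covariance \eqref{Euclidean_integrated_covariance} combined with the microscopic law gives
\begin{equation*}
\hat\Pi_{hC}=\hat U(h)\hat\Pi_C\hat U^\dagger(h)=\int_C d\nu(z)\,\rho(z)\,\hat\pi_{hz}.
\end{equation*}
Subtracting, we obtain $\int_C d\nu(z)\,[\rho(hz)-\rho(z)]\,\hat\pi_{hz}=0$ for every $C\in\sigma_\Gamma$.

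\emph{Step 2: extract a pointwise statement.} Take matrix elements in a vector $\psi$ from a countable dense subset $D\subset\mathcal H$. The scalar function $f_\psi(z)=[\rho(hz)-\rho(z)]\braket{\psi|\hat\pi_{hz}|\psi}$ integrates to zero over every measurable $C$, so $f_\psi=0$ $\nu$-a.e. Since $D$ is countable, there is a single $\nu$-null set outside which this holds for all $\psi\in D$ simultaneously. At any such point $z$, either $\rho(hz)=\rho(z)$ or $\braket{\psi|\hat\pi_{hz}|\psi}=0$ for all $\psi\in D$. In the second case, boundedness and positivity of $\hat\pi_{hz}$ (which hold a.e. because its integrals are effects) force $\hat\pi_{hz}=0$. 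By nondegeneracy (item~2), the set where $\hat\pi_{hz}$ vanishes is $\nu$-null, using again the invariance of $\nu$ under $h$. Hence $\rho(hz)=\rho(z)$ for $\nu$-a.e. $z$, for each fixed $h$. This is Eq.~\eqref{rho_mu_translation_rotation_invariance}.

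\emph{Step 3: pass to a function of $|\mathbf p|$.} Here the null set depends on $h$, so I would use a Fubini argument on $E(3)\times\Gamma$ with Haar measure. The set of pairs $(h,z)$ with $\rho(hz)\neq\rho(z)$ is jointly measurable and has null sections in $z$ for each $h$, so it is null. Therefore, for $\nu$-a.e. $z$, $\rho(hz)=\rho(z)$ for Haar-a.e. $h$. Choose a reference orbit point: for a.e. $z=(\mathbf x,\mathbf p)$, almost every element $h$ mapping $z$ near $(\mathbf 0,|\mathbf p|\hat{\mathbf e}_3)$ gives the same value. Equivalently, define $\tilde\rho(s)$ as the average of $\rho$ over a compact neighbourhood in the orbit $\{(\mathbf x,\mathbf p):|\mathbf p|=s\}$. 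One then checks $\rho=\tilde\rho(|\mathbf p|)$ a.e., first for translations (which give independence from $\mathbf x$) and then for rotations (which give dependence only on $|\mathbf p|$). I expect the main obstacle to be this measure-theoretic bookkeeping of $h$-dependent null sets, together with justifying a.e. boundedness and weak measurability of $\hat\pi$ in Step~2. The algebraic core in Step~1 is straightforward.
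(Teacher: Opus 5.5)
Your proposal is correct and is essentially the paper's proof. Step~1 is the same change-of-variables comparison of integrated Euclidean covariance with the density-level law. Step~2 is a pointwise version of the paper's move: restrict to the sets $S_\pm$ where $\rho(hz)-\rho(z)$ has a fixed sign and invoke nondegeneracy. Your Fubini argument in Step~3 supplies the bookkeeping of $h$-dependent null sets that the paper passes over informally.

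One point needs fixing. Your parenthetical claim that $\hat\pi_z$ is a.e.\ bounded because its integrals are effects is false in general. Example~\ref{ex:covariant-massless-povm} uses $\ket{\mathbf x,\mathbf p}\bra{\mathbf x,\mathbf p}$ with non-normalizable seeds, and for such a form a countable dense $D$ can be chosen inside the kernel of $\psi\mapsto\braket{\mathbf x,\mathbf p|\psi}$. So either take boundedness as part of the regularity assumptions, or argue as the paper does for each fixed $\psi$ separately: $\braket{\psi|\hat\pi_{hz}|\psi}=0$ for a.e.\ $z\in S_\pm$. Read in this weak a.e.\ sense, nondegeneracy then forces $\nu(S_\pm)=0$.
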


The proof of Proposition \ref{prop:rho-mu-depends-on-p-modulus} is given in Appendix~\ref{proof:rho-mu-depends-on-p-modulus}. It is important to stress that Poincar\'e covariance alone does not make the class of admissible POVMs empty. The following example gives an explicit family of density representations satisfying Poincar\'e covariance and the POVM normalization condition.

\begin{example}[A Poincar\'e-covariant massless phase-space POVM]\label{ex:covariant-massless-povm}
Let $\mathbf q\in\mathbb R^3_*$ be a reference momentum and let $\phi:[0,\infty)\to\mathbb C$ be a measurable function such that
\begin{equation}\label{N_q_covariant_povm}
0<N_{\mathbf q}:=(2\pi\hbar)^3\int_{\mathbb R^3_*}\frac{d^3\mathbf p}{2|\mathbf p|^4}\,|\phi(q^\mu p_\mu)|^2<\infty,
\end{equation}
where $q=(|\mathbf q|,\mathbf q)$ and $p=(|\mathbf p|,\mathbf p)$. Define generalized seed states by
\begin{equation}\label{seed_covariant_povm}
\braket{\mathbf k|\mathbf0,\mathbf p}=\phi(k^\mu p_\mu),
\end{equation}
where $k=(|\mathbf k|,\mathbf k)$ and set
\begin{align}\label{covariant_povm_example}
    \ket{\mathbf x,\mathbf p}&=e^{-\frac{i}{\hbar}\mathbf x\cdot\hat{\mathbf P}}\ket{\mathbf0,\mathbf p},&
    \hat\pi_{(\mathbf x,\mathbf p)}&=\ket{\mathbf x,\mathbf p}\bra{\mathbf x,\mathbf p},&
    \rho(|\mathbf p|)&=\frac{|\mathbf q|}{N_{\mathbf q}}\frac{1}{|\mathbf p|^4}.
\end{align}
Inserted in the density representation of Eq.~\eqref{Pi_C_density}, these data define a Poincar\'e-covariant phase-space POVM satisfying the normalization condition \eqref{identity}. The proof is given in Appendix~\ref{proof:covariant-povm-example}.
\end{example}

\section{Coarse partitions}\label{Coarse_partitions}

In Sect.~\ref{Phasespace_measurements}, we introduced phase-space measurements on the candidate phase space $\Gamma$. For a finite partition $P=\{C_i\}_{i=1}^N$, the underlying quantum measurement is the POVM $\{\hat\Pi_{C_i}\}_{i=1}^N$, while the classical measurement $\{E_{C_i}\}_{i=1}^N$ is recovered only when the POVM outcomes admit an effective classical interpretation. In Sect.~\ref{Minimal_requirements_for_Poincaré_group}, we formulated the compatibility between the emergent classical description and the underlying quantum theory in terms of Poincar\'e covariance. Because generic Lorentz transformations do not preserve the reference hypersurface $t=0$, this covariance condition had to be imposed locally, at the level of phase-space points and probability densities. At first sight, this seems to be in tension with the idea that classical behaviour should emerge only after coarse graining. The resolution is that the fine-grained level is not assumed to be fully classical. It is used as an idealized probabilistic structure on which local Poincar\'e covariance can be formulated and tested operationally, for instance through probabilities inferred from repeated preparations, coarse measurements, indirect measurements, and statistical post-processing. Direct fine-grained measurements, by contrast, need not themselves be classical and may fail properties such as repeatability and non-invasiveness. Thus, we distinguish between two levels of emergence. At the fine-grained level, we assume only the partial classical structure needed to formulate local Poincar\'e covariance. At the coarse-grained level, we require a stronger form of classical behaviour: the outcomes of suitable finite-resolution measurements should be interpretable as classical alternatives and should satisfy the operational classicality requirements which will be introduced in the next section.

In this section, we formalize this separation between the fine-grained and the coarse-grained sector, and clarify how probabilities for events outside the directly classical sector may nevertheless be accessed inferentially. We denote by $\mathcal P_\Gamma$ the class of measurable partitions of $\Gamma$ considered in the phase-space theory, and we introduce a nonempty family $\mathcal P_{\rm CG}\subseteq\mathcal P_\Gamma$. The elements of $\mathcal P_{\rm CG}$ are the partitions whose associated POVM outcomes admit a direct classical interpretation. Thus, if $P=\{C_i\}_{i=1}^N\in\mathcal P_{\rm CG}$, the outcomes of the quantum measurement $\{\hat\Pi_{C_i}\}_{i=1}^N$ are interpreted as classical alternatives.

\begin{definition}[Coarse partitions and coarse cells]\label{def:coarse-partitions}
The elements of $\mathcal P_{\rm CG}$ are called coarse partitions. A measurable set $C\subseteq\Gamma$ is called a coarse cell if there exists a coarse partition $P\in\mathcal P_{\rm CG}$ such that $C\in P$. In this case we write $C\triangleleft\mathcal P_{\rm CG}$.
\end{definition}

We leave the precise notion of coarseness unspecified. In a concrete model, it may for instance be determined by a minimal resolvable phase-space volume or by limitations of an experimental apparatus. Our aim here is not to give a sufficient characterization of when a partition is coarse enough to be classical. The next section will formulate the minimal necessary consistency conditions that any candidate pair $(\mathcal P_{\rm CG},\hat\Pi)$ must satisfy in order to support a classical massless-particle interpretation: non-invasive repeatability, compatibility between the classical and quantum momentum measurements, and a minimal information-gain condition. The role of the present 
section is different: we specify structural properties of $\mathcal P_{\rm CG}$ that encodes the distinction between direct classical measurability and indirect statistical inferability.

The first structural property is inferential completeness. The partitions in $\mathcal P_{\rm CG}$ are precisely those whose outcomes admit a direct classical interpretation. This does not mean that the only meaningful phase-space events are the cells appearing in such partitions. Direct measurements associated with fine-grained events may fail classical repeatability or non-invasiveness, and are therefore not assumed to belong to the classical sector $\mathcal P_{\rm CG}$; however, their probabilities can be reconstructed indirectly from coarse-grained statistics, through repeated preparations, classical statistical inference, and post-processing. In this sense, the full POVM $C\mapsto\hat\Pi_{C}$ is not interpreted as a collection of directly classical fine-grained measurements, but as an idealized infinite-resource probabilistic extension of the finite-resolution classical theory.

Indirect access to events outside the directly classical sector $\mathcal P_{\rm CG}$ is obtained by combining statistics from admissible coarse measurements through the usual operations of classical probability. In a classical, Kolmogorovian probability theory, statistics obtained from different measurements can be recombined to infer probabilities for other events, provided that all events belong to the same classical sample space \cite{Kolmogorov1956,Billingsley1995}. For example, if $C_1$ and $C_2$ are events, possibly appearing in different partitions, then finite additivity gives
\begin{align}
C_1\cap C_2=\emptyset &\quad\Longrightarrow\quad p(C_1\sqcup C_2)=p(C_1)+p(C_2), \label{p_operation_1}\\
C_1\subseteq C_2 &\quad\Longrightarrow\quad p(C_2\setminus C_1)=p(C_2)-p(C_1). \label{p_operation_2}
\end{align}
Thus, by estimating the probabilities of $C_1$ and $C_2$ from repeated preparations and admissible coarse measurements, the observer can infer the probabilities of the derived events $C_1\sqcup C_2$ and $C_2\setminus C_1$. In the present framework, these identities hold automatically for $p=p_{\hat\rho}$ as a consequence of the additivity of the POVM, Eq.~\eqref{additivity}. What is nontrivial is the operational interpretation: the derived events need not themselves be directly measurable as classical coarse-grained outcomes, even though their probabilities may be inferable from admissible coarse-grained data.

This separation between direct measurability and indirect inferability is particularly useful for the operational local covariance condition introduced in Definition~\ref{def:operational-local-poincare-covariance}. That condition is formulated in terms of local probability densities, obtained as limits of probabilities assigned to shrinking neighbourhoods $B_\epsilon(\mathbf x,\mathbf p)\subset\Gamma$. This locality is unavoidable: even classically, testing Poincar\'e transformations while remaining within the fixed reference hypersurface $t=0$ requires considering transformations that act locally on phase-space points. Such local information need not come from direct fine-grained classical measurements; it may instead be inferred, in principle, from repeated preparations, suitably chosen admissible coarse measurements, and statistical post-processing. Inferential completeness is the structural condition that makes this possible: the coarse sector contains the directly classical measurements, while its inferential closure gives access, in principle, to the full measurable phase-space structure.

We now formalize this ideal inferential access. Given a finite partition $P=\{C_i\}_{i=1}^N$, we define the family of events resolved by $P$ as
\begin{equation}
\operatorname{Ev}(P):=\left\{\bigsqcup_{i\in J}C_i:J\subseteq\{1,\ldots,N\}\right\}.
\end{equation}
Thus, $\operatorname{Ev}(P)$ contains all events whose probabilities can be obtained by classical post-processing of the outcomes of $P$. For a family of partitions $\mathcal A\subseteq\mathcal P_\Gamma$, define
\begin{equation}
\mathcal G(\mathcal A):=\bigcup_{P\in\mathcal A}\operatorname{Ev}(P).
\end{equation}
We denote by $\mathcal D(\mathcal A)$ the Dynkin closure of $\mathcal G(\mathcal A)$, namely the smallest family $\mathcal D\subseteq\sigma_\Gamma$ such that $\mathcal G(\mathcal A)\subseteq\mathcal D$, $\Gamma\in\mathcal D$, and
\begin{equation}
C\in\mathcal D\quad\Longrightarrow\quad \Gamma\setminus C\in\mathcal D,
\end{equation}
and, for every countable family $\{C_n\}_{n\in\mathbb N}\subseteq\mathcal D$ of pairwise disjoint sets,
\begin{equation}
\bigsqcup_{n\in\mathbb N}C_n\in\mathcal D.
\end{equation}
Equivalently, $\mathcal D(\mathcal A)$ is the smallest Dynkin system containing all events resolved by the partitions in $\mathcal A$ \cite{Billingsley1995,Bogachev2007,Kallenberg2021}. 

Since the POVM is insensitive to $\nu$-null modifications of phase-space cells, events that differ only by a set of vanishing $\nu$-measure are operationally indistinguishable. It is therefore natural to impose inferential completeness only up to $\nu$-null sets. We define the $\nu$-completion of the Dynkin closure generated by $\mathcal P_{\rm CG}$ as
\begin{equation}
\overline{\mathcal D}^{\,\nu}(\mathcal P_{\rm CG}) := \left\{ E\in\sigma_\Gamma:\exists B\in\mathcal D(\mathcal P_{\rm CG})\ \text{such that}\ \nu(B\triangle E)=0 \right\}.
\end{equation}
The inferential completeness condition is then
\begin{equation}\label{dynkin_inferential_completeness}
\overline{\mathcal D}^{\,\nu}(\mathcal P_{\rm CG})=\sigma_\Gamma.
\end{equation}
It expresses the assumption that the full measurable phase-space structure can, at least in principle and with idealized infinite resources, be reconstructed at the level of inferred probabilities from the statistics of directly classical coarse measurements, up to operationally irrelevant null sets. It does not imply that every measurable event is directly measurable as a classical outcome.

The second structural property is closure under coarsening. If a partition is directly measurable as a coarse classical measurement, then any partition with lower resolution should also be directly admissible as a coarse classical measurement. Indeed, a coarsening distinguishes fewer alternatives and therefore cannot require more resolving power than the original partition. For two partitions $P,P'\in\mathcal P_\Gamma$, we say that $P'$ is a coarsening of $P$, and write $P'\succeq P$, if for every cell $C\in P$ there exists a cell $C'\in P'$ such that $C\subseteq C'$. The closure-under-coarsening condition is
\begin{equation}\label{closure_under_coarsening_definition}
P\in\mathcal P_{\rm CG},\quad P'\succeq P\quad\Longrightarrow\quad P'\in\mathcal P_{\rm CG}.
\end{equation}

\begin{definition}[Family of coarse partitions]\label{def:family-coarse-partitions}
A family of coarse partitions $\mathcal P_{\rm CG}$ is a nonempty subset of $\mathcal P_\Gamma$ satisfying inferential completeness, Eq.~\eqref{dynkin_inferential_completeness}, and closure under coarsening, Eq.~\eqref{closure_under_coarsening_definition}.
\end{definition}

The definition separates direct classical accessibility from ideal inferability. The direct classical sector consists only of the partitions in $\mathcal P_{\rm CG}$, and the classicality requirements imposed below, such as non-invasive repeatability, will apply only to these partitions. By contrast, the equality $\overline{\mathcal D}^\nu(\mathcal P_{\rm CG})=\sigma_\Gamma$ concerns what can be inferred in principle from the statistics of sufficiently many coarse-grained measurements. This is why the use of the full POVM $C\mapsto\hat\Pi_{C}$ in the previous 
section should not be interpreted as assigning direct classical meaning to arbitrary fine-grained measurements. Rather, the full POVM provides the idealized probabilistic structure from which local densities and Poincar\'e-covariant transformations can be discussed, while direct classical behaviour is required only for the coarse partitions in $\mathcal P_{\rm CG}$.

An equivalent contextual characterization of families of coarse partitions $\mathcal P_{\rm CG}$ is given in Appendix~\ref{app:contexts}. There, $\mathcal P_{\rm CG}$ is described as a union of finite-inferentially closed contexts. This makes explicit that finite statistical inferences are required to remain directly classical only within a context, while the full inferential completeness condition $\overline{\mathcal D}^\nu(\mathcal P_{\rm CG})=\sigma_\Gamma$ applies to the union of contexts.

\section{Coarse-graining schemes}\label{requirements}

In Sects.~\ref{Phasespace_measurements} and \ref{Coarse_partitions}, we introduced the kinematical data needed to formulate a particle-like classical phase-space description: a candidate classical phase space $\Gamma$, a phase-space POVM $C\mapsto\hat{\Pi}_{C}$ assigning quantum effects to measurable phase-space regions, and a family of coarse partitions $\mathcal P_{\rm CG}\subseteq\mathcal P_\Gamma$ whose outcomes are intended to admit a classical interpretation. We collect these data in the following definition.

\begin{definition}[Coarse-graining scheme]\label{def:coarse-graining-scheme}
A coarse-graining scheme is a triple $(\Gamma,\hat{\Pi},\mathcal P_{\rm CG})$, where $\Gamma$ is the candidate classical phase space, $\hat{\Pi}:C\mapsto\hat{\Pi}_C$ is a phase-space POVM admitting the density representation \eqref{Pi_C_density}, and $\mathcal P_{\rm CG}$ is a family of coarse partitions in the sense of Definition~\ref{def:family-coarse-partitions}.
\end{definition}

This definition does not by itself guarantee classical behaviour. We now formulate minimal requirements that $(\Gamma,\hat{\Pi},\mathcal P_{\rm CG})$ must satisfy in order to support a classical massless-particle interpretation. These requirements are necessary consistency conditions only: they are not meant to determine uniquely either the family $\mathcal P_{\rm CG}$ or the POVM $C\mapsto\hat{\Pi}_{C}$, nor do they provide a sufficient criterion for classicality.

The first two requirements concern the operational behaviour expected from coarse-grained particle observables. For each coarse partition $P=\{C_i\}_{i=1}^N\in\mathcal P_{\rm CG}$, the effects $\{\hat{\Pi}_{C_i}\}_{i=1}^N$ should behave, at the coarse level, as alternatives stable under immediate repetition. In addition, whenever a coarse cell $C\triangleleft\mathcal P_{\rm CG}$ has the form $C=\mathbb R^3\times Q$, the corresponding effect should agree, within the allowed tolerance, with the standard momentum spectral measure $\hat E_{\hat{\mathbf P}}(Q):=\int_{Q}\frac{d^3\mathbf k}{2|\mathbf k|}\ket{\mathbf k}\bra{\mathbf k}$. The final requirement imposes a minimal informational content on the coarse-grained sector: at the very least, a classical particle description should allow one to ask whether the particle propagates within a sufficiently large angular region or within its complement. Accordingly, $\mathcal P_{\rm CG}$ must contain very coarse yes/no alternatives in the momentum-direction sector.

\subsection{Non-invasive repeatability}\label{sec:weak-noninvasive-repeatability}

For a coarse-graining scheme $(\Gamma,\hat\Pi,\mathcal P_{\rm CG})$ to admit a classical interpretation, each coarse partition $P=\{C_i\}_{i=1}^N\in\mathcal P_{\rm CG}$ should define outcomes that, under the corresponding quantum measurement $\{\hat\Pi_{C_i}\}_{i=1}^N$, behave as mutually exclusive classical alternatives. In particular, an immediate repetition of the same measurement, with no intervening time evolution, should give the same outcome. This is the coarse-grained analogue of the non-invasive measurability assumption used in the Leggett--Garg framework: a classical measurement may reveal which alternative is realized and should not by itself induce a change in the value of the same immediately repeated measurement \cite{PhysRevLett.54.857,Emary_2014}. In the present setting we impose this idea only for the finite-resolution measurements associated with partitions in $\mathcal P_{\rm CG}$.

Classically, if the first measurement gives outcome $i$, the conditional probability $p(j|i)$ of obtaining a different outcome $j\neq i$ in an immediate repetition vanishes. Therefore, the total probability of observing a transition between distinct outcomes, weighted by the probability $p(i)$ of the first outcome, is
\begin{equation}
\sum_{i=1}^N \sum_{j\neq i}p(i)p(j|i)=0.
\end{equation}
Equivalently, using $\sum_{j=1}^N p(j|i)=1$, this condition can be written as
\begin{equation}
\sum_{i=1}^N p(i)\bigl(1-p(i|i)\bigr)=0.
\end{equation}

In the quantum theory, let the system be prepared in a normalized pure state $\ket{\psi}\in\mathcal H$. The probability of the first outcome is $p_\psi(i)=\braket{\psi|\hat{\Pi}_{C_i}|\psi}$. To compute the conditional probabilities for an immediate repetition, we use the L\"uders instrument associated with the POVM effect,
\begin{equation}
\ket{\psi}\mapsto\frac{\sqrt{\hat{\Pi}_{C_i}}\ket{\psi}}{\sqrt{\braket{\psi|\hat{\Pi}_{C_i}|\psi}}},
\end{equation}
conditioned on the outcome $C_i$ \cite{luders1950zustandsanderung}. This is an assumption about the measurement implementation: we exclude additional outcome-dependent unitary disturbances beyond the square-root update associated with the POVM effect itself \cite{luders1950zustandsanderung,DaviesLewis1970,BuschLahti1996}.

The joint probability of obtaining $i$ in the first measurement and $j$ in the immediate repetition is
\begin{equation}
p_\psi(i)p_\psi(j|i)=\braket{\psi|\sqrt{\hat{\Pi}_{C_i}}\hat{\Pi}_{C_j}\sqrt{\hat{\Pi}_{C_i}}|\psi}.
\end{equation}
Using $\sum_{j=1}^N\hat{\Pi}_{C_j}=\hat{\mathbb I}$, the probability of observing a transition between distinct coarse outcomes is therefore
\begin{equation}\label{determinism}
\sum_{i=1}^N \sum_{j\neq i}p_\psi(i)p_\psi(j|i)=\sum_{i=1}^N\braket{\psi|\hat{\Pi}_{C_i}-\hat{\Pi}_{C_i}^2|\psi}.
\end{equation}

A strong non-invasive repeatability condition would demand Eq.~\eqref{determinism} to be small for every coarse partition $P=\{C_i\}_{i=1}^N\in\mathcal P_{\rm CG}$ and for all, or at least for most, states. Alternatively, one could require the trace norm of the positive operator $\sum_{i=1}^N(\hat{\Pi}_{C_i}-\hat{\Pi}_{C_i}^2)$ to be small, for any coarse partition $\{C_i\}_{i=1}^N\in\mathcal P_{\rm CG}$. This would give a more democratic, state-independent criterion; it does not select only the best-case state through an infimum, nor only the worst-case state through a supremum. For the no-go theorem, however, we impose only a weaker condition: for every coarse partition in the coarse-graining scheme, there must exist at least one state in which the repeated measurement is approximately non-invasive.

\begin{definition}[Weak non-invasive repeatability]\label{def:weak-noninvasive-repeatability}
Let $\epsilon>0$. A coarse-graining scheme $(\Gamma,\hat\Pi,\mathcal P_{\rm CG})$ satisfies weak non-invasive repeatability with tolerance $\epsilon$ if, for every coarse partition $P=\{C_i\}_{i=1}^N\in\mathcal P_{\rm CG}$,
\begin{equation}\label{PVMness}
\inf_{\substack{\ket{\psi}\in\mathcal H\\ \braket{\psi|\psi}=1}}\sum_{i=1}^N\braket{\psi|\hat{\Pi}_{C_i}-\hat{\Pi}_{C_i}^2|\psi}<\epsilon.
\end{equation}
\end{definition}

Thus, for each coarse partition $P=\{C_i\}_{i=1}^N\in\mathcal P_{\rm CG}$ in the coarse-graining scheme, must exist a state in which the probability of observing a transition between disjoint coarse outcomes in an immediate repetition is smaller than the tolerance $\epsilon$. The parameter $\epsilon$ quantifies the amount by which the emergent description is allowed to deviate from ideal non-invasive repeatability. Such deviations are admissible only insofar as they remain below the level at which they would be operationally resolvable; we therefore expect to work in a regime where $\epsilon \ll 1$.

For the purposes of the no-go theorem, however, it is not enough to treat $\epsilon$ as merely small in an asymptotic or qualitative sense. We need an explicit non-asymptotic threshold, sufficiently weak to be physically reasonable but sharp enough for the argument. We therefore assume
\begin{equation}
\epsilon<\frac{1}{6}.
\end{equation}
This means that, at least in the best-case scenario selected by the infimum in Definition~\ref{def:weak-noninvasive-repeatability}, the probability of observing a non-classical transition must be less than one sixth. The numerical value $1/6$ is not meant to define a universal boundary of classicality; it is only a quantitative threshold sufficient for the no-go argument.

\subsection{Compatibility between classical and quantum momentum}\label{sec:momentum-compatibility}

The classical phase-space coordinate $\mathbf p$ is meant to represent the particle momentum. In the quantum theory, the corresponding observable is the momentum operator $\hat{\mathbf P}$, equivalently the generator of spatial translations. The effect associated with the question whether the momentum lies in a measurable region $Q\subseteq\mathbb R^3_*$ is the corresponding element of the momentum spectral measure,
\begin{equation}
\hat E_{\hat{\mathbf P}}(Q):=\int_Q\frac{d^3\mathbf k}{2|\mathbf k|}\ket{\mathbf k}\bra{\mathbf k},
\end{equation}
where $\ket{\mathbf k}$ are the eigenstates of $\hat{\mathbf P}$ normalized as $\braket{\mathbf k|\mathbf k'}=2|\mathbf k|\delta^3(\mathbf k-\mathbf k')$. To recover the classical momentum coordinate $\mathbf p$ as the classical limit of the quantum momentum observable $\hat{\mathbf P}$, the momentum marginal of the phase-space POVM $C\mapsto\hat{\Pi}_{C}$ must be compatible, at the coarse level, with the standard quantum momentum measurement. Thus, whenever $\mathbb R^3\times Q$ is a coarse cell, one should have
\begin{equation}\label{Pi_P_approx}
\hat{\Pi}_{\mathbb R^3\times Q}\approx\hat E_{\hat{\mathbf P}}(Q).
\end{equation}

We formulate this compatibility operationally, in analogy with Definition~\ref{def:weak-noninvasive-repeatability}. Let $\{Q_i\}_{i=1}^N$ be a measurable partition of $\mathbb R^3_*$ such that the corresponding momentum-marginal partition $P_Q:=\{\mathbb R^3\times Q_i\}_{i=1}^N$ belongs to $\mathcal P_{\rm CG}$. Consider two immediate measurements: first the momentum PVM $\{\hat E_{\hat{\mathbf P}}(Q_i)\}_{i=1}^N$, and then the momentum marginal of the phase-space POVM, $\{\hat{\Pi}_{\mathbb R^3\times Q_i}\}_{i=1}^N$. Compatibility requires that the probability of obtaining different outcomes in these two measurements be small. As in Definition~\ref{def:weak-noninvasive-repeatability}, we impose only a weak version of this condition, requiring it to hold in at least one normalized state.

We formulate this compatibility operationally, in analogy with Definition~\ref{def:weak-noninvasive-repeatability}. Let $\{Q_i\}_{i=1}^N$ be a measurable partition of $\mathbb R^3_*$ such that the corresponding momentum-marginal partition $P_Q:=\{\mathbb R^3\times Q_i\}_{i=1}^N$ belongs to $\mathcal P_{\rm CG}$. Consider two immediate measurements: first the momentum PVM $\{\hat E_{\hat{\mathbf P}}(Q_i)\}_{i=1}^N$, and then the momentum marginal of the phase-space POVM, $\{\hat{\Pi}_{\mathbb R^3\times Q_i}\}_{i=1}^N$. Compatibility requires that the probability of obtaining different outcomes in these two measurements be small. As in Definition~\ref{def:weak-noninvasive-repeatability}, we impose only a weak version of this condition, requiring it to hold in at least one normalized state.

\begin{definition}[Weak classical--quantum momentum compatibility]\label{def:weak-momentum-compatibility}
Let $\epsilon>0$. A coarse-graining scheme $(\Gamma,\hat\Pi,\mathcal P_{\rm CG})$ satisfies weak classical--quantum momentum compatibility with tolerance $\epsilon$ if, for every measurable partition $\{Q_i\}_{i=1}^N$ of $\mathbb R^3_*$ such that $\{\mathbb R^3\times Q_i\}_{i=1}^N\in\mathcal P_{\rm CG}$,
\begin{equation}\label{inf_compatibility}
\inf_{\substack{\ket{\psi}\in\mathcal H\\ \braket{\psi|\psi}=1}}\sum_{i=1}^N\sum_{j\neq i}\braket{\psi|\hat E_{\hat{\mathbf P}}(Q_i)\hat{\Pi}_{\mathbb R^3\times Q_j}\hat E_{\hat{\mathbf P}}(Q_i)|\psi}<\epsilon.
\end{equation}
Equivalently, using $\sum_{j=1}^N\hat{\Pi}_{\mathbb R^3\times Q_j}=\hat{\mathbb I}$, this condition can be written as
\begin{equation}\label{inf_compatibility_equiv}
\inf_{\substack{\ket{\psi}\in\mathcal H\\ \braket{\psi|\psi}=1}}\sum_{i=1}^N\braket{\psi|\hat E_{\hat{\mathbf P}}(Q_i)-\hat E_{\hat{\mathbf P}}(Q_i)\hat{\Pi}_{\mathbb R^3\times Q_i}\hat E_{\hat{\mathbf P}}(Q_i)|\psi}<\epsilon.
\end{equation}
\end{definition}

The tolerance $\epsilon$ here has the same status as the tolerance $\epsilon$ in Definition~\ref{def:weak-noninvasive-repeatability}: it quantifies the allowed deviation from ideal compatibility in the best-case scenario selected by the infimum. We expect $\epsilon \ll 1$, and for the purposes of the no-go theorem we impose the explicit threshold $\epsilon<1/6$.

We do not impose any compatibility between the classical phase-space coordinate $\mathbf{x}$ and position eigenvalues as we follow the typical argument that position operators do not exist in relativistic quantum mechanics. The Newton-Wigner operator $\hat{\mathbf{X}}$, whose eigenvectors are defined as the Fourier transform of the momentum states $\ket{\mathbf{k}}$ \cite{RevModPhys.21.400, fulling_1989}, analogously to nonrelativistic quantum mechanics, suffers from causality and covariance issues \cite{PhysRev.139.B963, PhysRevD.10.3320, PhysRevD.22.377, Malament1996-MALIDO, 87c29b34-f4a4-31da-ad1b-ae81d42452e1}. In any case, compatibility with a position operator is not needed for the no-go theorem.

\subsection{Minimal directional information gain}\label{section:information}

A completely trivial coarse-grained sector $\mathcal P_{\rm CG}$, containing only the partition $\{\Gamma\}$, would always give a formally classical outcome, but it would tell us nothing about the state of the particle. We are instead interested in emergent classical descriptions with nontrivial empirical content, in which an observer can extract a non-negligible amount of information about the system. At a minimum, the coarse partitions of the scheme $(\Gamma,\hat\Pi,\mathcal P_{\rm CG})$ should allow one to obtain coarse information about the direction of propagation. In particular, it should be possible to ask whether the particle propagates within a sufficiently coarse angular region or within its complement. Such yes/no alternatives are represented by partitions of the form $\{\mathbb R^3\times Q,\Gamma\setminus(\mathbb R^3\times Q)\}$, where $Q\subseteq\mathbb R^3_*$ is determined by a region of directions on $S^2$.

\begin{definition}[Angular regions and associated momentum cones]\label{def:angular-region-momentum-cone}
For every measurable angular region $\Omega\subseteq S^2$, we denote by
\begin{equation}
\sigma(\Omega):=\frac{1}{4\pi}\int_\Omega d\Omega(\mathbf n)\label{sigma_Omega}
\end{equation}
its normalized surface area, where $d\Omega(\mathbf n)$ is the standard area element on the unit sphere. We also define the associated momentum cone
\begin{equation}
Q_\Omega:=\left\{\mathbf p\in\mathbb R^3_*:\frac{\mathbf p}{|\mathbf p|}\in\Omega\right\}.
\end{equation}
\end{definition}

\begin{definition}[Minimal directional information gain]\label{def:minimal-directional-information-gain}
Let $0\leq\delta<1$. A coarse-graining scheme $(\Gamma,\hat\Pi,\mathcal P_{\rm CG})$ satisfies minimal directional information gain with tolerance $\delta$ if there exists a measurable angular region $\Omega\subseteq S^2$ whose normalized area is close to that of a hemisphere,
\begin{equation}\label{delta_sigma}
\frac{1}{2}(1-\delta)\leq\sigma(\Omega)\leq\frac{1}{2},
\end{equation}
such that the corresponding dichotomic phase-space partition belongs to the coarse-grained sector:
\begin{equation}\label{Omega}
\left\{\mathbb R^3\times Q_\Omega,\Gamma\setminus(\mathbb R^3\times Q_\Omega)\right\}\in\mathcal P_{\rm CG}.
\end{equation}
\end{definition}

Definition~\ref{def:minimal-directional-information-gain} imposes a minimal threshold on the information about the direction of the momentum that is directly available in the coarse-grained sector. It does not require fine angular resolution. It only requires the existence of one sufficiently balanced yes/no question about the propagation direction. Moreover, by closure under coarsening, Eq.~\eqref{closure_under_coarsening_definition}, the definition is automatically satisfied if $\mathcal P_{\rm CG}$ contains any refinement of the dichotomic partition $\{\mathbb R^3\times Q_\Omega,\Gamma\setminus(\mathbb R^3\times Q_\Omega)\}$. Such a refinement may also resolve the position $\mathbf x$ or the energy $|\mathbf p|$, and would therefore give the observer more information than the original yes/no angular alternative.

Definition~\ref{def:minimal-directional-information-gain} can also be understood as a minimal information-gain condition in the sense of Shannon information \cite{shannon1948mathematical,cover1999elements}. A more detailed operational formulation of this point is given in Appendix~\ref{appendix_minimal_information}. Since we are concerned here only with the orientation of the momentum, $\mathbf p/|\mathbf p|$, we restrict attention to the classical state space $S^2$. In the absence of prior knowledge about the direction of propagation, the natural classical prior is the uniform probability measure $\sigma$ on $S^2$, defined by Eq.~\eqref{sigma_Omega}. The average information gain expected from a partition $\{\Omega_i\}_{i=1}^N$ of the sphere $S^2$ is given by the Shannon entropy of the induced uniform distribution,
\begin{equation}
H\!\left(\{\Omega_i\}_{i=1}^N\right)=-\sum_{i=1}^N \sigma(\Omega_i)\log\sigma(\Omega_i).
\end{equation}

For a dichotomic measurement with $N=2$, this quantity is maximized when the two alternatives have equal normalized area, $\sigma(\Omega)=\sigma(\Omega^\mathrm{c})=1/2$. Operationally, this expresses the fact that a binary question is most informative when neither answer is almost certain in advance. If $\sigma(\Omega)$ is close to $1$, the outcome $\Omega$ is very likely but gives little information; the complementary outcome is more informative but occurs with very small probability. The average information gain is therefore maximized by a balanced yes/no alternative.

The parameter $\delta>0$ in Eq.~\eqref{delta_sigma} quantifies how far the available dichotomic angular alternative is allowed to deviate from this Shannon-optimal balance. For the purposes of the no-go theorem, we impose the explicit threshold
\begin{equation}\label{delta}
\delta<\frac{1}{6}.
\end{equation}
This numerical value is not intended to define a universal boundary for meaningful information gain; it is a non-asymptotic tolerance, weak enough to allow deviations from the ideal Shannon-optimal binary partition, but strong enough for the no-go argument.

We allow a nonzero deviation $\delta>0$ rather than imposing the existence of an exactly Shannon-optimal dichotomic partition. Requiring an exact half-area partition would impose an unnecessarily rigid constraint on the coarse-grained sector. Allowing a tolerance $\delta$ keeps the condition minimal: the coarse-graining scheme is only required to contain a yes/no angular alternative with non-negligible and approximately balanced information about the momentum direction. The no-go theorem will therefore rule out the existence of any coarse-graining scheme with this minimal directional information gain.

\section{No-go theorem}\label{nogo_theorem}

In this section, we present the no-go theorem for a particle-like classical phase-space limit of massless quantum degrees of freedom. The proof follows the same mechanism already suggested by the coherent-state obstruction discussed in the Introduction, but now at the level of a general phase-space POVM. The key object is the local operator-valued density $\hat\pi_{(\mathbf0,\mathbf p)}$, or more precisely its momentum-space diagonal response $k^\mu p_\mu:=|\mathbf k|\,|\mathbf p|-\mathbf k\cdot\mathbf p$. Poincar\'e covariance forces this response to depend on the quantum momentum $\mathbf k$ and the classical momentum label $\mathbf p$ only through the invariant $k^\mu p_\mu$. For massless momenta, this invariant is insensitive to changes of $|\mathbf k|$ along the direction $\mathbf k\parallel\mathbf p$. Thus the same covariance condition that makes the POVM transform as a massless particle phase-space measurement also limits its ability to localize momentum in the full classical momentum space.

The propositions below make this statement quantitative. First, covariance fixes the invariant form of the detection kernel associated with $\hat\pi_{(\mathbf0,\mathbf p)}$. Second, integrating over the position variable gives a diagonal expression for the momentum-marginal effects $\hat\Pi_{\mathbb R^3\times Q}$. Third, when $Q$ is a momentum cone determined by an angular region on $S^2$, one obtains bounds depending only on the angular area. The last two propositions combine Poincar\'e covariance with the two classicality conditions, namely weak non-invasive repeatability and weak classical--quantum momentum compatibility. The former forces the coarse momentum-marginal effect $\hat\Pi_{\mathbb R^3\times Q}$ to behave quasi-projectively in a weak essential sense, while the latter forces it to behave quasi-locally with respect to the classical momentum region $Q$, i.e. close to the corresponding sharp momentum projector. The theorem then applies these two consequences to the angular cones selected by minimal directional information gain and combines them with the area bounds of the third proposition. This leads to the no-go result: for sufficiently small tolerances, no regular Poincar\'e-covariant phase-space POVM can support a coarse-grained classical particle description with even minimal directional information. All proofs of the propositions and of the theorem are collected in Appendix~\ref{proof_nogo_theorem}.

\begin{proposition}[Invariant form of the detection kernel]\label{prop:detection-kernel-invariant-form}
Let $C\mapsto\hat\Pi_C$ be a regular Poincar\'e-covariant phase-space POVM in the sense of Definition~\ref{def:regular-poincare-covariant-povm}. Then the detection kernel $K(\mathbf k,\mathbf p):=\braket{\mathbf k|\hat\pi_{(\mathbf0,\mathbf p)}|\mathbf k}$ depends on $\mathbf k$ and $\mathbf p$ only through the Lorentz-invariant quantity $k^\mu p_\mu:=|\mathbf k|\,|\mathbf p|-\mathbf k\cdot\mathbf p$. That is, there exists a nonnegative function $\Phi$ such that
\begin{equation}\label{Phi}
\mathcal K(\mathbf k,\mathbf p)=\braket{\mathbf k|\hat\pi_{(\mathbf0,\mathbf p)}|\mathbf k}=\Phi(k^\mu p_\mu)
\end{equation}
for almost every $\mathbf k,\mathbf p\in\mathbb R^3_*$.
\end{proposition}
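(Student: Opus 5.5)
The plan is to use microscopic Poincaré covariance (Definition~\ref{def:microscopic-poincare-covariance}) at the origin. For any Lorentz transformation $g=(0,\Lambda)$ the point $x=(0,\mathbf 0)$ is fixed, so $(\Lambda x+a)^0=0$ holds trivially. Hence $g\cdot(\mathbf 0,\mathbf p)=(\mathbf 0,\Lambda\mathbf p)$ is defined for every $\mathbf p\in\mathbb R^3_*$, and
\begin{equation*}
\hat U(\Lambda)\hat\pi_{(\mathbf0,\mathbf p)}\hat U^\dagger(\Lambda)=\hat\pi_{(\mathbf0,\Lambda\mathbf p)}.
\end{equation*}
On the scalar massless representation $\hat U(\Lambda)\ket{\mathbf k}=\ket{\Lambda\mathbf k}$, and this action carries no phase or Wigner rotation because the representation is spinless and $d\mu_0$ is invariant. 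Taking diagonal matrix elements therefore gives
\begin{equation*}
\mathcal K(\Lambda\mathbf k,\Lambda\mathbf p)=\braket{\Lambda\mathbf k|\hat U(\Lambda)\hat\pi_{(\mathbf0,\mathbf p)}\hat U^\dagger(\Lambda)|\Lambda\mathbf k}=\mathcal K(\mathbf k,\mathbf p)
\end{equation*}
for all $\Lambda\in SO^+(1,3)$. In other words, $\mathcal K$ is a function on $V_0^+\times V_0^+$ that is invariant under the diagonal action of the Lorentz group. Nonnegativity of $\mathcal K$ follows from $\hat\pi\ge0$.

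The second step is orbit classification. The claim is that the diagonal orbits of $SO^+(1,3)$ on pairs of nonzero null vectors $(k,p)$ are exactly the level sets of $k^\mu p_\mu\ge0$. For $k^\mu p_\mu>0$ the argument goes as follows. Use a boost and rotation to bring $p$ to the standard form $(1,0,0,1)$. The stabilizer of $p$ is the little group $ISO(2)$, and it acts transitively on null vectors $k$ with fixed $k\cdot p>0$. This follows from the light-cone parametrization: the component $k^-$ is fixed by $k\cdot p$, the null transverse translations shift $\mathbf k_\perp$ arbitrarily, and $k^+$ is then determined by the null condition. So any two pairs with the same positive invariant are related. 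The degenerate value $k\cdot p=0$ means $k\parallel p$. Here the boost along $\mathbf p$ rescales $|\mathbf k|/|\mathbf p|$ at will, which reproduces exactly the phenomenon highlighted in the introduction, and this set has measure zero anyway. One then defines $\Phi(s):=\mathcal K(\mathbf k_s,\mathbf p_0)$ for a chosen representative pair on each orbit, and invariance gives $\mathcal K=\Phi(k^\mu p_\mu)$.

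The main obstacle is measure-theoretic. The covariance identity is only available as an operator identity, while $\mathcal K(\mathbf k,\mathbf p)$, being a diagonal element in improper momentum states, is defined only almost everywhere, for instance as a density with respect to $d\mu_0$ of the diagonal of the kernel of $\hat\pi$. An a.e.-invariant function is not automatically a function of the invariant on every orbit. To handle this, I would first establish the invariance $\mathcal K(\Lambda\mathbf k,\Lambda\mathbf p)=\mathcal K(\mathbf k,\mathbf p)$ for a.e. $(\mathbf k,\mathbf p)$ and each fixed $\Lambda$. This is done by integrating against test functions $f(\mathbf k)g(\mathbf p)$ and using the invariance of $d\mu_0\times d\mu_0$. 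Next, I would apply Fubini over a countable dense set of $\Lambda$, or alternatively average over $\Lambda$ with a Haar-approximate identity, to obtain a version of $\mathcal K$ that is genuinely invariant. A standard result on invariant functions under a transitive-on-level-sets Lie group action, via the measurable cross-section $s\mapsto(\mathbf k_s,\mathbf p_0)$, then yields the measurable $\Phi$.

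If the weak continuity of the density assumed in Proposition~\ref{prop:operational-implies-microscopic-covariance} gives continuity of $\mathcal K$, this last step simplifies. The equality would then hold pointwise on the open set $k\cdot p>0$.
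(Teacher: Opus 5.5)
Your proposal is correct and follows essentially the paper's argument. The paper first obtains little-group invariance of $\mathcal K(\cdot,\mathbf p)$ and uses the same light-cone/null-rotation transitivity on the level sets $k^\mu p_\mu=s>0$; it then moves $\mathbf p$ to a reference momentum $\mathbf q$ with a Lorentz transformation. That is your diagonal invariance plus orbit classification, split into two steps. Your a.e.-to-strict-invariance argument is extra rigor the paper omits.

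One small slip: on $k^\mu p_\mu=0$ the diagonal orbits are not a single level set. For $k=\lambda p$, the ratio $\lambda=|\mathbf k|/|\mathbf p|$ is preserved by every $\Lambda$, since a boost along $\mathbf p$ rescales both vectors by the same Doppler factor. As you note, this set is null, so the a.e. conclusion is unaffected.
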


\begin{proposition}[Diagonal form of momentum-marginal effects]\label{prop:diagonal-momentum-marginal}
Let $C\mapsto\hat\Pi_C$ be a regular Poincar\'e-covariant phase-space POVM in the sense of Definition~\ref{def:regular-poincare-covariant-povm}. Then, for every measurable $Q\subseteq\mathbb R^3_*$, the momentum-marginal effect $\hat\Pi_{\mathbb R^3\times Q}$ is diagonal in the momentum representation:
\begin{equation}\label{k_Pi_Q_k_prime}
\braket{\mathbf k|\hat\Pi_{\mathbb R^3\times Q}|\mathbf k'}=2|\mathbf k|\,\delta^3(\mathbf k-\mathbf k')\,\Pi_{\mathbb R^3\times Q}(\mathbf k),
\end{equation}
where
\begin{equation}\label{Pi_Q_kernel}
\Pi_{\mathbb R^3\times Q}(\mathbf k)=\frac{(2\pi\hbar)^3}{2|\mathbf k|}\int_Qd^3\mathbf p\,\rho(|\mathbf p|)\Phi(k^\mu p_\mu).
\end{equation}
Equivalently,
\begin{equation}\label{Pi_Q_spectral_form}
\hat\Pi_{\mathbb R^3\times Q}=\int_{\mathbb R^3_*}\frac{d^3\mathbf k}{2|\mathbf k|}\,\Pi_{\mathbb R^3\times Q}(\mathbf k)\ket{\mathbf k}\bra{\mathbf k}.
\end{equation}
\end{proposition}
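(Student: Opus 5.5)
The plan is to compute $\hat\Pi_{\mathbb R^3\times Q}$ directly from the density representation \eqref{Pi_C_density}. By Proposition~\ref{prop:rho-mu-depends-on-p-modulus}, $\rho(\mathbf x,\mathbf p)=\rho(|\mathbf p|)$. Euclidean covariance with $h=(\mathbf x,\mathbb I)$ is in force; combined with microscopic covariance (Definition~\ref{def:microscopic-poincare-covariance}) for pure spatial translations, which map $t=0$ to itself, it gives
\begin{equation*}
\hat\pi_{(\mathbf x,\mathbf p)}=e^{-\frac i\hbar\mathbf x\cdot\hat{\mathbf P}}\,\hat\pi_{(\mathbf 0,\mathbf p)}\,e^{\frac i\hbar\mathbf x\cdot\hat{\mathbf P}} .
\end{equation*}
Taking momentum matrix elements then yields
\begin{equation*}
\braket{\mathbf k|\hat\pi_{(\mathbf x,\mathbf p)}|\mathbf k'}=e^{-\frac i\hbar\mathbf x\cdot(\mathbf k-\mathbf k')}\braket{\mathbf k|\hat\pi_{(\mathbf 0,\mathbf p)}|\mathbf k'} .
\end{equation*}

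Next I would integrate over $\mathbf x\in\mathbb R^3$. This uses $\int d^3\mathbf x\,e^{-\frac i\hbar\mathbf x\cdot(\mathbf k-\mathbf k')}=(2\pi\hbar)^3\delta^3(\mathbf k-\mathbf k')$. The result is
\begin{equation*}
\braket{\mathbf k|\hat\Pi_{\mathbb R^3\times Q}|\mathbf k'}=(2\pi\hbar)^3\delta^3(\mathbf k-\mathbf k')\int_Q d^3\mathbf p\,\rho(|\mathbf p|)\braket{\mathbf k|\hat\pi_{(\mathbf0,\mathbf p)}|\mathbf k} .
\end{equation*}
Inserting Proposition~\ref{prop:detection-kernel-invariant-form}, $\braket{\mathbf k|\hat\pi_{(\mathbf0,\mathbf p)}|\mathbf k}=\Phi(k^\mu p_\mu)$, and factoring out $2|\mathbf k|$ gives exactly \eqref{k_Pi_Q_k_prime} with kernel \eqref{Pi_Q_kernel}. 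The spectral form \eqref{Pi_Q_spectral_form} then follows by inserting the resolution of the identity $\int d\mu_0(\mathbf k)\ket{\mathbf k}\bra{\mathbf k}$ on both sides and using the normalization $\braket{\mathbf k|\mathbf k'}=2|\mathbf k|\delta^3(\mathbf k-\mathbf k')$.

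The main obstacle is rigor: the momentum kets are generalized vectors, and the $\mathbf x$-integral of the non-integrable phase must be interchanged with the $\mathbf p$-integral. I would handle this weakly. Take test wavefunctions $\psi,\varphi\in L^2(V_0^+,d\mu_0)$ with compact support away from $\mathbf 0$, and restrict first to $Q$ of finite $\mu$-measure and to a bounded box $|\mathbf x|\le L$. In that setting $\braket{\varphi|\hat\Pi_{B_L\times Q}|\psi}$ is a convolution-type integral of $\overline{\varphi}(\mathbf k)\psi(\mathbf k')$ against the kernel times a Fejér/Dirichlet factor. Positivity of $\hat\pi$ gives a Cauchy–Schwarz bound on the off-diagonal kernel by the diagonal one, which justifies Fubini. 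Letting $L\to\infty$, with the strong continuity of the translation representation, the factor converges to $(2\pi\hbar)^3\delta^3(\mathbf k-\mathbf k')$.

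An alternative route to the diagonal form avoids distributions. Since $\mathbb R^3\times Q$ is translation invariant, Euclidean covariance gives $\hat U(\mathbf a)\hat\Pi_{\mathbb R^3\times Q}\hat U^\dagger(\mathbf a)=\hat\Pi_{\mathbb R^3\times Q}$. The operator therefore commutes with $\hat{\mathbf P}$, whose spectrum is simple on this scalar representation, so it is a multiplication operator $\Pi_{\mathbb R^3\times Q}(\mathbf k)$. The explicit formula for the multiplier then only needs the diagonal computation above.

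Finally, general $Q$ follows from countable additivity \eqref{additivity}. The bound $0\le\Pi_{\mathbb R^3\times Q}\le1$ holds a.e., from $\hat\Pi\le\hat{\mathbb I}$. Both together ensure the $\mathbf p$-integral in \eqref{Pi_Q_kernel} converges for almost every $\mathbf k$.
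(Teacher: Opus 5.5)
Your proposal is correct and follows the paper's proof essentially step for step: translation covariance of the density, the $\mathbf x$-integral producing $(2\pi\hbar)^3\delta^3(\mathbf k-\mathbf k')$, insertion of the invariant diagonal kernel from Proposition~\ref{prop:detection-kernel-invariant-form}, and reading off the multiplier against $\braket{\mathbf k|\mathbf k'}=2|\mathbf k|\delta^3(\mathbf k-\mathbf k')$. The paper works only at this formal level, so your commutant argument is a genuinely rigorous addition for the diagonality claim: translation invariance of $\mathbb R^3\times Q$ together with the simple spectrum of $\hat{\mathbf P}$ forces $\hat\Pi_{\mathbb R^3\times Q}$ to be a multiplication operator. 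Your weak-form justification of the multiplier formula would still need care, however, because $\hat\pi_{(\mathbf 0,\mathbf p)}$ need not be bounded, as the non-normalizable seeds of Example~\ref{ex:covariant-massless-povm} show.
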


\begin{proposition}[Angular response and area bounds for momentum cones]\label{prop:angular-response-area-bounds}
Let $C\mapsto\hat\Pi_C$ be a regular Poincar\'e-covariant phase-space POVM in the sense of Definition~\ref{def:regular-poincare-covariant-povm}. Let $\Omega\subseteq S^2$ be a measurable angular region, and let $Q_\Omega$ be the associated momentum cone of Definition~\ref{def:angular-region-momentum-cone}. Then the diagonal multiplier of the momentum-marginal effect $\hat\Pi_{\mathbb R^3\times Q_\Omega}$ is
\begin{equation}\label{Pi_QOmega_angular_response}
\Pi_{\mathbb R^3\times Q_\Omega}(\mathbf k)=\frac{1}{4\pi}\int_\Omega d\Omega(\mathbf n)\,\left(1-\mathbf n_{\mathbf k}\cdot\mathbf n\right)
\end{equation}
for almost every $\mathbf k\in\mathbb R^3_*$, where $\mathbf n_{\mathbf k}:=\mathbf k/|\mathbf k|$. Moreover, it satisfies the area bounds
\begin{equation}\label{Pi_QOmega_area_bounds}
\sigma(\Omega)^2
\leq
\Pi_{\mathbb R^3\times Q_\Omega}(\mathbf k)
\leq
1-[1-\sigma(\Omega)]^2
\end{equation}
for almost every $\mathbf k\in\mathbb R^3_*$.
\end{proposition}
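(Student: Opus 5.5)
Starting from Proposition~\ref{prop:diagonal-momentum-marginal}, the multiplier of $\hat\Pi_{\mathbb R^3\times Q_\Omega}$ is
\begin{equation*}
\Pi_{\mathbb R^3\times Q_\Omega}(\mathbf k)=\frac{(2\pi\hbar)^3}{2|\mathbf k|}\int_{Q_\Omega}d^3\mathbf p\,\rho(|\mathbf p|)\Phi(k^\mu p_\mu).
\end{equation*}
I will write $\mathbf p=r\mathbf n$ with $r>0$ and $\mathbf n\in\Omega$, so that $d^3\mathbf p=r^2\,dr\,d\Omega(\mathbf n)$. Setting $u:=1-\mathbf n_{\mathbf k}\cdot\mathbf n\in[0,2]$, the invariant becomes $k^\mu p_\mu=|\mathbf k|\,r\,u$. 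For fixed $\mathbf n$ with $u>0$, I substitute $s=|\mathbf k|ru$ in the radial integral. This gives
\begin{equation*}
\Pi_{\mathbb R^3\times Q_\Omega}(\mathbf k)=\int_\Omega d\Omega(\mathbf n)\,F\!\left(|\mathbf k|u\right),\qquad F(\lambda):=\frac{(2\pi\hbar)^3}{2}\,\lambda^{-1}\int_0^\infty dr\,r^2\rho(r)\Phi(\lambda r).
\end{equation*}
The aim is to show that $F(\lambda)$ is proportional to $\lambda$, namely $F(\lambda)=\lambda/(4\pi|\mathbf k|)$, equivalently $\int_0^\infty dr\,r^2\rho(r)\Phi(\lambda r)=c\,\lambda^2$. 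With that in hand, \eqref{Pi_QOmega_angular_response} follows at once.

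To fix this scaling I plan to use microscopic covariance under boosts along $\mathbf p$ applied at $\mathbf x=\mathbf 0$. A boost maps $(\mathbf0,\mathbf p)$ to $(\mathbf0,e^{\eta}\mathbf p)$ and sends $\hat\pi_{(\mathbf0,\mathbf p)}$ to $\hat\pi_{(\mathbf0,e^\eta\mathbf p)}$. This already underlies the invariance in Proposition~\ref{prop:detection-kernel-invariant-form}. What it does not fix is the pairing between $\rho$ and $\Phi$, and that pairing is supplied by the normalization \eqref{identity}.

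Taking $Q=\mathbb R^3_*$, i.e.\ $\Omega=S^2$, the multiplier must equal $1$ for almost every $\mathbf k$. This reads
\begin{equation*}
\frac{(2\pi\hbar)^3}{2|\mathbf k|}\int d\Omega(\mathbf n)\int_0^\infty dr\,r^2\rho(r)\Phi(|\mathbf k|ru)=1.
\end{equation*}
Define $G(\lambda):=\int_0^\infty dr\,r^2\rho(r)\Phi(\lambda r)$. Since $d\Omega=2\pi\,du$ for $u\in[0,2]$, the condition becomes
\begin{equation*}
\pi(2\pi\hbar)^3\int_0^2 G(|\mathbf k|u)\,du=2|\mathbf k|
\end{equation*}
for all $|\mathbf k|$. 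Changing variables to $v=|\mathbf k|u$ gives $\int_0^{2|\mathbf k|}G(v)\,dv\propto|\mathbf k|^2$. Differentiating in $|\mathbf k|$ then yields $G(v)=c\,v$ with $c=1/(2\pi(2\pi\hbar)^3)$. Substituting back,
\begin{equation*}
\Pi_{\mathbb R^3\times Q_\Omega}(\mathbf k)=\frac{(2\pi\hbar)^3}{2|\mathbf k|}\int_\Omega d\Omega\,c\,|\mathbf k|u=\frac{1}{4\pi}\int_\Omega d\Omega(\mathbf n)\,(1-\mathbf n_{\mathbf k}\cdot\mathbf n),
\end{equation*}
which is exactly \eqref{Pi_QOmega_angular_response}.

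The main obstacle is technical. The identity $\int_0^{2|\mathbf k|}G=\mathrm{const}\cdot|\mathbf k|^2$ holds only for almost every $|\mathbf k|$, and Tonelli must justify the exchange of integrals. I would handle this by noting that both sides are continuous in $|\mathbf k|$, as an integral of an $L^1_{\rm loc}$ function, so the identity holds for every $|\mathbf k|$. Lebesgue differentiation then gives $G(v)=cv$ for almost every $v$, which is all that is needed. Local integrability of $G$ itself follows from the finiteness of the normalization integral.

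For the area bounds, write $f(\mathbf n)=1-\mathbf n_{\mathbf k}\cdot\mathbf n$, so that $\Pi=\frac1{4\pi}\int_\Omega f$ with $f\in[0,2]$. The function $f$ is affine in the height $t=\mathbf n_{\mathbf k}\cdot\mathbf n$, and $t$ is uniformly distributed on $[-1,1]$ under $\sigma$ (Archimedes). Given the area $\sigma(\Omega)=a$, $\int_\Omega f$ is minimized by the bathtub principle when $\Omega$ is the polar cap $t\ge1-2a$. This gives $\frac{1}{4\pi}\cdot2\pi\int_{1-2a}^{1}(1-t)\,dt=a^2$. The maximum is attained by the opposite cap, giving $\frac1{4\pi}\int_{S^2}f-(1-a)^2=1-(1-a)^2$, by applying the minimum bound to the complement and using $\frac1{4\pi}\int_{S^2}f=1$. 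This yields \eqref{Pi_QOmega_area_bounds}.
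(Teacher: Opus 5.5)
Your proof is correct and follows essentially the paper's route. The normalization $\hat\Pi_\Gamma=\hat{\mathbb I}$ reduces to $\int_0^{2|\mathbf k|}G(v)\,dv\propto|\mathbf k|^2$, whose derivative forces $G(v)=cv$ and hence the angular response, and the area bounds come from the same cap-rearrangement (bathtub) argument; your complement trick for the upper bound and your continuity/Lebesgue-differentiation remark are only minor refinements.

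There are two slips to fix, neither of which affects the conclusion. First, the preliminary $F(\lambda)$ paragraph is inconsistent: the angular integrand is $\frac{(2\pi\hbar)^3}{2|\mathbf k|}G(|\mathbf k|u)$, not $F(|\mathbf k|u)$, and the target is $G\propto\lambda$, not $\lambda^2$. Second, the normalization display should read $\pi(2\pi\hbar)^3\int_0^2G(|\mathbf k|u)\,du=|\mathbf k|$ rather than $2|\mathbf k|$, which is what your correct value $c=1/(2\pi(2\pi\hbar)^3)$ actually requires.
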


The bounds in Eq.~\eqref{Pi_QOmega_area_bounds} are the quantitative core of the obstruction. If the angular alternative is approximately balanced, then $\sigma(\Omega)$ is close to $1/2$, and the multiplier $\Pi_{\mathbb R^3\times Q_\Omega}(\mathbf k)$ is forced to remain uniformly away from both $0$ and $1$. This clashes with the classicality requirements below, which demand, in different ways, that the same multiplier behave almost like a characteristic function.

We first combine Poincar\'e covariance with weak non-invasive repeatability. For a dichotomic coarse measurement, repeatability means that the two effects should behave, at least in one state, approximately like orthogonal alternatives. In the momentum representation this forces the multiplier of a coarse momentum-marginal effect to become almost projective somewhere in the essential sense: it must approach either $0$ or $1$.

\begin{proposition}[Weak repeatability implies extremal momentum-marginal values]\label{prop:weak-repeatability-almost-projective}
Let $(\Gamma,\hat\Pi,\mathcal P_{\rm CG})$ be a coarse-graining scheme such that $C\mapsto\hat\Pi_C$ is a regular Poincar\'e-covariant phase-space POVM. Assume that the scheme satisfies weak non-invasive repeatability with tolerance $\epsilon$, in the sense of Definition~\ref{def:weak-noninvasive-repeatability}, with $0<\epsilon<1/2$. Let $Q\subseteq\mathbb R^3_*$ be measurable and suppose that $\mathbb R^3\times Q\triangleleft\mathcal P_{\rm CG}$. Then
\begin{equation}\label{essinf_Pi_Q_projectivity_defect}
\operatorname*{ess\,inf}_{\mathbf k\in\mathbb R^3_*}\left[\Pi_{\mathbb R^3\times Q}(\mathbf k)-\Pi_{\mathbb R^3\times Q}^2(\mathbf k)\right]<\frac{\epsilon}{2},
\end{equation}
where $\Pi_{\mathbb R^3\times Q}(\mathbf k)$ is the diagonal multiplier defined in Proposition~\ref{prop:diagonal-momentum-marginal}. Equivalently, defining
\begin{equation}\label{a_epsilon_definition}
a_\epsilon:=\frac{1-\sqrt{1-2\epsilon}}{2},
\end{equation}
one has
\begin{equation}\label{projective}
\operatorname*{ess\,inf}_{\mathbf k\in\mathbb R^3_*}\Pi_{\mathbb R^3\times Q}(\mathbf k)<a_\epsilon
\quad\text{or}\quad
\operatorname*{ess\,sup}_{\mathbf k\in\mathbb R^3_*}\Pi_{\mathbb R^3\times Q}(\mathbf k)>1-a_\epsilon.
\end{equation}
\end{proposition}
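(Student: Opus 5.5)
The plan is to use closure under coarsening to reduce to a dichotomic measurement, and then the diagonal form of Proposition~\ref{prop:diagonal-momentum-marginal} to turn the operator inequality into a pointwise statement about the multiplier. Since $\mathbb R^3\times Q\triangleleft\mathcal P_{\rm CG}$, there is a coarse partition $P\in\mathcal P_{\rm CG}$ containing $\mathbb R^3\times Q$. The two-cell partition $P'=\{\mathbb R^3\times Q,\ \Gamma\setminus(\mathbb R^3\times Q)\}$ satisfies $P'\succeq P$. By Eq.~\eqref{closure_under_coarsening_definition}, $P'\in\mathcal P_{\rm CG}$, so Definition~\ref{def:weak-noninvasive-repeatability} applies directly to $P'$.

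Write $\hat A:=\hat\Pi_{\mathbb R^3\times Q}$. Then $\hat\Pi_{\Gamma\setminus(\mathbb R^3\times Q)}=\hat{\mathbb I}-\hat A$, and a one-line computation gives
\begin{equation*}
\bigl(\hat{\mathbb I}-\hat A\bigr)-\bigl(\hat{\mathbb I}-\hat A\bigr)^2=\hat A-\hat A^2 .
\end{equation*}
Hence the repeatability sum in Eq.~\eqref{PVMness} for $P'$ equals $2\braket{\psi|\hat A-\hat A^2|\psi}$. Weak non-invasive repeatability therefore yields a normalized $\ket{\psi}$ with
\begin{equation*}
\braket{\psi|\hat A-\hat A^2|\psi}<\epsilon/2 .
\end{equation*}
This doubling is exactly where the factor $1/2$ comes from. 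I would deliberately avoid working with the original, possibly finer, partition $P$. Its other cells need not commute with $\hat A$, and a bound like $\sum_{i\ge2}\hat\Pi_{C_i}^2\le(\hat{\mathbb I}-\hat A)^2$ is not available in general. Passing to the coarsening sidesteps that obstacle entirely, so I expect this reduction to be the only conceptually delicate step.

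Next I would invoke Proposition~\ref{prop:diagonal-momentum-marginal}: $\hat A$ is the multiplication operator by $f(\mathbf k):=\Pi_{\mathbb R^3\times Q}(\mathbf k)$, with $0\le f\le1$ a.e. because $0\le\hat A\le\hat{\mathbb I}$. Consequently $\hat A^2$ is multiplication by $f^2$, and with $\psi(\mathbf k)=\braket{\mathbf k|\psi}$ one gets
\begin{equation*}
\braket{\psi|\hat A-\hat A^2|\psi}=\int\frac{d^3\mathbf k}{2|\mathbf k|}\,|\psi(\mathbf k)|^2\,\bigl[f(\mathbf k)-f(\mathbf k)^2\bigr]\ \ge\ \operatorname*{ess\,inf}_{\mathbf k}\,(f-f^2),
\end{equation*}
since $|\psi|^2\,d\mu_0$ is a probability density. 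Combining this with the previous bound gives Eq.~\eqref{essinf_Pi_Q_projectivity_defect}. Here $\operatorname{ess\,inf}$ is taken with respect to Lebesgue measure, which has the same null sets as $d\mu_0$ on $\mathbb R^3_*$.

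For the equivalent form, I would solve the quadratic inequality. On $[0,1]$ we have $t-t^2<\epsilon/2$ if and only if $t<a_\epsilon$ or $t>1-a_\epsilon$, where $a_\epsilon,1-a_\epsilon$ are the roots of $t^2-t+\epsilon/2=0$; these are real because $\epsilon<1/2$. Now $\operatorname{ess\,inf}(f-f^2)<\epsilon/2$ means the set $\{f-f^2<\epsilon/2\}$ has positive measure. That set is the union of $\{f<a_\epsilon\}$ and $\{f>1-a_\epsilon\}$, so one of the two has positive measure, which gives Eq.~\eqref{projective}. For the converse: if, say, $\operatorname{ess\,inf}f<a_\epsilon$, then on a positive-measure set $f<a'$ for some $a'<a_\epsilon$. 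There $f-f^2\le a'-a'^2<\epsilon/2$, because $t-t^2$ is increasing on $[0,1/2]$ and $a_\epsilon\le 1/2$. The case $\operatorname{ess\,sup}f>1-a_\epsilon$ follows in the same way.
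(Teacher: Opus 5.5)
Your proof is correct and follows the paper's argument step by step. You coarsen to the dichotomic partition $\{\mathbb R^3\times Q,\Gamma\setminus(\mathbb R^3\times Q)\}$, use $\hat\Pi_{\Gamma\setminus(\mathbb R^3\times Q)}=\hat{\mathbb I}-\hat\Pi_{\mathbb R^3\times Q}$ to get the factor $2$, pass to the diagonal multiplier, bound by the essential infimum, and solve the quadratic. The differences are minor: you use only the inequality that is actually needed (expectation $\geq$ essential infimum) instead of the paper's claimed equality of the two infima, and you check both directions of the stated equivalence, whereas the paper writes out only the forward implication.
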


Thus repeatability tries to push the momentum-cone multiplier toward the extremal values expected of a sharp classical alternative. The area bounds in Eq.~\eqref{Pi_QOmega_area_bounds} show that a sufficiently balanced angular question, guaranteed by the minimal directional information gain condition, prevents this from happening.

The second obstruction uses the compatibility between the classical momentum coordinate and the quantum momentum observable. If the cell $\mathbb R^3\times Q$ is meant to represent the classical statement that the momentum lies in $Q$, then its effect should agree, at the coarse level, with the standard quantum momentum PVM. In the momentum representation this means that the multiplier $\Pi_{\mathbb R^3\times Q}(\mathbf k)$ should behave approximately like the characteristic function $\chi_Q(\mathbf k)$, at least in the weak best-case sense of Definition~\ref{def:weak-momentum-compatibility}.

\begin{proposition}[Weak momentum compatibility implies quasi-localized momentum marginals]\label{prop:weak-momentum-compatibility-quasi-localization}
Let $(\Gamma,\hat\Pi,\mathcal P_{\rm CG})$ be a coarse-graining scheme such that $C\mapsto\hat\Pi_C$ is a regular Poincar\'e-covariant phase-space POVM. Assume that the scheme satisfies weak classical--quantum momentum compatibility with tolerance $\epsilon$, in the sense of Definition~\ref{def:weak-momentum-compatibility}. Let $Q\subseteq\mathbb R^3_*$ be measurable and suppose that $\mathbb R^3\times Q\triangleleft\mathcal P_{\rm CG}$. Then
\begin{equation}\label{momentum_compatibility_essinf}
\operatorname*{ess\,inf}_{\mathbf k\in\mathbb R^3_*}
\left\{
\chi_Q(\mathbf k)+[\chi_{Q^{\rm c}}(\mathbf k)-\chi_Q(\mathbf k)]\Pi_{\mathbb R^3\times Q}(\mathbf k)
\right\}<\epsilon,
\end{equation}
where $Q^{\rm c}:=\mathbb R^3_*\setminus Q$, $\chi_Q$ is the characteristic function of $Q$, and $\Pi_{\mathbb R^3\times Q}(\mathbf k)$ is the diagonal multiplier defined in Proposition~\ref{prop:diagonal-momentum-marginal}. Equivalently,
\begin{equation}\label{compatibility_double}
\operatorname*{ess\,sup}_{\mathbf k\in Q}\Pi_{\mathbb R^3\times Q}(\mathbf k)>1-\epsilon
\quad\text{or}\quad
\operatorname*{ess\,inf}_{\mathbf k\in Q^{\rm c}}\Pi_{\mathbb R^3\times Q}(\mathbf k)<\epsilon.
\end{equation}
\end{proposition}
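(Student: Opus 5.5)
The plan is to reduce the weak compatibility condition to a statement about a single multiplication operator in the momentum representation. Diagonality, given by Proposition~\ref{prop:diagonal-momentum-marginal}, is what makes this reduction possible.

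First, I will reduce to a dichotomic partition. Since $\mathbb R^3\times Q\triangleleft\mathcal P_{\rm CG}$, there is a coarse partition $P\in\mathcal P_{\rm CG}$ with $\mathbb R^3\times Q\in P$. The partition $P'=\{\mathbb R^3\times Q,\ \mathbb R^3\times Q^{\rm c}\}$ is a coarsening of $P$: every other cell of $P$ is disjoint from $\mathbb R^3\times Q$, hence contained in $\Gamma\setminus(\mathbb R^3\times Q)=\mathbb R^3\times Q^{\rm c}$. By closure under coarsening, Eq.~\eqref{closure_under_coarsening_definition}, we get $P'\in\mathcal P_{\rm CG}$. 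Since $P'$ has the form $\{\mathbb R^3\times Q_i\}_{i=1}^2$ with $Q_1=Q$ and $Q_2=Q^{\rm c}$, Definition~\ref{def:weak-momentum-compatibility} applies to it. We may therefore use Eq.~\eqref{inf_compatibility_equiv} with $N=2$.

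Next, I will compute the operator appearing in Eq.~\eqref{inf_compatibility_equiv}. By additivity and normalization of the POVM, $\hat\Pi_{\mathbb R^3\times Q^{\rm c}}=\hat{\mathbb I}-\hat\Pi_{\mathbb R^3\times Q}$, so its multiplier is $1-\Pi_{\mathbb R^3\times Q}(\mathbf k)$. By Proposition~\ref{prop:diagonal-momentum-marginal}, both marginal effects are multiplication operators in the $\ket{\mathbf k}$ basis. The spectral projectors $\hat E_{\hat{\mathbf P}}(Q)$ and $\hat E_{\hat{\mathbf P}}(Q^{\rm c})$ are multiplication by $\chi_Q$ and $\chi_{Q^{\rm c}}$. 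All of these operators therefore commute. The operator
\[
\hat E_{\hat{\mathbf P}}(Q)-\hat E_{\hat{\mathbf P}}(Q)\hat\Pi_{\mathbb R^3\times Q}\hat E_{\hat{\mathbf P}}(Q)+\hat E_{\hat{\mathbf P}}(Q^{\rm c})-\hat E_{\hat{\mathbf P}}(Q^{\rm c})\hat\Pi_{\mathbb R^3\times Q^{\rm c}}\hat E_{\hat{\mathbf P}}(Q^{\rm c})
\]
is thus multiplication by
\[
f(\mathbf k)=\chi_Q(1-\Pi_{\mathbb R^3\times Q})+\chi_{Q^{\rm c}}\Pi_{\mathbb R^3\times Q}=\chi_Q+(\chi_{Q^{\rm c}}-\chi_Q)\Pi_{\mathbb R^3\times Q}.
\]
Since $0\le\Pi_{\mathbb R^3\times Q}\le1$ almost everywhere, $f$ is bounded and nonnegative. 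For a normalized $\psi$, the quantity to be minimized is
\[
\int d\mu_0(k)\,f(\mathbf k)\,|\psi(\mathbf k)|^2 .
\]

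The key step is the standard identity that, for a bounded multiplication operator on $L^2(V_0^+,d\mu_0)$, the infimum of its quadratic form over unit vectors equals $\operatorname*{ess\,inf} f$. The measure $d\mu_0$ is equivalent to Lebesgue measure on $\mathbb R^3_*$, so $\mu_0$-essential and Lebesgue-essential infima coincide. The inequality $\ge\operatorname*{ess\,inf}f$ is immediate. For the reverse inequality, take $\psi$ normalized and supported on a set of positive, finite measure where $f<\operatorname*{ess\,inf}f+\eta$, for arbitrary $\eta>0$. The strict inequality in Eq.~\eqref{inf_compatibility_equiv} then yields Eq.~\eqref{momentum_compatibility_essinf}. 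This is the only mildly delicate point; I expect it to be the main obstacle, though it is routine.

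Finally, for the equivalent form, I note that $f=1-\Pi_{\mathbb R^3\times Q}$ on $Q$ and $f=\Pi_{\mathbb R^3\times Q}$ on $Q^{\rm c}$. The essential infimum of $f$ over $\mathbb R^3_*=Q\sqcup Q^{\rm c}$ is the minimum of the essential infima over the two pieces, with the convention that the essential infimum over a null set is $+\infty$. Hence $\operatorname*{ess\,inf}f<\epsilon$ holds if and only if either $\operatorname*{ess\,inf}_{Q}(1-\Pi_{\mathbb R^3\times Q})<\epsilon$ or $\operatorname*{ess\,inf}_{Q^{\rm c}}\Pi_{\mathbb R^3\times Q}<\epsilon$. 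The first condition is the same as $\operatorname*{ess\,sup}_{Q}\Pi_{\mathbb R^3\times Q}>1-\epsilon$, so this gives Eq.~\eqref{compatibility_double}.
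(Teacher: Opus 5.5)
Your proposal is correct and follows essentially the same route as the paper. You reduce to the dichotomic coarsening $\{\mathbb R^3\times Q,\mathbb R^3\times Q^{\rm c}\}$ via closure under coarsening, turn the compatibility functional into the quadratic form of multiplication by $f=\chi_Q+(\chi_{Q^{\rm c}}-\chi_Q)\Pi_{\mathbb R^3\times Q}$, pass to the essential infimum, and split over $Q\sqcup Q^{\rm c}$. The step you flag as the main obstacle is not actually needed: the implication only uses the trivial bound $\operatorname*{ess\,inf} f\le\inf_{\psi}\int d\mu_0\, f\,|\psi|^2<\epsilon$, so the reverse inequality and the construction of near-minimizing states can be dropped.
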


Momentum compatibility requires the multiplier associated with $Q$ to become large somewhere inside $Q$, or small somewhere outside $Q$. For the angular cones selected by minimal directional information gain, the universal area bounds again forbid such behaviour when the tolerances are sufficiently small.

We can now combine the two classicality consequences with the angular alternative guaranteed by minimal directional information gain. Let $\Omega$ be the angular region supplied by Definition~\ref{def:minimal-directional-information-gain}. Since its normalized area satisfies $(1-\delta)/2\leq\sigma(\Omega)\leq1/2$, the corresponding momentum-cone multiplier is bounded away from the extremal values required by repeatability or by momentum compatibility, unless the tolerances are large enough. This gives the following two independent no-go constraints.

\begin{theorem}[No-go theorem for the classical particle limit of massless quanta]\label{thm:no-go-constraints}
Let $(\Gamma,\hat\Pi,\mathcal P_{\rm CG})$ be a coarse-graining scheme such that $C\mapsto\hat\Pi_C$ is a regular Poincar\'e-covariant phase-space POVM in the sense of Definition~\ref{def:regular-poincare-covariant-povm}. Assume that the scheme satisfies minimal directional information gain with tolerance $\delta$, in the sense of Definition~\ref{def:minimal-directional-information-gain}, with $0\leq\delta<1$.

\begin{enumerate}
\item If the scheme also satisfies weak non-invasive repeatability with tolerance $\epsilon$, in the sense of Definition~\ref{def:weak-noninvasive-repeatability}, with $0<\epsilon<3/8$, and if $a_\epsilon$ is defined by Eq.~\eqref{a_epsilon_definition}, then necessarily
\begin{equation}\label{no_go_repeatability_constraint}
\delta+2\sqrt{a_\epsilon}>1.
\end{equation}
In particular, no such scheme can satisfy simultaneously
\begin{equation}
\epsilon<\frac{1}{6},
\qquad
\delta<\frac{1}{6}.
\end{equation}

\item If the scheme also satisfies weak classical--quantum momentum compatibility with tolerance $\epsilon'$, in the sense of Definition~\ref{def:weak-momentum-compatibility}, with $0<\epsilon'<1/4$, then necessarily
\begin{equation}\label{no_go_momentum_constraint}
\delta+2\sqrt{\epsilon'}>1.
\end{equation}
In particular, no such scheme can satisfy simultaneously
\begin{equation}
\epsilon'<\frac{1}{6},
\qquad
\delta<\frac{1}{6}.
\end{equation}
\end{enumerate}
\end{theorem}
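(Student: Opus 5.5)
The plan is to combine Proposition~\ref{prop:angular-response-area-bounds} with Propositions~\ref{prop:weak-repeatability-almost-projective} and~\ref{prop:weak-momentum-compatibility-quasi-localization}, applied to the angular region $\Omega$ supplied by Definition~\ref{def:minimal-directional-information-gain}. By Eq.~\eqref{Omega} and Definition~\ref{def:coarse-partitions}, the set $\mathbb R^3\times Q_\Omega$ is a coarse cell, so both propositions apply with $Q=Q_\Omega$. Write $s:=\sigma(\Omega)\in[(1-\delta)/2,1/2]$. The area bounds \eqref{Pi_QOmega_area_bounds} give, for almost every $\mathbf k$,
\begin{equation*}
s^2\le\Pi_{\mathbb R^3\times Q_\Omega}(\mathbf k)\le 1-(1-s)^2 .
\end{equation*}
Since $s\le 1/2$, we have $(1-s)^2\ge 1/4\ge s^2$, so the upper bound is at most $1-s^2$. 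Hence the multiplier takes values essentially in $[s^2,1-s^2]$, and both its essential infimum and its essential supremum lie in this interval. This holds whether the essential extrema are taken over all of $\mathbb R^3_*$, over $Q$, or over $Q^{\rm c}$; these sets have positive measure because $0<s<1$.

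For part 1, Proposition~\ref{prop:weak-repeatability-almost-projective} applies once $\epsilon<1/2$, which is guaranteed since $\epsilon<3/8$. It yields either $\operatorname{ess\,inf}\Pi<a_\epsilon$ or $\operatorname{ess\,sup}\Pi>1-a_\epsilon$. Both alternatives are incompatible with the interval above unless $s^2<a_\epsilon$, that is, $s<\sqrt{a_\epsilon}$. Combining this with $s\ge(1-\delta)/2$ gives $1-\delta<2\sqrt{a_\epsilon}$, which is \eqref{no_go_repeatability_constraint}. For the numerical corollary, I would check that $a_\epsilon$ is increasing in $\epsilon$. At $\epsilon=1/6$ one has $a_\epsilon=(1-\sqrt{2/3})/2\approx0.0918$, so $2\sqrt{a_\epsilon}\approx0.606$. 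Then $\delta+2\sqrt{a_\epsilon}<1/6+0.606<1$, which is a contradiction. The hypothesis $\epsilon<3/8$ presumably just ensures $a_\epsilon<1/4$, so that the constraint is nontrivial; I would verify that $a_{3/8}=1/4$.

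For part 2, Proposition~\ref{prop:weak-momentum-compatibility-quasi-localization} with $Q=Q_\Omega$ gives either $\operatorname{ess\,sup}_{Q}\Pi>1-\epsilon'$ or $\operatorname{ess\,inf}_{Q^{\rm c}}\Pi<\epsilon'$. Both alternatives contradict $\Pi\in[s^2,1-s^2]$ unless $s^2<\epsilon'$. This yields $(1-\delta)/2<\sqrt{\epsilon'}$, which is \eqref{no_go_momentum_constraint}. At the thresholds, $2\sqrt{1/6}\approx0.816$ and $1/6+0.816<1$, again a contradiction. The condition $\epsilon'<1/4$ plays the same role as $\epsilon<3/8$ in part 1.

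The main obstacle is essentially bookkeeping. The one step needing care is the essential-extremum logic: the a.e.\ bounds from Proposition~\ref{prop:angular-response-area-bounds} must transfer to essential extrema over the subsets $Q$ and $Q^{\rm c}$. This requires both subsets to have positive Lebesgue measure, which follows from $s>0$ (since $\delta<1$) and $s\le1/2$. The other point to check is the symmetric reduction of the upper bound $1-(1-s)^2\le 1-s^2$, which uses $s\le1/2$.
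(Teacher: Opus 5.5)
Your proposal is correct and follows the paper's proof. It takes the region $\Omega$ from minimal directional information gain, uses the area bounds of Proposition~\ref{prop:angular-response-area-bounds}, and applies Propositions~\ref{prop:weak-repeatability-almost-projective} and~\ref{prop:weak-momentum-compatibility-quasi-localization} to $Q=Q_\Omega$ to get $(1-\delta)/2\le s<\sqrt{a_\epsilon}$ (resp.\ $s<\sqrt{\epsilon'}$). The only difference is minor: you weaken the upper bound to $1-s^2$ using $s\le1/2$, so both alternatives give $s^2<a_\epsilon$ (resp.\ $s^2<\epsilon'$) directly, whereas the paper keeps $1-(1-s)^2$ and rules out one alternative via $\sqrt{a_\epsilon}<1/2$ (resp.\ $\sqrt{\epsilon'}<1/2$), which is where $\epsilon<3/8$ and $\epsilon'<1/4$ enter.
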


The theorem shows that the obstruction follows from the invariant structure forced by Poincar\'e covariance on the local detection kernel, together with the existence of even a single sufficiently balanced directional question in the coarse-grained sector. Weak repeatability and weak classical--quantum momentum compatibility fail for different reasons, but both failures originate in the same massless invariant $k^\mu p_\mu$: it is too poor to localize the full momentum vector along a null direction while preserving Poincar\'e covariance.

\section{Lifting the massless obstruction} \label{massive}

We now show that the assumptions in Theorem \ref{thm:no-go-constraints} can be fulfilled for arbitrary values of $\epsilon,\epsilon'$ and $\delta$ when considering a massive particle, relativistic or non-relativistic. This shows that the fundamental obstruction to the quantum-to-classical transition presented in the previous section was indeed the massless nature of the classical particle.

\begin{figure}[h]
    \centering
    \includegraphics[width= \textwidth]{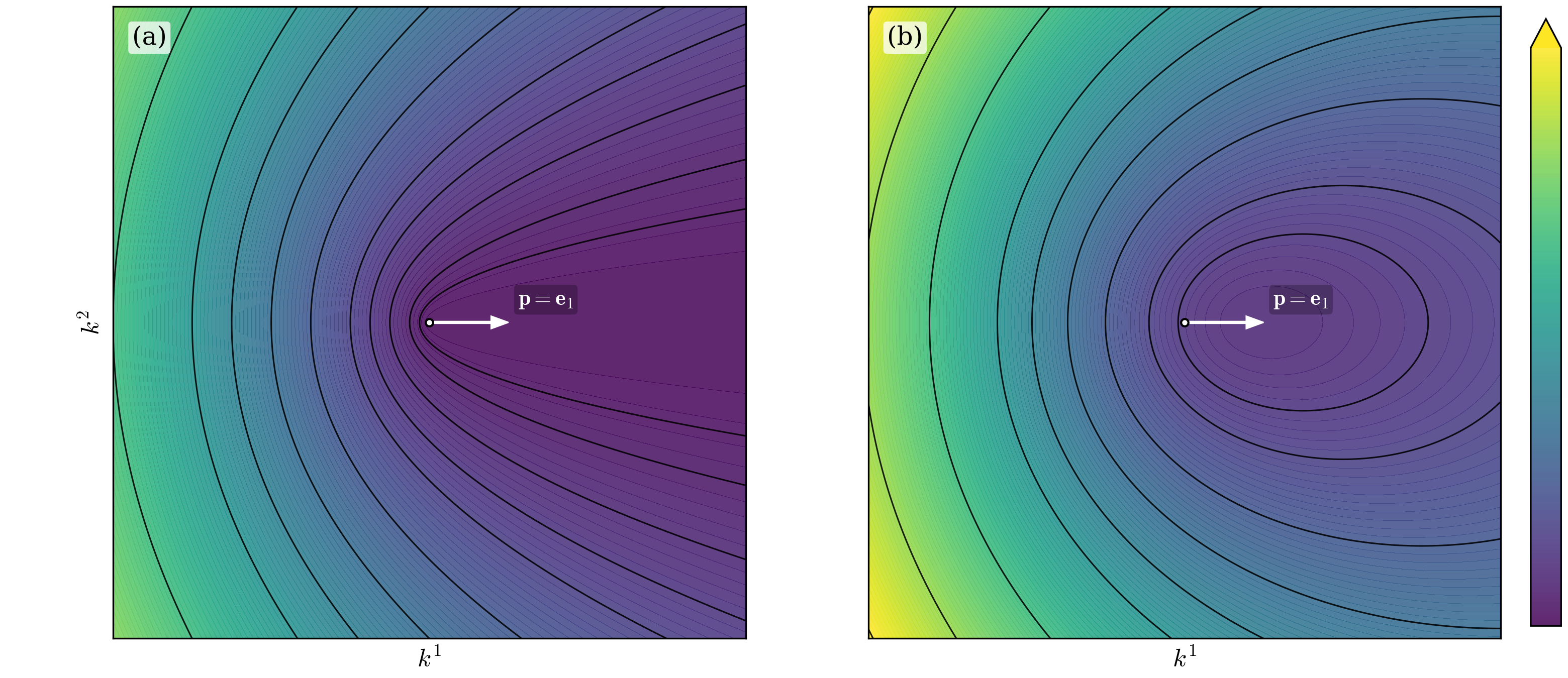}
    \caption{
        Origin of the massless obstruction. The figure illustrates why the construction of momentum-localized coherent states fails for massless particles. Little-group invariant wavefunctions can only depend on the scalar quantity $k^\mu p_\mu$, however, in the massless case, the sets on which this quantity is constant are non-compact, so such wavefunctions cannot be normalized.  Concretely, the plot shows $k^\mu p_\mu =\sqrt{m^2+|\mathbf{k}|^2}\sqrt{m^2+|\mathbf{p}|^2}-\mathbf{k}\cdot\mathbf{p}$
in the $(k^1,k^2)$-plane, for $c=1$. The black curves indicate level sets of constant $k^\mu p_\mu$. Panel (a) shows the massless case, $m=0$, where these level sets are non-compact parabolas. Panel (b) shows the massive case, $m=1$, where the same level sets become compact ellipses. Thus, while the massive invariant permits localized wave packets peaked around $\mathbf p$, the massless invariant does not.
    }
    \label{fig:level_sets}
\end{figure}

In both cases, relativistic and non-relativistic, the construction that fulfills the requirements is based on coherent states, corresponding to a rank-$1$ choice of seed $\hat \pi_{(\mathbf{0},\mathbf{p})} = \ket{\psi_{\mathbf{0},\mathbf{p}}}\bra{\psi_{\mathbf{0},\mathbf{p}}}$. This construction is not possible in the massless case as, due to little-group invariance  \eqref{Phi}, the wavefunction would need to be of the form $\psi_{\mathbf{0},\mathbf{p}}(\mathbf{k}) = \psi_{\mathbf{0},\mathbf{p}}(k^\mu p_\mu)$. As shown in Fig.~\ref{fig:level_sets}, in the massless case, the level sets of this argument form non-compact parabolas, making it impossible for such a wavefunction to be normalized. In the massive case, the level sets are compact ellipses, removing this obstruction.

\subsection{Massive nonrelativistic particle}\label{Massive_nonrelativistic_particles}

In the nonrelativistic quantum theory, the Poincaré group is replaced by the Heisenberg group, which is generated by the momentum operator $\hat{\mathbf{P}}$, implementing translations in position $\hat{U}(T_\mathbf{a}) = \exp (- i \mathbf{a} \cdot \hat{\mathbf{P}}/\hbar)$, and the position operator $\hat{\mathbf{X}}$, implementing translations in momentum (or Galilean boosts) $\hat{U}(G_\mathbf{q}) = \exp (i \mathbf{q} \cdot \hat{\mathbf{X}}/\hbar)$. The eigenstates of the generators are normalized so that $\braket{\mathbf{k} | \mathbf{k}'} = \delta^3(\mathbf{k}-\mathbf{k}')$ and  $\braket{\mathbf{x} | \mathbf{x}'} = \delta^3(\mathbf{x}-\mathbf{x}')$.

At the classical level, nonrelativistic massive particles are described by the phase space $\Gamma = \mathbb{R}^3 \times \mathbb{R}^3$ containing states composed of position and momentum $(\mathbf{x},\mathbf{p})\in\Gamma$. Reference frame transformations are given by the Galilean group, consisting of space translations, time translations, rotations, and Galilean boosts. Here, we focus on space translations $T_{\mathbf{a}}$ and Galilean boosts $G_{\mathbf{q}}$ as they already allow us to explore the entire phase-space and are expected to emerge from their quantum counterpart in the Heisenberg group.

Following \cite{zhang1990coherent} and based on the original work of Schr\"odinger in \cite{schrodinger1926stetige}, we define the non-relativistic coherent states as
\begin{equation}\label{x_p_0_0}
    \ket{\mathbf{x},\mathbf{p}} := \hat{U}(T_\mathbf{x}) \hat{U}(G_\mathbf{p})\ket{\mathbf{0},\mathbf{0}},
\end{equation}
where the seed $\ket{\mathbf{0},\mathbf{0}}$ has a Gaussian wavefunction,
\begin{equation}
    \psi_{\mathbf{0},\mathbf{0}}(\mathbf{x}) := \braket{\mathbf{x} | \mathbf{0},\mathbf{0}} =  \frac{1}{(\pi \sigma^2)^{3/4}} \exp\!\left( - \frac{|\mathbf{x}|^2}{2 \sigma^2} \right),
\end{equation}
for some fixed spread $\sigma > 0$. These states have been shown to be good candidates to prove the quantum-to-classical transition via coarse-graining in the nonrelativistic regime \cite{bibak2026classicallimitquantummechanics}. 

To construct a coarse-graining scheme according to our axiomatic approach, we identify the operator-valued density and measure in \eqref{Pi_C_density} with 
\begin{align}\label{coherent_states}
    &\hat{\pi}_{\mathbf{x},\mathbf{p}} = \ket{\mathbf{x},\mathbf{p}} \bra{\mathbf{x},\mathbf{p}},\\
    &d\mu(\mathbf{x}, \mathbf{p}) = \frac{d^3 \mathbf{x} d^3 \mathbf{p}}{(2\pi \hbar)^3},
\end{align}
yielding the map
\begin{equation} \label{Pi_C_nonrelativistic}
    \hat \Pi_C =   \int_C \frac{d^3 \mathbf{x} d^3 \mathbf{p}}{(2\pi \hbar)^3} \ket{\mathbf{x}, \mathbf p} \bra{\mathbf{x}, \mathbf p}.
\end{equation}

This construction defines a regular Heisenberg covariant phase-space POVM, which we define analogously as for the Poincar\'e group in Def.~\ref{def:regular-poincare-covariant-povm}. Indeed, the coherent states resolve the identity,
\begin{equation}
    \hat \Pi_\Gamma =  \frac{1}{(2\pi \hbar)^3} \int_\Gamma d^3 \mathbf{x} d^3 \mathbf{p} \ket{\mathbf{x}, \mathbf p} \bra{\mathbf{x}, \mathbf p} = \mathbb{\hat I},
\end{equation}
and transform covariantly,
\begin{equation}
    \hat U(T_{\mathbf{a}}) \hat U(G_{\mathbf{q}}) \ket{\mathbf{x},\mathbf{p}} \bra{\mathbf{x},\mathbf{p}} \hat U^\dagger(G_{\mathbf{q}}) \hat U^\dagger(T_{\mathbf{a}})  = \ket{\mathbf{x} + \mathbf{a},\mathbf{p} + \mathbf{q}} \bra{\mathbf{x} + \mathbf{a},\mathbf{p} + \mathbf{q}},
\end{equation} 
and the measure $d^3 \mathbf{x} d^3 \mathbf{p}/(2\pi \hbar)^3$ is invariant under Euclidean transformations. Moreover, the coherent states are approximately orthogonal, with their overlap suppressed by a Gaussian tail,
\begin{equation} \label{overlap}
    |\braket{\mathbf{x},\mathbf{p}|\mathbf{x'},\mathbf{p'}}|^2 = \exp \left( -\frac{|\mathbf{x}-\mathbf{x'}|^2}{2\sigma^2} - \frac{\sigma^2 |\mathbf{p}-\mathbf{p'}|^2}{2\hbar^2} \right).
\end{equation}

This approximate orthogonality implies that the POVM elements $\hat \Pi_C$ become nearly projective for coarse phase-space cells. Here, coarseness is naturally measured with respect to the dimensionless phase-space distance
\begin{equation}
    d_{\sigma} ((\mathbf{x},\mathbf{p}),(\mathbf{x'},\mathbf{p'}))^2 = \frac{|\mathbf{x}-\mathbf{x}'|^2}{2\sigma^2} + \frac{\sigma^2|\mathbf{p}-\mathbf{p}'|^2}{2\hbar^2}.
\end{equation}
For $R>0$ and $(\mathbf{x}_0,\mathbf{p}_0)\in\Gamma$, we denote the corresponding open ball (or rather ellipse in Euclidean coordinates) by
\begin{equation}
    B^\sigma_R(\mathbf{x_0},\mathbf{p_0}) := \{(\mathbf{x},\mathbf{p}) \in \Gamma \, : \, d_\sigma((\mathbf{x_0},\mathbf{p_0}), (\mathbf{x},\mathbf{p})) < R \}.
\end{equation}

Then, one can prove that Eq.~\eqref{PVMness} holds for partitions that contain a sufficiently coarse cell.  The concrete statement is the following.
\begin{proposition} [Weak non-invasive repeatability for massive nonrelativistic particles] \label{inf_repeatability_bound_nonrelativistic}
    Let $P \in \mathcal P_\Gamma$ be a partition so that there exists $C \in P$, $(\mathbf{x_0},\mathbf{p_0}) \in C$ and $R>0$ with $B^\sigma_R(\mathbf{x_0},\mathbf{p_0}) \subseteq C$. Then,

    \begin{equation}
        \inf_{\substack{\ket{\psi}\in\mathcal H\\ \braket{\psi|\psi}=1}}\sum_{i=1}^N\braket{\psi|\hat{\Pi}_{C_i}-\hat{\Pi}_{C_i}^2|\psi} \le 1 - \left [1-e^{-R^2} \left(1 + R^2 + \frac{R^4}{2}\right) \right]^2.
    \end{equation}
\end{proposition}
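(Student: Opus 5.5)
The plan is to bound the infimum from above by evaluating it on the coherent state $\ket{\psi}=\ket{\mathbf x_0,\mathbf p_0}$ centred at the centre of the ball contained in the coarse cell $C$. This reduces the repeatability defect to a lower bound on $\braket{\mathbf x_0,\mathbf p_0|\hat\Pi_{B^\sigma_R(\mathbf x_0,\mathbf p_0)}|\mathbf x_0,\mathbf p_0}$. That quantity is an explicit Gaussian integral over a six-dimensional ball, and it produces the incomplete-gamma expression in the statement.

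First, I use the normalization $\sum_i\hat\Pi_{C_i}=\hat{\mathbb I}$ to write $\sum_i(\hat\Pi_{C_i}-\hat\Pi_{C_i}^2)=\hat{\mathbb I}-\sum_i\hat\Pi_{C_i}^2$. Since each $\hat\Pi_{C_j}^2\geq0$, dropping all terms with $C_j\neq C$ gives the operator inequality
\begin{equation*}
\sum_i(\hat\Pi_{C_i}-\hat\Pi_{C_i}^2)\leq\hat{\mathbb I}-\hat\Pi_C^2 .
\end{equation*}
For any unit vector, the Cauchy--Schwarz inequality gives $\braket{\psi|\hat\Pi_C|\psi}^2\leq\|\hat\Pi_C\psi\|^2=\braket{\psi|\hat\Pi_C^2|\psi}$. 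Hence the sum is at most $1-\braket{\psi|\hat\Pi_C|\psi}^2$. Since $B^\sigma_R(\mathbf x_0,\mathbf p_0)\subseteq C$, positivity and additivity of the POVM give $\hat\Pi_C\geq\hat\Pi_{B^\sigma_R}$. It therefore suffices to show
\begin{equation*}
\braket{\mathbf x_0,\mathbf p_0|\hat\Pi_{B^\sigma_R}|\mathbf x_0,\mathbf p_0}\geq1-e^{-R^2}\left(1+R^2+\tfrac{R^4}{2}\right),
\end{equation*}
which holds with equality. Indeed, the right-hand side lies in $[0,1]$, so squaring preserves the inequality.

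Second, I compute this expectation using Eq.~\eqref{Pi_C_nonrelativistic} and the overlap formula \eqref{overlap}:
\begin{equation*}
\braket{\mathbf x_0,\mathbf p_0|\hat\Pi_{B^\sigma_R}|\mathbf x_0,\mathbf p_0}=\int_{B^\sigma_R}\frac{d^3\mathbf x\,d^3\mathbf p}{(2\pi\hbar)^3}\,e^{-d_\sigma^2}.
\end{equation*}
I then rescale with $\mathbf u=(\mathbf x-\mathbf x_0)/(\sqrt2\sigma)$ and $\mathbf v=\sigma(\mathbf p-\mathbf p_0)/(\sqrt2\hbar)$. The Jacobian is $d^3\mathbf x\,d^3\mathbf p=(2\hbar)^3d^3\mathbf u\,d^3\mathbf v$, so the measure becomes $\pi^{-3}d^6w$ with $w=(\mathbf u,\mathbf v)\in\mathbb R^6$. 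In these coordinates the ellipse becomes the Euclidean ball $|w|<R$ and $d_\sigma=|w|$.

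Third, I pass to six-dimensional polar coordinates, using $|S^5|=2\pi^3/\Gamma(3)=\pi^3$. This gives $\int_0^Rr^5e^{-r^2}dr$. With $s=r^2$ it becomes $\frac12\int_0^{R^2}s^2e^{-s}ds=1-e^{-R^2}(1+R^2+R^4/2)$, which yields the claimed bound.

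The only delicate points are bookkeeping. One is checking that the normalization of the measure and of the overlap \eqref{overlap} are consistent, so that the prefactor $\pi^{-3}$ cancels exactly against $|S^5|$. The other is justifying the Cauchy--Schwarz step. Neither should be a genuine obstacle.
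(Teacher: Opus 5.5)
Your proposal is correct and follows essentially the same route as the paper: the coherent trial state $\ket{\mathbf x_0,\mathbf p_0}$, using $\sum_i\hat\Pi_{C_i}=\hat{\mathbb I}$ and dropping every term except the one for $C$, Cauchy--Schwarz to reduce the defect to $1-\braket{\psi|\hat\Pi_C|\psi}^2$, and the rescaled six-dimensional Gaussian integral over the ball with $|S^5|=\pi^3$. Your explicit step $\hat\Pi_C\geq\hat\Pi_{B^\sigma_R}$ and your bookkeeping (the Jacobian $(2\hbar)^3$, the prefactor $\pi^{-3}$, and the evaluation $\int_0^R r^5e^{-r^2}\,dr=1-e^{-R^2}(1+R^2+R^4/2)$) match the paper's computation.
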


The proof is shown in Appendix~\ref{inf_repeatability_bound_nonrelativistic_proof}. A similar argument can be made to show that Eq.~\eqref{inf_compatibility_equiv} is fulfilled for partitions that contain a coarse enough cell. This is based on the coherent states being approximate eigenvectors of the momentum operator,

\begin{align} \label{overlap_momentum_position}
\left|\braket{\mathbf{x},\mathbf{p}|\mathbf{k}}\right|^2
&=
\left(\frac{\sigma^2}{\pi\hbar^2}\right)^{3/2}
\exp\!\left[
-\frac{\sigma^2}{\hbar^2}
|\mathbf{k}-\mathbf{p}|^2
\right].
\end{align}

\begin{proposition} [Weak compatibility between classical and quantum momentum for massive nonrelativistic particles] \label{inf_momentum_compatibility_nonrelativistic}
    Let $P \in \mathcal P_\Gamma$ be a partition where all $C \in P$ are of the form $C = \mathbb R^3 \times Q$, and so that there exists $C \in P$, $(\mathbf{x_0},\mathbf{p_0}) \in C$ and $R'>0$ with $B^\sigma_{R'}(\mathbf{x_0},\mathbf{p_0}) \subseteq C$. Then,

    \begin{equation}
        \inf_{\substack{\ket{\psi}\in\mathcal H\\ \braket{\psi|\psi}=1}}\sum_{i=1}^N\braket{\psi|\hat E_{\hat{\mathbf P}}(Q_i)-\hat E_{\hat{\mathbf P}}(Q_i)\hat{\Pi}_{\mathbb R^3\times Q_i}\hat E_{\hat{\mathbf P}}(Q_i)|\psi} \le 1 - \left[\operatorname{erf}\left(\frac{R'}{\sqrt{2}}\right)-\sqrt{\frac{2}{\pi}}R'e^{-R'^2/2}\right]^2.
    \end{equation}
\end{proposition}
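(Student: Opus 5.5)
The idea is to pick a trial state whose momentum support lies deep inside the coarse momentum cell $Q$. Because every cell of $P$ has the form $\mathbb R^3\times Q_i$, all effects involved become multiplication operators in the momentum representation. The defect in Eq.~\eqref{inf_compatibility_equiv} then reduces to a Gaussian tail estimate, which yields the stated bound.

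First, I compute the momentum marginal. Write $\tilde\psi_{\mathbf0,\mathbf0}(\mathbf k)$ for the Fourier transform of the Gaussian seed. Since $\hat U(T_{\mathbf x})$ acts in momentum space as the phase $e^{-i\mathbf x\cdot\mathbf k/\hbar}$, integrating over $\mathbf x$ gives $\int d^3\mathbf x\,e^{-i\mathbf x\cdot(\mathbf k-\mathbf k')/\hbar}=(2\pi\hbar)^3\delta^3(\mathbf k-\mathbf k')$. Together with Eq.~\eqref{overlap_momentum_position}, this shows that
\begin{equation*}
\hat\Pi_{\mathbb R^3\times Q}=\int d^3\mathbf k\,\Pi_Q(\mathbf k)\ket{\mathbf k}\bra{\mathbf k},\qquad
\Pi_Q(\mathbf k)=\left(\frac{\sigma^2}{\pi\hbar^2}\right)^{3/2}\int_Q d^3\mathbf p\,e^{-\sigma^2|\mathbf k-\mathbf p|^2/\hbar^2}.
\end{equation*}
Thus $\Pi_Q(\mathbf k)$ is the probability that an isotropic Gaussian vector centred at $\mathbf k$ lies in $Q$. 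Its per-component standard deviation is $s=\hbar/(\sqrt2\sigma)$. Both $\hat E_{\hat{\mathbf P}}(Q_i)$ and $\hat\Pi_{\mathbb R^3\times Q_i}$ are diagonal, so they commute. Hence, for a normalized $\psi$, the quantity inside the infimum equals
\begin{equation*}
\sum_i\int_{Q_i}d^3\mathbf k\,|\psi(\mathbf k)|^2\bigl(1-\Pi_{Q_i}(\mathbf k)\bigr).
\end{equation*}

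Second, I use the coarse cell. Let $C=\mathbb R^3\times Q$ contain $B^\sigma_{R'}(\mathbf x_0,\mathbf p_0)$. Since $\mathbf x$ is unrestricted in $C$, projecting this ellipse onto momentum space shows that $Q$ contains the Euclidean ball of radius $r=\sqrt2\hbar R'/\sigma=2R's$ around $\mathbf p_0$. I choose $\psi$ supported in the ball $|\mathbf k-\mathbf p_0|<R's$, for instance a normalized bump. Then only the term $Q_i=Q$ contributes. For every $\mathbf k$ in the support, the ball of radius $R's$ around $\mathbf k$ lies inside $Q$, so
\begin{equation*}
\Pi_Q(\mathbf k)\ge \Pr\bigl(|\mathbf Z|<R'\bigr)=:F(R'),
\end{equation*}
where $\mathbf Z$ is a standard normal vector in $\mathbb R^3$. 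The chi-distribution with three degrees of freedom gives $F(t)=\operatorname{erf}(t/\sqrt2)-\sqrt{2/\pi}\,t\,e^{-t^2/2}$, which is obtained by integrating $\sqrt{2/\pi}\,u^2e^{-u^2/2}$ by parts.

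Therefore the infimum is at most $1-F(R')$. Since $0\le F\le1$, we have $1-F\le 1-F^2$, which is the claimed bound. The main point requiring care is the geometric step: verifying that the $d_\sigma$-ball projects to a momentum ball of radius $2R's$ inside $Q$. Halving this radius is exactly what makes the support-ball argument work uniformly over $\mathbf k$. If I instead wanted to reproduce the squared form exactly rather than as a weakening, I would use a Gaussian trial state centred at $\mathbf p_0$ and bound its mass and the kernel mass separately, so that the result is a product of two factors $F(R')$. The bump-state argument above already suffices for the stated inequality, however.
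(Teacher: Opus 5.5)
Your proof is correct and follows essentially the same route as the paper: the momentum-diagonal multiplier $\Pi_Q(\mathbf k)$, projection of the $d_\sigma$-ball onto a momentum ball of radius $\sqrt2\hbar R'/\sigma$ contained in $Q$, the half-radius triangle-inequality restriction, and the three-dimensional chi distribution function $F(R')$. The only difference is the trial state. The paper uses the coherent state $\ket{\mathbf x_0,\mathbf p_0}$, whose Gaussian momentum profile contributes a second factor $F(R')$ (its mass in the half-radius ball) and so yields exactly $1-F(R')^2$. Your compactly supported bump avoids that loss and gives the sharper bound $1-F(R')$.
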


The proof is shown in Appendix~\ref{inf_momentum_compatibility_nonrelativistic_proof}. An equivalent proposition could be formulated for $\mathbf{\hat X}$ instead of $\mathbf{\hat P}$, but we do not need it here, as the original relativistic requirement \eqref{inf_compatibility_equiv} can only be formulated about momentum.

Thus, both classicality conditions in Eqs.~\eqref{PVMness} and \eqref{inf_compatibility_equiv} are fulfilled as long as at least one cell of a partition is coarse enough, meaning it contains a $d_\sigma$-ball of sufficient radius. This is a property that is preserved under coarsening, so the definition
\begin{equation}
    \mathcal P_{\mathrm{CG}} = \{P \in \mathcal P_\Gamma \, : \, \exists C \in P, (\mathbf{x_0},\mathbf{p_0}) \in C \text{ so that } B^\sigma_{R_{\mathrm{min}}}(\mathbf{x_0},\mathbf{p_0}) \subseteq C\} 
\end{equation}
is consistent with Eq.~\eqref{closure_under_coarsening_definition}. In light of Def.~\ref{def:minimal-directional-information-gain}, note that the hemispheres $C^\pm = \{(\mathbf x, \mathbf p) \in \Gamma \, : \, \pm p^1 > 0 \}$ contain ellipses of arbitrary radius and imply that this definition of $\mathcal P_{\mathrm{CG}}$ fulfills the conditions of Def.~\ref{def:minimal-directional-information-gain} for any value of $\delta \ge 0$.

Rectangles of some appropriate minimum side length $\ell(R_{\mathrm{min}}) > 0$ also contain $d_\sigma$-balls of a given size. We now show that this allows us to fulfill Eq.~\eqref{dynkin_inferential_completeness}.

\begin{proposition} [Inferential completeness from coarse rectangles] \label{inferential_completeness_rectangles}
    Consider any $\ell \ge 0$ and denote as $\mathcal R_\ell$ the family of all rectangles with side-length $\ge \ell$. Denote as $\mathcal D(\mathcal R_\ell)$ the minimal Dynkin system that contains $\mathcal R_\ell$. Then,

    \begin{equation}
        \overline{\mathcal D(\mathcal R_\ell)}^\nu = \sigma_\Gamma.
    \end{equation}

    Accordingly, $\mathcal R_\ell \subseteq \mathcal G(\mathcal P_{\mathrm{CG}})$ for any $\ell \ge 0$, is a sufficient condition for Eq.~\eqref{dynkin_inferential_completeness}.
\end{proposition}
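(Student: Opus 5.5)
The plan is to show that a Dynkin system generated by large rectangles already contains every bounded rectangle (indeed every rectangle), and then pass to the Borel $\sigma$-algebra via Dynkin's $\pi$--$\lambda$ theorem and to $\sigma_\Gamma$ via $\nu$-completion.

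\textbf{Step 1 (complements and differences).} Every Dynkin system is closed under proper differences. If $A\subseteq B$ and both lie in $\mathcal D$, then
\[
B\setminus A=\Gamma\setminus\bigl(A\sqcup(\Gamma\setminus B)\bigr)\in\mathcal D .
\]

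\textbf{Step 2 (small intervals from large ones).} Work one coordinate at a time, since $\Gamma$ is a product of six real coordinates (removing $\mathbf p=\mathbf 0$ is $\nu$-null and harmless). Take an interval $[a,b)$ with $b-a<\ell$, and let $L\ge\ell$. The intervals $[a,a+L)$ and $[b,a+L)$ are nested, and if $L\ge\ell+(b-a)$ both have length at least $\ell$. Their difference is
\[
[a,a+L)\setminus[b,a+L)=[a,b).
\]
Fix the other five sides at length at least $\ell$ and apply Step 1; this produces a rectangle with one arbitrary short side. Next, treat the second coordinate the same way, holding the already shortened side fixed. The two nested rectangles used there now have one short side, but they belong to $\mathcal D$ by the previous stage, not because they lie in $\mathcal R_\ell$. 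Iterating over all six coordinates shows that every half-open rectangle lies in $\mathcal D(\mathcal R_\ell)$. If "rectangle" is meant in $\Gamma$, i.e.\ intersected with $\mathbb R^3\times\mathbb R^3_*$, the same argument applies.

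\textbf{Step 3 ($\pi$--$\lambda$ and completion).} Half-open rectangles together with $\emptyset$ form a $\pi$-system generating the Borel $\sigma$-algebra. By Dynkin's $\pi$--$\lambda$ theorem, $\mathcal D(\mathcal R_\ell)$ contains all Borel sets. Every Lebesgue-measurable set differs from a Borel set by a $\nu$-null set, so
\[
\overline{\mathcal D(\mathcal R_\ell)}^{\,\nu}=\sigma_\Gamma .
\]
The final claim is then immediate. Since $\mathcal D$ is monotone and idempotent, $\mathcal R_\ell\subseteq\mathcal G(\mathcal P_{\rm CG})$ gives
\[
\mathcal D(\mathcal R_\ell)\subseteq\mathcal D(\mathcal P_{\rm CG}),
\]
and taking $\nu$-completions on both sides yields Eq.~\eqref{dynkin_inferential_completeness}.

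The main obstacle is the bookkeeping in Step 2. In the iterative differences, both nested sets must already be known to lie in $\mathcal D$ at each stage. This is why the shortening is done coordinate by coordinate while the other sides are kept long, rather than in a single step.
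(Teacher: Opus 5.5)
Your proof is correct and follows the paper's strategy. It obtains arbitrary bounded rectangles from coarse ones by proper differences of nested sets, then applies the $\pi$--$\lambda$ theorem, $\nu$-completion, and monotonicity of the Dynkin closure for the final claim. The only difference is bookkeeping: you shorten one side at a time, with each stage a single difference of two rectangles from the previous stage. The paper instead shortens all six sides at once, decomposing $R_\varnothing$ into the $2^6$ disjoint pieces $A_J$ and running a downward induction on $|J|$.
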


The proof is shown in Appendix \ref{inferential_completeness_rectangles_proof}. The above discussion, in particular propositions \ref{inf_repeatability_bound_nonrelativistic}, \ref{inf_momentum_compatibility_nonrelativistic} and \ref{inferential_completeness_rectangles} , imply the following theorem, in contrast to theorem \ref{thm:no-go-constraints}.

\begin{theorem}[Existence of a minimal classical particle limit for massive nonrelativistic quanta]
\label{coarse_graining_nonrelativistic}

For every choice of tolerances $\epsilon,\epsilon',\delta \in (0,1)$,
there exists a radius $R_{\mathrm{min}} = R_{\mathrm{min}}(\epsilon,\epsilon') > 0$, given implicitly by
\begin{align}
        &R_{\mathrm{min}}(\epsilon,\epsilon') = \max \{R(\epsilon), R'(\epsilon')\},\\
        &\epsilon = 1 - \left [1-e^{-R(\epsilon)^2} \left(1 + R(\epsilon)^2 + \frac{R(\epsilon)^4}{2}\right) \right]^2,\\
        &\epsilon' = 1 - \left[\operatorname{erf}\left(\frac{R'(\epsilon')}{\sqrt{2}}\right)-\sqrt{\frac{2}{\pi}}R'(\epsilon')e^{-R'(\epsilon')^2/2}\right]^2,
    \end{align}
    such that the following holds.

Let
\begin{equation}
    \hat \Pi_C
    =
    \int_C
    \frac{d^3\mathbf{x}\,d^3\mathbf{p}}
         {(2\pi\hbar)^3}
    \ket{\mathbf{x},\mathbf{p}}
    \bra{\mathbf{x},\mathbf{p}},
\end{equation}
with the massive nonrelativistic coherent states of Eq.~\eqref{x_p_0_0}, and define the family of admissible coarse-grained partitions by
\begin{equation}
\label{P_CG_nonrelativistic}
    \mathcal P_{\mathrm{CG}}(\epsilon,\epsilon')
    :=
    \Bigl\{
        P\in\mathcal P_\Gamma :
        \exists\, C\in P,\;
        \exists\, (\mathbf{x_0},\mathbf p_0) \in C
        \text{ such that }
        B^\sigma_{R_{\mathrm{min}}}(\mathbf{x_0},\mathbf p_0)
        \subseteq C
    \Bigr\}.
\end{equation}

Then the triple $(\Gamma, \hat\Pi, \mathcal P_{\mathrm{CG}}(\epsilon,\epsilon'))$ defines a coarse-graining scheme with a regular Heisenberg-covariant phase-space POVM $C \mapsto \hat\Pi_C$. Moreover, it satisfies
\begin{enumerate}    
    \item weak non-invasive repeatability with tolerance $\epsilon$ in the sense of Def.~\ref{def:weak-noninvasive-repeatability};
    \item weak classical--quantum momentum compatibility with tolerance $\epsilon'$ in the sense of Def.~\ref{def:weak-momentum-compatibility};
    \item minimal directional information gain with tolerance $\delta$ in the sense of Def.~\ref{def:minimal-directional-information-gain}.
\end{enumerate}
\end{theorem}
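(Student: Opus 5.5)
The plan is to check each ingredient of Theorem~\ref{coarse_graining_nonrelativistic} in turn, using Propositions~\ref{inf_repeatability_bound_nonrelativistic}, \ref{inf_momentum_compatibility_nonrelativistic} and \ref{inferential_completeness_rectangles}.

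\emph{Step 1: the radius $R_{\mathrm{min}}$.} I will show that $R(\epsilon)$ and $R'(\epsilon')$ exist. Set $f(R)=1-[1-e^{-R^2}(1+R^2+R^4/2)]^2$. The expression $e^{-R^2}(1+R^2+R^4/2)$ is the tail probability $P(X\ge R^2)$ of a Gamma$(3,1)$ variable, so $f$ is continuous and strictly decreasing, with $f(0)=1$ and $f(R)\to0$ as $R\to\infty$. The intermediate value theorem then gives a unique $R(\epsilon)>0$ for each $\epsilon\in(0,1)$. Similarly, $g(R')=1-[\operatorname{erf}(R'/\sqrt2)-\sqrt{2/\pi}R'e^{-R'^2/2}]^2$ involves the chi distribution with three degrees of freedom: the bracket is $P(|Z|<R')$ for a standard Gaussian vector $Z\in\mathbb R^3$, which increases from $0$ to $1$. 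This gives a unique $R'(\epsilon')$. Both bounds are monotone in $R$, so any $R\ge R_{\mathrm{min}}$ makes each bound at most the corresponding tolerance. I must ensure the strict inequality required by Definitions~\ref{def:weak-noninvasive-repeatability} and \ref{def:weak-momentum-compatibility}. To do this, I either take $R_{\mathrm{min}}$ slightly larger or use openness of the ball: if $B^\sigma_{R_{\mathrm{min}}}\subseteq C$, then $C$ contains $B^\sigma_{r}$ for every $r<R_{\mathrm{min}}$. The cleanest route is to replace $R_{\mathrm{min}}$ by $R_{\mathrm{min}}+\eta$, which does not change the logic.

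\emph{Step 2: the scheme is well defined.} The POVM properties, namely resolution of identity, Heisenberg covariance, invariance of the measure and nondegeneracy (the density is a rank-one projector everywhere), are stated before the theorem. Closure under coarsening follows directly from the definition: if $P'\succeq P$ and $B^\sigma_{R_{\mathrm{min}}}\subseteq C\in P$, then $C\subseteq C'$ for some $C'\in P'$, so $C'$ contains the same ball. The family is nonempty, since $\{\Gamma\}$ belongs to it. For inferential completeness, choose $\ell$ so that every rectangle with sides $\ge\ell$ contains a $d_\sigma$-ball of radius $R_{\mathrm{min}}$; for example, the side lengths in $\mathbf x$ and $\mathbf p$ can be chosen proportional to $\sigma$ and $\hbar/\sigma$ times $2R_{\mathrm{min}}$, and an equal-side choice $\ell$ works by taking the maximum. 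For each $C\in\mathcal R_\ell$, the partition $\{C,\Gamma\setminus C\}$ then lies in $\mathcal P_{\mathrm{CG}}$, so $\mathcal R_\ell\subseteq\mathcal G(\mathcal P_{\mathrm{CG}})$, and Proposition~\ref{inferential_completeness_rectangles} gives Eq.~\eqref{dynkin_inferential_completeness}.

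\emph{Step 3: the three requirements.} Weak non-invasive repeatability follows immediately from Proposition~\ref{inf_repeatability_bound_nonrelativistic} together with Step~1. For momentum compatibility, take any partition $\{\mathbb R^3\times Q_i\}\in\mathcal P_{\mathrm{CG}}$. By definition, some cell contains a ball of radius $R_{\mathrm{min}}\ge R'(\epsilon')$, so Proposition~\ref{inf_momentum_compatibility_nonrelativistic} applies. Minimal directional information gain uses $\Omega$ equal to the open hemisphere $\{n^1>0\}$, which has $\sigma(\Omega)=1/2$. The cell $\mathbb R^3\times Q_\Omega$ contains $B^\sigma_{R_{\mathrm{min}}}(\mathbf 0,\mathbf p_0)$ whenever $p_0^1>\hbar R_{\mathrm{min}}\sqrt2/\sigma$, so the dichotomic partition lies in $\mathcal P_{\mathrm{CG}}$. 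Here the nonrelativistic $\Gamma$ includes $\mathbf p=0$; this only affects a null set and the bookkeeping of the complement.

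The main obstacle is bookkeeping rather than analysis. The points that need care are the strict-inequality issue in Step~1, and checking that the rectangle-to-ball inclusion respects the anisotropic metric $d_\sigma$.
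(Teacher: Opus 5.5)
Your proposal is correct and follows the paper's own route. The paper obtains the theorem by combining the resolution of the identity and covariance of the coherent states with Propositions~\ref{inf_repeatability_bound_nonrelativistic}, \ref{inf_momentum_compatibility_nonrelativistic} and \ref{inferential_completeness_rectangles}, closure under coarsening, coarse rectangles that contain $d_\sigma$-balls, and a hemisphere cell that contains a displaced ball; your monotonicity argument for $R(\epsilon)$ and $R'(\epsilon')$ supplies the existence step the paper leaves implicit.

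On the strict inequality, which the paper passes over, you do not need to enlarge $R_{\mathrm{min}}$, and the openness alternative goes the wrong way, since smaller balls give weaker bounds. For the coherent trial state both bounds are already strict at the stated $R_{\mathrm{min}}$:
\begin{itemize}
\item For repeatability, either every other cell is $\nu$-null, so $\hat\Pi_C=\hat{\mathbb I}$, or some discarded term $\|\hat\Pi_{C_j}\psi\|^2$ is strictly positive because the overlap is an everywhere-positive Gaussian.
\item For momentum compatibility, the part of the integration domain dropped in the proof carries strictly positive Gaussian weight.
\end{itemize}
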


In Appendix \ref{stronger_classicality_nonrelativistic}, we go a step further and present two propositions that motivate that, by further constraining which partitions are classified as coarse, in particular demanding that all cells of a partition, instead of only one, be coarse with respect to $d_\sigma$, it is possible to satisfy more than just the minimal classicality conditions discussed in this work. Indeed, for a truly emergent classical theory, we would expect that Defs. \ref{def:weak-noninvasive-repeatability} and\ref{def:weak-momentum-compatibility} do not only hold in the best case, represented by the inf, but rather for most states that could be prepared in an experiment. 

\subsection{Massive relativistic particle} \label{massive_relativistic_particle}

We now return to the relativistic setting, but consider particles with non-zero mass, satisfying the on-shell condition $p^\mu p_\mu = m^2 > 0$. Momentum eigenstates are normalized according to $\braket{\mathbf{k}|\mathbf{k}’} = 2k^0\delta^3(\mathbf{k}-\mathbf{k}’)$, where $k^0=\sqrt{m^2+|\mathbf{k}|^2}$. As massive particles have a nonvanishing 4-momentum at rest, we can take $\Gamma=\mathbb R^6$.

Unlike the massless case, a massive relativistic particle admits a classical limit. This is because, analogously to the non-relativistic case, we can construct a coarse-graining scheme based on a family of coherent states. For a fixed inverse width parameter $\alpha > 0$, this family is given by a standard Perelomov construction \cite{perelomov1977generalized},
\begin{equation}\label{x_p_0_0_relativistic}
    \ket{\mathbf{x},\mathbf{p}} := \hat{U}(T_\mathbf{x}) \hat{U}(\Lambda_{\mathbf{0}\mapsto\mathbf{p}})\ket{\mathbf{0},\mathbf{0}},
\end{equation} 
where $\Lambda_{\mathbf{0}\mapsto\mathbf{p}}$ is any Lorentz transformation mapping the zero momentum $(m,\mathbf{0})$ into $(p^0,\mathbf{p})$ and 
\begin{equation}
    \psi_{\mathbf{0},\mathbf{0}}(\mathbf{k}) := \braket{\mathbf{k}|\mathbf{0},\mathbf{0}} =\left(\frac{\alpha}{\pi m K_1(2\alpha m)}\right)^{1/2}\exp\!\left(-\alpha \sqrt{m^2+|\mathbf k|^2}\right).
\end{equation}
Here, $K_\nu$, for $\nu \in \mathbb C$, denotes the modified Bessel function of the second kind, with integral representation
\begin{equation}
    K_\nu(z)=\int_0^\infty e^{-z\cosh t}\cosh(\nu t)\,dt,
\qquad \operatorname{Re}(z)>0.
\end{equation}

This choice of seed $\ket{\mathbf{0},\mathbf{0}}$ is motivated by the fact that it saturates the relativistic uncertainty relation between momenta $\mathbf{\hat P}$ and boost generators $\mathbf{\hat K}$,
\begin{equation}
    \Delta_{\ket \psi} \hat P^i \Delta_{\ket \psi} \hat K^i \ge \frac{\hbar}{2}\braket{\psi|\hat P^0|\psi}.
\end{equation}

These coherent states, together with alternative proposals for relativistic coherent states, have been studied in Ref.~\cite{kowalski2018coherent}. Just as in the nonrelativistic case, we use the coherent states to construct an operator-valued density and a measure,
\begin{align}
    &\hat{\pi}_{\mathbf{x},\mathbf{p}} = \ket{\mathbf{x},\mathbf{p}} \bra{\mathbf{x},\mathbf{p}}, \label{Pi_x_p_massive_relativistic}\\
    &d\mu(\mathbf{x},\mathbf{p}) = \frac{K_1(2\alpha m)}{K_2(2\alpha m)} \frac{d^3 \mathbf{x} d^3 \mathbf{p}}{(2\pi \hbar)^3}, \label{dmu_massive_relativistic}
\end{align}
leading to the map 
\begin{equation} \label{massive_relativistic_Pi_C}
    C \mapsto \hat \Pi_C = \frac{K_1(2\alpha m)}{K_2(2\alpha m)} \int_C \frac{d^3 \mathbf{x} d^3 \mathbf{p}}{(2\pi \hbar)^3} \ket{\mathbf{x},\mathbf{p}} \bra{\mathbf{x},\mathbf{p}}.
\end{equation}

This is a phase-space measurement as defined in section \ref{Phasespace_measurements}, because, just as in the nonrelativistic case, they resolve the identity,
\begin{equation} \label{res_identity_massive_rel}
\frac{K_1(2\alpha m)}{K_2(2\alpha m)} \int_\Gamma \frac{d^3 \mathbf{x} d^3 \mathbf{p}}{(2\pi \hbar)^3} \ket{\mathbf{x},\mathbf{p}} \bra{\mathbf{x},\mathbf{p}} = \mathbb{\hat I}.
\end{equation}
The proof is shown in Appendix \ref{res_identity_massive_rel_proof}. 

Importantly, $\psi_{\mathbf{0},\mathbf{0}}(\mathbf{k})$ only depends on $|\mathbf{k}|$ and is therefore invariant under the massive little group $\mathrm{LG}(\mathbf{0}) = SO(3)$. This implies that the coherent states transform covariantly under the Poincar\' group. More precisely, if $(a,\Lambda) \in P_+^\uparrow$ maps $(\mathbf x,\mathbf p)$ to $(\mathbf{x'},\mathbf{p'})$, then Appendix \ref{covariance_massive_relativistic_proof} shows that
\begin{equation} \label{covariance_massive_relativistic}
    \hat{U}(T_a) \hat{U}(\Lambda) \ket{\mathbf{x},\mathbf{p}} = \ket{\mathbf{x'},\mathbf{p'}}.
\end{equation}
Since the measure \eqref{dmu_massive_relativistic} is also invariant under euclidean transformations, this implies that $C \mapsto \hat \Pi_C$ in Eq.~\eqref{massive_relativistic_Pi_C} is a regular Poincar\'e covariant POVM.

The massive relativistic coherent states will allow us to define a coarse-graining scheme that satisfies weak non-invasive repeatability and weak compatibility between quantum and classical momentum. Just as in the non-relativistic case, this relies on two fundamental properties. First of all, the coherent states are approximately orthogonal. As shown in Appendix \ref{overlap_relativistic_proof}, 
\begin{equation} \label{overlap_relativistic}
    |\braket{\mathbf{x},\mathbf p| \mathbf{x'},\mathbf{p'}}|^2 =  \left| \frac{2\alpha m}{K_1(2\alpha m)} \frac{K_1\!\left(m\sqrt{\xi^2}\right)}{m\sqrt{\xi^2}}\right|^2,
\end{equation}
with
\begin{equation}
    \xi^\mu = \frac{\alpha}{m}(p+p')^\mu- \frac{i}{\hbar}(x-x')^\mu,
\end{equation}
and where $x^\mu = (0,\mathbf x)$ and $p^\mu = (p^0,\mathbf p)$. Although considerably more intricate than the Gaussian overlap \eqref{overlap} of the non-relativistic coherent states, this expression exhibits the same qualitative behaviour. Indeed, using the asymptotic form
\begin{equation}
K_1(z)
\sim
\sqrt{\frac{\pi}{2z}}\,e^{-z},
\end{equation}
one finds that the overlap decays rapidly whenever either $\Delta\mathbf x$ or $\Delta\mathbf p$ becomes large, since $\operatorname{Re}(\sqrt{\xi^2})$ then grows correspondingly. Secondly, the coherent states remain sharply peaked in momentum space,
\begin{equation}
    \braket{\mathbf{k}|\mathbf{x},\mathbf{p}}
    =
    \sqrt{\frac{\alpha}{\pi mK_1(2\alpha m)}}
    \exp\!\left(-\frac{i}{\hbar}\mathbf{k}\cdot\mathbf{x}\right)
    \exp\!\left(-\alpha\frac{p \cdot k}{m}\right),
\end{equation}
and are therefore concentrated around $\mathbf k=\mathbf p$, since $p\cdot k\ge m^2$, with equality if and only if $\mathbf k=\mathbf p$.

Approximate orthogonality and momentum localization provide the relativistic counterparts of the properties underlying Theorem \ref{coarse_graining_nonrelativistic}. Before stating the resulting theorem, we introduce the hyperbolic balls, more suited to the hyperbolic geometry of Minkowski space,
\begin{equation}
    H_R(\mathbf p_0) := \left \{\mathbf k \in \mathbb R^3 \, : \, \mathrm{arccosh}\left(\frac{p_\mu k^\mu}{m^2}\right) \le R \right\}.
\end{equation}

\begin{theorem}[Existence of a minimal classical particle limit for massive relativistic quanta]
\label{coarse_graining_relativistic}

For every choice of tolerances $\epsilon,\epsilon',\delta \in (0,1)$,
there exists a radius $R_{\mathrm{min}} = R_{\mathrm{min}}(\epsilon,\epsilon') > 0$ such that the following holds.

Let
\begin{equation}
    \hat \Pi_C
    =
    \frac{K_1(2\alpha m)}{K_2(2\alpha m)}
    \int_C
    \frac{d^3\mathbf{x}\,d^3\mathbf{p}}
         {(2\pi\hbar)^3}
    \ket{\mathbf{x},\mathbf{p}}
    \bra{\mathbf{x},\mathbf{p}},
\end{equation}
with the massive relativistic coherent states of Eq.~\eqref{x_p_0_0_relativistic}, and define the family of admissible coarse-grainings by
\begin{equation}
\label{P_CG_relativistic}
    \mathcal P_{\mathrm{CG}}(\epsilon,\epsilon')
    :=
    \Bigl\{
        P\in\mathcal P_\Gamma :
        \exists\, C\in P,\;
        \exists\, \mathbf p_0\in\mathbb R^3
        \text{ such that }
        \mathbb R^3\times
        H_{R_{\mathrm{min}}}(\mathbf p_0)
        \subseteq C
    \Bigr\}.
\end{equation}

Then the triple $(\Gamma, \hat\Pi, \mathcal P_{\mathrm{CG}}(\epsilon,\epsilon'))$ defines a coarse-graining scheme with a regular Poincar\'e-covariant phase-space POVM $C \mapsto \hat\Pi_C$. Moreover, it satisfies
\begin{enumerate}
    
    \item weak non-invasive repeatability with tolerance $\epsilon$ in the sense of Def.~\ref{def:weak-noninvasive-repeatability};
    \item weak classical--quantum momentum compatibility with tolerance $\epsilon'$ in the sense of Def.~\ref{def:weak-momentum-compatibility};
    \item minimal directional information gain with tolerance $\delta$ in the sense of Def.~\ref{def:minimal-directional-information-gain}.
\end{enumerate}

\end{theorem}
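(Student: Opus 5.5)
The plan mirrors the proof of Theorem~\ref{coarse_graining_nonrelativistic}, replacing Gaussian estimates by estimates on the relativistic overlap \eqref{overlap_relativistic} and on the momentum wavefunction $\braket{\mathbf k|\mathbf x,\mathbf p}\propto e^{-\frac{i}{\hbar}\mathbf k\cdot\mathbf x}e^{-\alpha p\cdot k/m}$. I will proceed in five steps.

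\textbf{Step 1: the POVM is a regular covariant scheme.} This follows from the resolution of the identity \eqref{res_identity_massive_rel}, the covariance law \eqref{covariance_massive_relativistic}, the Euclidean invariance of the measure \eqref{dmu_massive_relativistic}, and the fact that $\hat\pi_{(\mathbf x,\mathbf p)}\neq0$ everywhere. For inferential completeness I will use a variant of Proposition~\ref{inferential_completeness_rectangles}. Every rectangle $A\times B$ with $B$ large enough to contain a hyperbolic ball $H_{R_{\min}}(\mathbf p_0)$ gives a coarse partition $\{A\times B,\Gamma\setminus(A\times B)\}$; this works when $A=\mathbb R^3$. For general bounded $A$, the defining condition of \eqref{P_CG_relativistic} is satisfied by the complement cell $\Gamma\setminus(A\times B)$, since it contains $\mathbb R^3\times H_{R_{\min}}(\mathbf p_0)$ for $\mathbf p_0$ far from $B$. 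Hence every rectangle, and so every set in $\sigma_\Gamma$ up to null sets, lies in the $\nu$-completed Dynkin closure. Closure under coarsening is immediate, because a cell containing $\mathbb R^3\times H_{R_{\min}}(\mathbf p_0)$ is contained in a cell of any coarsening. Minimal directional information gain holds with the hemisphere partition $\pm p^1>0$: each half contains arbitrarily large hyperbolic balls, so the partition is in $\mathcal P_{\rm CG}$ with $\sigma(\Omega)=1/2$.

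\textbf{Step 2: reduce both classicality conditions to a single coarse cell.} Let $C$ contain $\mathbb R^3\times H_R(\mathbf p_0)$, and choose a test state $\ket\psi$ with momentum support deep inside $H_R(\mathbf p_0)$, say the momentum indicator of $H_{R/2}(\mathbf p_0)$ normalized. Because the cell is a full momentum tube, $\hat\Pi_{\mathbb R^3\times H}$ is diagonal in momentum, exactly as in Proposition~\ref{prop:diagonal-momentum-marginal}. Its multiplier is
\[
f(\mathbf k)=\frac{K_1(2\alpha m)}{K_2(2\alpha m)}\int_H\frac{d^3\mathbf p}{2k^0}\,\frac{\alpha}{\pi mK_1(2\alpha m)}\,e^{-2\alpha p\cdot k/m}.
\]
By Lorentz invariance of $d^3\mathbf p/p^0$ after boosting $\mathbf k$ to rest, $f(\mathbf k)$ depends only on the hyperbolic distance from $\mathbf k$ to the complement of $H$. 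If $\mathbf k\in H_{R/2}(\mathbf p_0)$, this distance is at least $R/2$, and the multiplier satisfies $f\ge 1-g(R/2)$, where $g(s)$ is the tail fraction of the measure $e^{-2\alpha m\cosh\eta}\sinh^2\!\eta\,\cosh\eta\,d\eta$ beyond $\eta=s$. Normalization of this tail is supplied by \eqref{res_identity_massive_rel}, which fixes the constant $K_1/K_2$.

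\textbf{Step 3: bound the repeatability defect.} Monotonicity gives $\hat\Pi_C\ge\hat\Pi_{\mathbb R^3\times H}$, so $\braket{\psi|\hat\Pi_C|\psi}\ge1-g(R/2)$. For $N$ effects summing to the identity, the defect satisfies $\sum_i\braket{\psi|\hat\Pi_{C_i}-\hat\Pi_{C_i}^2|\psi}\le 1-\braket{\psi|\hat\Pi_C|\psi}^2$, using Cauchy--Schwarz as in Proposition~\ref{inf_repeatability_bound_nonrelativistic}. This yields the bound $1-(1-g(R/2))^2$ for Definition~\ref{def:weak-noninvasive-repeatability}.

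\textbf{Step 4: bound the momentum-compatibility defect.} For Definition~\ref{def:weak-momentum-compatibility} the same $\psi$ lies entirely in $\hat E_{\hat{\mathbf P}}(Q)$ with $Q\supseteq H$. The defect therefore equals $1-\braket{\psi|\hat\Pi_{\mathbb R^3\times Q}|\psi}\le g(R/2)$.

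\textbf{Step 5: choose the radius.} Since $g(s)\to0$ as $s\to\infty$, I can choose $R_{\min}$ so that both bounds are below $\epsilon$ and $\epsilon'$ respectively.

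The main obstacle is Step 2. It requires verifying that the multiplier reduces by a boost to a function of hyperbolic distance alone, and computing the tail normalization exactly through Bessel functions. I would first check this against the identity-resolution computation in Appendix~\ref{res_identity_massive_rel_proof}.
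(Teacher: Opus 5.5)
Your proposal is correct and follows essentially the paper's route: position-integrated tube effects are diagonal in momentum, and their multiplier is bounded below by a rest-frame hyperbolic-ball computation. From there you use the bound $1-\langle\hat\Pi_C\rangle^2$ for repeatability, the hemisphere partition $\pm p^1>0$ for directional information, and cells avoiding a tube $\mathbb R^3\times H_{R_{\min}}(\mathbf p_0)$ for inferential completeness, just as the paper does.

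The differences are minor. You use a fixed test state supported in $H_{R/2}(\mathbf p_0)$ together with the hyperbolic triangle inequality, whereas the paper uses a sequence of states concentrating at $\mathbf p_0$. You also state explicitly the operator monotonicity $\hat\Pi_C\ge\hat\Pi_{\mathbb R^3\times H}$ for non-tube cells, which the paper leaves implicit.

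One phrasing in Step 2 should be fixed. The multiplier is not literally a function of the hyperbolic distance to $H^{\rm c}$: it depends on the whole shape of $H$, and $d^3\mathbf p$ is not boost-invariant. What you actually need, and what your argument supports, is the monotone bound $f_H(\mathbf k)\ge f_{H_{R/2}(\mathbf k)}(\mathbf k)=1-g(R/2)$. For the centered ball the odd boost term cancels by symmetry, so the computation goes through as you describe.
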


The proof is shown in Appendix \ref{coarse_graining_relativistic_proof}. While the class of partitions \eqref{P_CG_relativistic} is sufficient to meet the minimal requirements of Theorem \ref{thm:no-go-constraints}, it is too restrictive to serve as a complete description of classical relativistic measurements, since it requires one cell to span all positions. This restriction is merely a technical convenience of the present proof. Because the coherent states become approximately orthogonal in both position and momentum labels, one expects analogous results to hold for substantially more general partitions. Establishing this rigorously, as well as extending the analysis to stronger notions of classicality beyond the minimal Defs. \ref{def:weak-noninvasive-repeatability} and \ref{def:weak-momentum-compatibility}, is left for future work. Such extensions should follow from a relativistic generalization of the arguments developed in Appendix \ref{stronger_classicality_nonrelativistic}.

\section{Conclusions}\label{Conclusions}

We have proved a no-go theorem showing that massless quanta cannot admit a classical particle limit. The obstruction is not tied to a particular coherent-state ansatz or to a specific dynamics, environment, or pointer basis; it follows from the structure of massless representations. The theorem should therefore be read as a restriction on the possible target theories of the classical limit: it rules out a Poincar\'e-covariant particle phase-space description of massless quanta, but it does not rule out field-like classical limits.

A natural next step is therefore to apply the same logic to field-like targets. Classical fields are described by field amplitudes and conjugate momenta rather than by positions and momenta of individual particles. Showing the emergence of classical massless fields within a coarse-grained framework would require replacing the particle phase space considered here by an appropriate field phase space, for instance the phase space of electromagnetic field configurations. 
Such a field-like extension requires a separate analysis and lies beyond the scope of the present article. The point established here is 
complementary: if a massless sector has a classical limit, the particle-like target is kinematically obstructed, so the field-like target remains the natural candidate.

Our theorem also offers a novel explanation for why, in classical electrodynamics, the electromagnetic sector is described in terms of fields, whereas the matter sector is described in terms of particles or currents. Possible answers to this question have already been proposed in the literature. Duncan \cite{duncan2012conceptual} emphasizes that the emergence of classical fields requires states with large occupation numbers; this route is naturally available for bosons, but is obstructed for fermions by the exclusion principle. Decoherence-based arguments lead to a similar fermionic-particle/bosonic-field split: for QED-like couplings, fermionic particle or current states and bosonic coherent field states are natural candidates for robust pointer states \cite{KUBLER1973405,PhysRevD.53.7327}. Our result adds a different, purely kinematical argument to this picture. By ruling out a classical particle phase space for massless quanta, our no-go theorem explains why the electromagnetic sector naturally appears classically as a field, while the electron field may appear in the classical limit as particles or currents.

The framework developed here can be detached from the specific case of massless particles and the Poincar\'e group. Given a candidate classical state space $\Gamma$ equipped with the action of a transformation group $G$, one can ask whether there exists a POVM
$C\mapsto\hat{\Pi}_{C}$ satisfying the appropriate coarse-grained operational conditions and transforming covariantly under the corresponding quantum representation of $G$. The framework developed here can therefore be viewed as a general kinematical test for candidate classical target theories, to be applied before specifying a particular dynamics, environment, or pointer basis. 

This perspective may be particularly useful in contexts where the classical target is known, but the underlying quantum theory is not. Quantum gravity is the most important example. There, the classical target is general relativity, while the correct quantum theory remains unknown. A generalized version of the present framework could therefore be used in the reverse direction: instead of deriving general relativity from a given quantum theory, one could ask what kinematical conditions a quantum theory must satisfy in order to admit, after coarse graining, classical geometries as its operational outcomes. In such a setting, the role played here by Poincar\'e covariance would have to be replaced by the appropriate notion of covariance under spacetime diffeomorphisms. Although this extension is far beyond the scope of the present work, it suggests that target-selection criteria derived from the coarse graining approach may provide useful constraints on possible quantum theories from the requirement that they reproduce the correct classical world.

\section*{Acknowledgment}

We thank Č.~Brukner and A.~Soulas for inspiring and helpful discussions. R.F.~acknowledges support as a recipient of a Seal of Excellence Postdoctoral Fellowship of the Austrian Academy of Sciences (OeAW). S.F.~thanks Č. Brukner's group at the University of Vienna for their hospitality during his visit, where part of this work was carried out.

\appendix

\section{Regularity conditions on $C\mapsto\hat{\Pi}_{C}$}\label{Regularity_conditions}

The density representation used in Eq.~\eqref{Pi_C_density} can be justified by imposing suitable regularity conditions on the phase-space POVM $C\mapsto\hat{\Pi}_{C}$. The first condition is insensitivity to measure-zero modifications of phase-space cells. Since finite-resolution cells represent events only up to details of vanishing phase-space measure, two measurable regions that differ by a $\nu$-null set should be assigned the same quantum effect. This is the operator-valued analogue of continuity with respect to the measure $\nu$.
\begin{definition}[$\nu$-continuos POVM]\label{def:null-set-insensitivity}
We say that the POVM $C\mapsto\hat{\Pi}_{C}$ is null-set insensitive, or continuos, with respect to $\nu$ if, for any two measurable regions $C,C'\in\sigma_\Gamma$,
\begin{equation}\label{eq:null-set-insensitivity}
\nu(C\triangle C')=0\quad\Longrightarrow\quad\hat{\Pi}_{C}=\hat{\Pi}_{C'},
\end{equation}
where $C\triangle C':=(C\setminus C')\cup(C'\setminus C)$ denotes the symmetric difference. Equivalently,
\begin{equation}\label{eq:absolute-continuity}
\nu(C)=0\quad\Longrightarrow\quad\hat{\Pi}_{C}=0.
\end{equation}
\end{definition}

The second condition is a smoothness condition on the POVM. Consider the rectangular phase-space region $R_{\mathbf x,\mathbf p}:=\big((-\infty,\mathbf x)\times(-\infty,\mathbf p)\big)\cap\Gamma$, where $(-\infty,\mathbf x):=(-\infty,x^1)\times(-\infty,x^2)\times(-\infty,x^3)$, and analogously for $(-\infty,\mathbf p)$. Operationally, small changes of the endpoints $(\mathbf x,\mathbf p)$ should not produce singular or discontinuous changes in the probabilities assigned to $R_{\mathbf x,\mathbf p}$.
\begin{definition}[Smooth POVM]\label{def:cumulative-smoothness}
We say that the POVM $C\mapsto\hat{\Pi}_{C}$ is cumulatively smooth if, for every density operator $\hat\rho\in S(\mathcal H)$, the map $(\mathbf x,\mathbf p)\mapsto\operatorname{tr}(\hat\rho\hat{\Pi}_{R_{\mathbf x,\mathbf p}})$ admits a continuous mixed derivative $\partial_{\mathbf x}\partial_{\mathbf p}$, where $\partial_{\mathbf x}\partial_{\mathbf p}:=\partial_{x^1}\partial_{x^2}\partial_{x^3}\partial_{p^1}\partial_{p^2}\partial_{p^3}
$. We denote the corresponding operator derivative by $\partial_{\mathbf x}\partial_{\mathbf p}\hat{\Pi}_{R_{\mathbf x,\mathbf p}}$, defined by $\partial_{\mathbf x}\partial_{\mathbf p}\operatorname{tr}(\hat\rho\hat{\Pi}_{R_{\mathbf x,\mathbf p}})=\operatorname{tr}(\hat\rho\,\partial_{\mathbf x}\partial_{\mathbf p}\hat{\Pi}_{R_{\mathbf x,\mathbf p}})$ for every $\hat\rho\in S(\mathcal H)$.
\end{definition}

\begin{proposition}[Density representation of the phase-space POVM]\label{prop:density-representation}
Assume that the POVM $C\mapsto\hat{\Pi}_{C}$ is continuos in the sense of Definition~\ref{def:null-set-insensitivity} and smooth in the sense of Definition~\ref{def:cumulative-smoothness}. Then, for every measurable region $C\in\sigma_\Gamma$, the phase-space effect $\hat{\Pi}_{C}$ admits the integral representation
\begin{equation}\label{Pi_C_int_C_rho_1}
\hat{\Pi}_{C}=\int_{C}d^3\mathbf x\,d^3\mathbf p\,\partial_{\mathbf x}\partial_{\mathbf p}\hat{\Pi}_{R_{\mathbf x,\mathbf p}},
\end{equation}
\end{proposition}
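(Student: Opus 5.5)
The plan is to read Eq.~\eqref{Pi_C_int_C_rho_1} as a Radon--Nikodym statement for operator-valued measures, and to obtain it first on rectangles and then extend to all of $\sigma_\Gamma$ by a monotone class argument.

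\textbf{Step 1: scalar measures.} Fix a density operator $\hat\rho$ and set $m_{\hat\rho}(C):=\operatorname{tr}(\hat\rho\hat\Pi_C)$. By Eqs.~\eqref{identity_C_i}--\eqref{additivity} this is a finite positive measure on $\Gamma$. By Definition~\ref{def:null-set-insensitivity} it is absolutely continuous with respect to $\nu$.

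\textbf{Step 2: the density on rectangles.} The cumulative distribution function $F_{\hat\rho}(\mathbf x,\mathbf p):=m_{\hat\rho}(R_{\mathbf x,\mathbf p})$ has, by Definition~\ref{def:cumulative-smoothness}, a continuous mixed derivative $f_{\hat\rho}:=\partial_{\mathbf x}\partial_{\mathbf p}F_{\hat\rho}$. The point $\mathbf p=\mathbf 0$ is excluded from $\Gamma$, but it is a $\nu$-null set, so we may harmlessly set $F$ on $\mathbb R^6$ as the measure of the truncated orthant.
\begin{itemize}
\item The inclusion--exclusion formula expresses $m_{\hat\rho}$ of any bounded half-open box $B$ as an alternating sum of $2^6$ values of $F_{\hat\rho}$ at its vertices.
\item Iterated application of the fundamental theorem of calculus, one coordinate at a time, turns this sum into $\int_B f_{\hat\rho}\,d\nu$. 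Continuity of the mixed derivative is what justifies the iterated integration. In particular $f_{\hat\rho}\ge0$, as a limit of nonnegative box measures divided by box volumes.
\end{itemize}

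\textbf{Step 3: extension to all measurable sets.} Boxes form a $\pi$-system generating the Borel sets. The two measures $m_{\hat\rho}$ and $f_{\hat\rho}\,d\nu$ agree on boxes. They are finite: $m_{\hat\rho}(\Gamma)=1$, and $\int f_{\hat\rho}\,d\nu=1$ by monotone convergence over increasing boxes. Dynkin's $\pi$--$\lambda$ theorem then makes them equal on all Borel sets. The Lebesgue-completion sets are handled by Step 1 together with null-set insensitivity.

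\textbf{Step 4: lifting to operators.} Define $\hat\pi_{(\mathbf x,\mathbf p)}:=\partial_{\mathbf x}\partial_{\mathbf p}\hat\Pi_{R_{\mathbf x,\mathbf p}}$ as the weak derivative supplied by Definition~\ref{def:cumulative-smoothness}. Then $\operatorname{tr}(\hat\rho\,\hat\pi_{(\mathbf x,\mathbf p)})=f_{\hat\rho}(\mathbf x,\mathbf p)$, so Step 3 gives
\begin{equation}
\operatorname{tr}(\hat\rho\hat\Pi_C)=\int_C d\nu\,\operatorname{tr}(\hat\rho\,\hat\pi_{(\mathbf x,\mathbf p)})
\end{equation}
for every state $\hat\rho$. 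Taking pure states and polarizing gives the equality of all matrix elements. The right-hand side of Eq.~\eqref{Pi_C_int_C_rho_1} is understood as the weak integral, which is bounded by the left-hand side, so the two operators coincide.

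\textbf{Main obstacle.} The delicate point is Step 4: showing that $\hat\pi_{(\mathbf x,\mathbf p)}$ is a genuine, positive, bounded operator. A priori we only have a family of functionals $\hat\rho\mapsto f_{\hat\rho}(\mathbf x,\mathbf p)$, and these may be unbounded. I would first establish linearity and positivity of $\hat\rho\mapsto f_{\hat\rho}$ from the linearity of derivatives. I would then interpret $\hat\pi_{(\mathbf x,\mathbf p)}$ as a positive sesquilinear form, as the definition implicitly intends, and require only weak measurability of $(\mathbf x,\mathbf p)\mapsto\langle\psi|\hat\pi_{(\mathbf x,\mathbf p)}|\psi\rangle$. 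That is enough for the weak integral, since finiteness of $\int_C f_\psi\,d\nu\le1$ holds automatically.
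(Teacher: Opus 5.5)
Your proposal is correct and follows the same route as the paper. Both arguments take the scalar probability measures $\operatorname{tr}(\hat\rho\hat\Pi_C)$, identify their density with the continuous mixed derivative on rectangles, extend to all of $\sigma_\Gamma$ by the $\pi$--$\lambda$ uniqueness theorem for finite measures, and lift to operators because the identity holds for every state. The differences are minor: the paper uses the orthants $R_{\mathbf x,\mathbf p}$ themselves as the generating $\pi$-system, while your bounded boxes with inclusion--exclusion avoid controlling $F_{\hat\rho}$ at $-\infty$, and you also handle the Lebesgue completion and the excluded set $\mathbf p=\mathbf 0$ explicitly, which the paper glosses over.

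The ``main obstacle'' you flag is not a real one. The paper's smoothness definition already posits $\partial_{\mathbf x}\partial_{\mathbf p}\hat\Pi_{R_{\mathbf x,\mathbf p}}$ as an operator. Even without that, take a sequence of boxes $B$ shrinking to $(\mathbf x,\mathbf p)$: the positive operators $\hat\Pi_{B}/\nu(B)$ then have quadratic forms converging for every vector, by your Step 2. The uniform boundedness principle therefore makes the limit a bounded positive operator, not merely a sesquilinear form.
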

\begin{proof}
The countable additivity and normalization of the POVM imply that, for every density operator $\hat{\rho}\in S(\mathcal H)$, the prescription $\nu_{\hat{\rho}}(C):=\operatorname{tr}(\hat{\rho}\hat{\Pi}_{C})$ defines a probability measure on $(\Gamma,\sigma_\Gamma)$. By Eq.~\eqref{eq:absolute-continuity}, this measure is absolutely continuous with respect to the Lebesgue measure $\nu$, i.e., $\nu(C)=0 \implies \nu_{\hat{\rho}}(C)=0$. Therefore, by the Radon--Nikodym theorem \cite[Theorem~4.2.2]{cohn2013measure}, there exists a function $f_{\hat{\rho}}\in L^1(\Gamma,\nu)$, unique up to measure zero sets, such that
\begin{equation}\label{tr_Pi_rho}
\operatorname{tr}(\hat{\rho}\hat{\Pi}_{C})=\int_{C}d^3\mathbf x\,d^3\mathbf p\,f_{\hat{\rho}}(\mathbf x,\mathbf p)
\end{equation}
for every $C\in\sigma_\Gamma$.

By the smooth condition, the probability $F_{\hat{\rho}}(\mathbf x,\mathbf p):=\operatorname{tr}(\hat{\rho}\hat{\Pi}_{R_{\mathbf x,\mathbf p}})$ admits the mixed derivative
\begin{equation}
\partial_{\mathbf x}\partial_{\mathbf p}F_{\hat{\rho}}(\mathbf x,\mathbf p)=\operatorname{tr}\!\left(\hat{\rho}\,\partial_{\mathbf x}\partial_{\mathbf p}\hat{\Pi}_{R_{\mathbf x,\mathbf p}}\right).
\end{equation}
Hence, the function
\begin{equation}
g_{\hat{\rho}}(\mathbf x,\mathbf p):=\operatorname{tr}\!\left(\hat{\rho}\,\partial_{\mathbf x}\partial_{\mathbf p}\hat{\Pi}_{R_{\mathbf x,\mathbf p}}\right)
\end{equation}
reproduces the measure $\nu_{\hat{\rho}}$ on all rectangles $R_{\mathbf x,\mathbf p}$:
\begin{equation}
\nu_{\hat{\rho}}(R_{\mathbf x,\mathbf p})=\int_{R_{\mathbf x,\mathbf p}}d^3\mathbf x'\,d^3\mathbf p'\,g_{\hat{\rho}}(\mathbf x',\mathbf p').
\end{equation}
Since these rectangles generate $\sigma_\Gamma$, the uniqueness theorem for finite measures implies that the same equality holds for every measurable region $C\in\sigma_\Gamma$:
\begin{equation}
\operatorname{tr}(\hat{\rho}\hat{\Pi}_{C})=\int_{C}d^3\mathbf x\,d^3\mathbf p\,\operatorname{tr}\!\left(\hat{\rho}\,\partial_{\mathbf x}\partial_{\mathbf p}\hat{\Pi}_{R_{\mathbf x,\mathbf p}}\right).
\end{equation}
Since this identity holds for every density operator $\hat{\rho}\in S(\mathcal H)$, the equality must hold for the POVM element alone, leading to Eq.~\eqref{Pi_C_int_C_rho_1}.
\end{proof}

We now want to reinterpret the right-hand side of Eq.~\eqref{Pi_C_int_C_rho_1} as the integral of an operator-valued density. However, there is an ambiguity in how the operator-valued measure element $d^3\mathbf x\,d^3\mathbf p\,\partial_{\mathbf x}\partial_{\mathbf p}\hat{\Pi}_{R_{\mathbf x,\mathbf p}}$ can be decomposed into a scalar phase-space measure and an operator-valued density. For any measurable positive function $\rho(\mathbf x,\mathbf p)>0$, we may define
\begin{equation}\label{dmu}
d\mu(\mathbf x,\mathbf p):=\rho(\mathbf x,\mathbf p)\,d^3\mathbf x\,d^3\mathbf p,\qquad \hat \pi_{(\mathbf x,\mathbf p)}:=\frac{1}{\rho(\mathbf x,\mathbf p)}\,\partial_{\mathbf x}\partial_{\mathbf p}\hat{\Pi}_{R_{\mathbf x,\mathbf p}},
\end{equation}
so that
\begin{equation}\label{Pi_x_p}
d\mu(\mathbf x,\mathbf p)\,\hat \pi_{(\mathbf x,\mathbf p)}=d^3\mathbf x\,d^3\mathbf p\,\partial_{\mathbf x}\partial_{\mathbf p}\hat{\Pi}_{R_{\mathbf x,\mathbf p}}.
\end{equation}
To keep the no-go theorem independent of this choice, we leave the positive function $\rho(\mathbf x,\mathbf p)$ unspecified. With this convention, the POVM element associated with any measurable region $C\subseteq\Gamma$ can be written as in Eq.~\eqref{Pi_C_density}.

\section{Relativistic frame dependence and local objectivity}\label{Local_objectivity}

Here, we introduce the space $\mathcal F$ of classical inertial reference frames, modeled as a torsor for $P^\uparrow_+$: for any two frames $F,F'\in\mathcal F$, there is a unique $g\in P^\uparrow_+$ such that $F'=gF$ \cite{FedidaGlowacki2026}. Thus, after choosing a reference frame $F_0\in\mathcal F$, every frame can be written uniquely as $F=gF_0$ for some $g\in P^\uparrow_+$. Elements $F\in\mathcal F$ specify an origin $o_F$ and a Lorentz frame $\lambda_F$. For each frame $F\in\mathcal F$, let $\Sigma_F$ denote the equal-time hypersurface associated with $F$. The corresponding massless-particle phase space is
\begin{equation}
\Gamma_F:=\Sigma_F\times V_0^+,
\end{equation}
where $V_0^+:=\{p\in\mathbb R^{1,3}:p^2=0,\ p^0>0\}$ is the future light cone. Thus a point of $\Gamma_F$ consists of a spacetime point on the simultaneity hypersurface of the observer $F$, together with a future-directed null four-momentum.

Once a frame $F$ is fixed, $\Gamma_F$ can be coordinatized by $(\mathbf x,\mathbf p)\in\mathbb R^3\times\mathbb R^3_*$. More explicitly, if $F$ has origin $o_F$ and Lorentz frame $\lambda_F$, the coordinate map is
\begin{equation}\label{iota_F}
\iota_F:\mathbb R^3\times\mathbb R^3_*\to\Gamma_F,\qquad \iota_F(\mathbf x,\mathbf p):=\left(o_F+\lambda_F(0,\mathbf x),\,\lambda_F(|\mathbf p|,\mathbf p)\right).
\end{equation}
In particular, the phase space used in Sect.~\ref{Phasespace_measurements} is recovered by fixing the reference frame $F_0$ and identifying $\Gamma_{F_0}$ with $\Gamma:=\mathbb R^3\times\mathbb R^3_*$ through Eq.~\eqref{iota_F}.

A relativistic phase-space measurement is therefore not a single frame-independent POVM on $(\Gamma,\sigma_\Gamma)$, but a frame-dependent assignment
\begin{equation}
\mathcal F\ni F\mapsto \hat\Pi^F,
\end{equation}
where, for each $F\in\mathcal F$,
\begin{equation}
\hat\Pi^F:\sigma_{\Gamma_F}\to\mathcal E(\mathcal H),\qquad C\mapsto\hat\Pi^F_{C},
\end{equation}
is a POVM on the measurable phase space $(\Gamma_F,\sigma_{\Gamma_F})$. The POVM introduced in Sect.~\ref{Phasespace_measurements} is recovered by fixing $F_0$ and writing
\begin{equation}
\hat\Pi_{C}:=\hat\Pi^{F_0}_{\iota_{F_0}(C)},\qquad C\in\sigma_\Gamma.
\end{equation}

The Poincar\'e group maps phase-space regions between the frame-dependent phase spaces. If $g=(a,\Lambda)\in P^\uparrow_+$ and $z=(x,p)\in\Gamma_F$, we write $gz:=(\Lambda x+a,\Lambda p)\in\Gamma_{gF}$. Accordingly, for every measurable region $C\subseteq\Gamma_F$, we denote by $gC:=\{gz:z\in C\}\subseteq\Gamma_{gF}$ its transformed phase-space region. The following definition expresses the compatibility between the classical Poincar\'e action on frame-dependent phase-space regions and its unitary implementation in the quantum theory.

\begin{definition}[Poincar\'e-covariant frame-dependent POVM]\label{def:poincare-covariant-frame-povm}
A frame-dependent family of POVMs $F\mapsto\hat\Pi^F$ is called Poincar\'e covariant if, for every $F\in\mathcal F$, every $g\in P^\uparrow_+$, and every $C\in\sigma_{\Gamma_F}$,
\begin{equation}\label{frame_covariance}
\hat\Pi^{gF}_{gC}=\hat U(g)\hat\Pi^F_{C}\hat U^\dagger(g).
\end{equation}
\end{definition}

Similarly to the reference-frame POVM $\hat\Pi$, whose density representation was introduced in Eq.~\eqref{Pi_C_density}, we assume that each frame-dependent POVM $\hat\Pi^F$ admits a density representation,
\begin{equation}\label{frame_density_POVM}
\hat\Pi^F_{C}=\int_{C}d\mu_F(z)\,\hat\pi^F_z,\qquad C\in\sigma_{\Gamma_F}.
\end{equation}
Here $z=(x,p)\in\Gamma_F$ denotes a spacetime point $x\in\Sigma_F$ together with a future-directed null momentum $p\in V_0^+$. For the reference frame $F_0$, this representation is required to agree with Eq.~\eqref{Pi_C_density}, namely
\begin{equation}\label{reference_frame_density_coordinates}
\hat\pi^{F_0}_{\iota_{F_0}(\mathbf x,\mathbf p)}=\hat\pi_{(\mathbf x,\mathbf p)},\qquad \mu_{F_0}=(\iota_{F_0})_*\mu.
\end{equation}
The corresponding normalization condition is
\begin{equation}\label{frame_density_identity}
\int_{\Gamma_F}d\mu_F(z)\,\hat\pi^F_z=\hat{\mathbb I}.
\end{equation}

The scalar measures $\mu_F$ are fixed by a covariant normalization convention. In the reference frame $F_0$, with coordinates $(\mathbf x,\mathbf p)$ on $\Gamma_{F_0}$, we take
\begin{equation}
d\mu_{F_0}=d^3\mathbf x\,d^3\mathbf p\,\rho(\mathbf x,\mathbf p).
\end{equation}
For a generic frame $F=gF_0$, we then define
\begin{equation}\label{measure_pushforward}
\mu_F:=g_*\mu_{F_0},
\end{equation}
where $g_*\mu_{F_0}$ denotes the push-forward of $\mu_{F_0}$ under the map $g:\Gamma_{F_0}\to\Gamma_F$, namely
\begin{equation}
(g_*\mu_{F_0})(C):=\mu_{F_0}(g^{-1}C),\qquad C\in\sigma_{\Gamma_F}.
\end{equation}
This convention fixes the otherwise arbitrary separation between the scalar measure and the operator-valued density: any positive rescaling of the measure can be absorbed into the inverse rescaling of the density, leaving the integrated POVM unchanged.

\begin{figure}[h]
    \centering
    \includegraphics[width=0.8\textwidth]{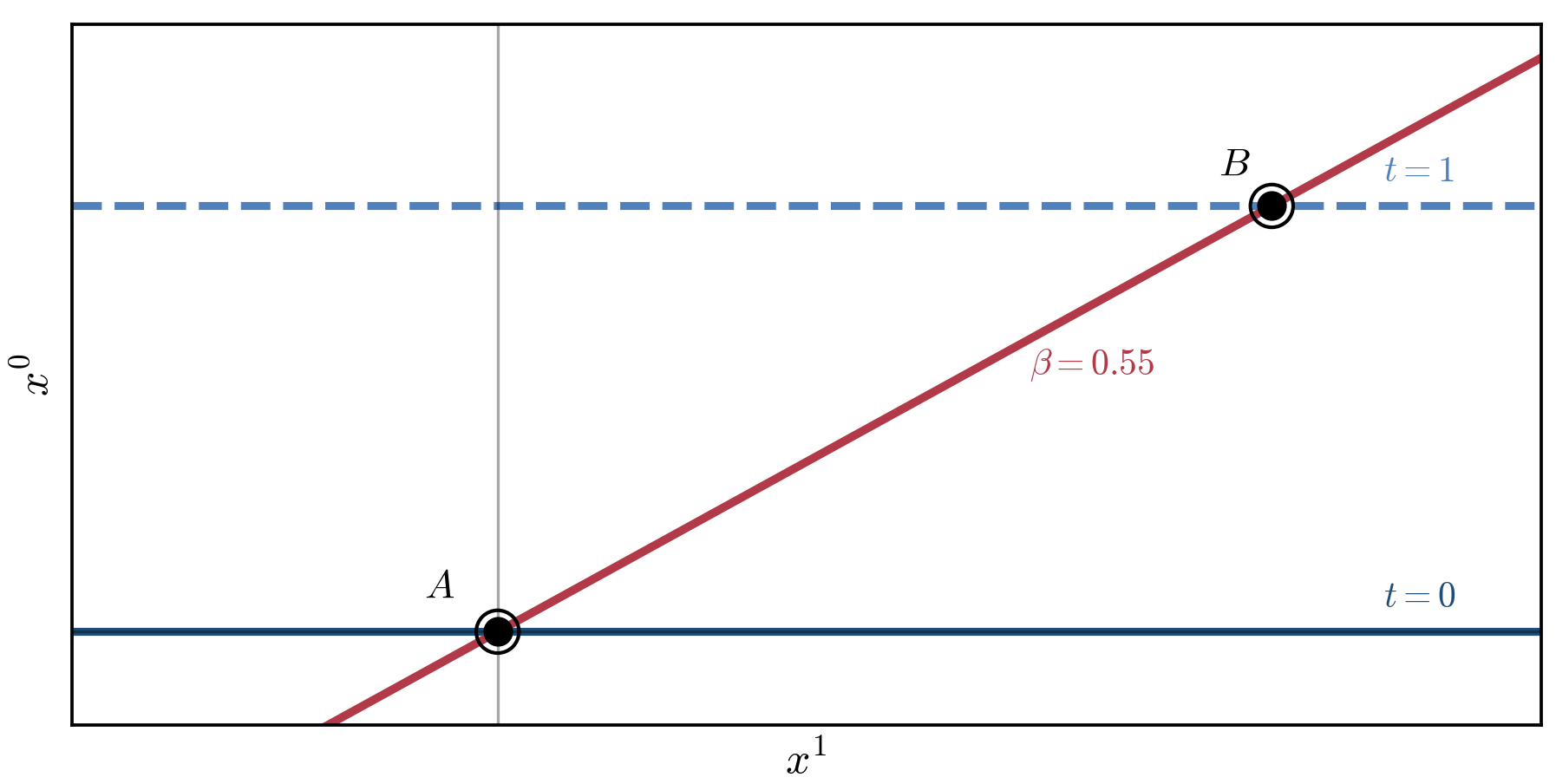}
    \caption{Application of local objectivity. Although phase-space densities may be constructed within different frames of reference, they should describe the same physical reality wherever the corresponding hypersurfaces refer to the same spacetime event. Local objectivity \eqref{def:local-objectivity} therefore requires that the densities assigned from different hypersurfaces agree at their common intersection points.
    The figure shows three different Cauchy hypersurfaces together with the events $A,B$ at which they intersect. If we denote the respective frames as $F_{t=0}$, $F_{t=1}$ and $F_{t_\beta = 0}$, local objectivity implies $\hat \pi_{A}^{F_{t=0}} = \hat \pi_{A}^{F_{t_\beta=0}}$ and $\hat \pi_{B}^{F_{t=1}} = \hat \pi_{B}^{F_{t_\beta=0}}$.}
    \label{fig:local_objectivity}
\end{figure}

For a generic regular frame-dependent POVM, Eq.~\eqref{frame_density_POVM} only gives densities $\hat\pi^F_z$ defined separately on each $\Gamma_F$. To identify these densities across different hypersurfaces, we introduce a local consistency condition. In the Heisenberg picture, the state $\hat\rho$ represents the same preparation of the quantum system, while the dependence on the observer's hypersurface is carried by the effects $\hat\Pi^F_{C}$. Thus, for different frames $F$ and $F'$, the quantities $\operatorname{tr}(\hat\rho\,\hat\Pi^F_{C})$ and $\operatorname{tr}(\hat\rho\,\hat\Pi^{F'}_{C'})$ are probabilities for two different measurement effects evaluated on the same state, and are therefore operationally comparable. However, arbitrary finite cells $C\subseteq\Gamma_F$ and $C'\subseteq\Gamma_{F'}$ do not in general represent the same classical alternative. Therefore, local objectivity should not require equality of probabilities for arbitrary finite cells on different hypersurfaces. The meaningful comparison is local. When two hypersurfaces intersect, as in Fig.~\ref{fig:local_objectivity}, they contain common physical events. At a common phase-space point $z=(x,p)\in\Gamma_F\cap\Gamma_{F'}$, the two observers describe the same spacetime event $x$ and the same future-directed null momentum $p$. Local objectivity is the condition that, in the limit in which the measured neighbourhoods shrink to this common point, both observers assign the same probability per unit phase-space measure. This gives the following definition.

\begin{definition}[Local objectivity]\label{def:local-objectivity}
Let $\{(\hat\Pi^F,\mu_F)\}_{F\in\mathcal F}$ be a frame-dependent family of POVM--measure pairs, where each $\mu_F$ is the scalar measure used in the density representation of $\hat\Pi^F$. We say that this family satisfies local objectivity if the following condition holds. Let $F,F'\in\mathcal F$, and let $z=(x,p)\in\Gamma_F\cap\Gamma_{F'}$, i.e. $x\in\Sigma_F\cap\Sigma_{F'}$ and $p\in V_0^+$. For any pair of regular shrinking neighbourhoods $B_\epsilon^F(z)\subset\Gamma_F$ and $B_\epsilon^{F'}(z)\subset\Gamma_{F'}$ centered at $z$, with nonzero $\mu_F$- and $\mu_{F'}$-measure, and for every state $\hat\rho\in S(\mathcal H)$, one has
\begin{equation}\label{local_objectivity}
\lim_{\epsilon\to0}\left[
\frac{\operatorname{tr}\!\left(\hat\rho\,\hat\Pi^F_{B_\epsilon^F(z)}\right)}{\mu_F(B_\epsilon^F(z))}
-
\frac{\operatorname{tr}\!\left(\hat\rho\,\hat\Pi^{F'}_{B_\epsilon^{F'}(z)}\right)}{\mu_{F'}(B_\epsilon^{F'}(z))}
\right]=0.
\end{equation}
\end{definition}

This formulation is operational: it compares probabilities assigned to small but finite phase-space neighbourhoods, normalized by their reference measures, and evaluated on the same quantum state. Only under additional regularity assumptions is it translated into a pointwise statement about densities.

\begin{proposition}[Local objectivity and hypersurface-independent densities]\label{prop:local-objectivity-density}
Assume that the densities in Eq.~\eqref{frame_density_POVM} are continuous in the weak operator sense. If local objectivity holds, then whenever two frame-dependent phase spaces intersect at a point $z\in\Gamma_F\cap\Gamma_{F'}$, the corresponding densities agree:
\begin{equation}\label{density_hypersurface_independence}
\hat\pi^F_z=\hat\pi^{F'}_z.
\end{equation}
\end{proposition}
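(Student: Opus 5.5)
The argument reduces local objectivity to a Lebesgue differentiation statement, carried out separately on each of the two frames and then compared. Fix $z=(x,p)\in\Gamma_F\cap\Gamma_{F'}$ and a state $\hat\rho$. By the density representation \eqref{frame_density_POVM},
\begin{equation*}
\frac{\operatorname{tr}\!\left(\hat\rho\,\hat\Pi^F_{B_\epsilon^F(z)}\right)}{\mu_F(B_\epsilon^F(z))}=\frac{1}{\mu_F(B_\epsilon^F(z))}\int_{B_\epsilon^F(z)}d\mu_F(z')\,\operatorname{tr}\!\left(\hat\rho\,\hat\pi^F_{z'}\right),
\end{equation*}
which is the $\mu_F$-average of the scalar function $f^F_{\hat\rho}(z'):=\operatorname{tr}(\hat\rho\,\hat\pi^F_{z'})$ over the neighbourhood.

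The first step is to show that this average converges to $f^F_{\hat\rho}(z)$ as $\epsilon\to0$, using weak operator continuity of $z'\mapsto\hat\pi^F_{z'}$.

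\begin{itemize}
\item For a rank-one or finite-rank $\hat\rho$, $f^F_{\hat\rho}$ is a finite sum of matrix elements $\braket{\phi|\hat\pi^F_{z'}|\phi}$, so it is continuous.
\item The average of a continuous function over shrinking neighbourhoods with positive measure converges to its value at the centre. This is elementary: bound $|f(z')-f(z)|$ uniformly on $B_\epsilon$.
\item For a general density operator, approximate $\hat\rho$ in trace norm by finite-rank operators. This needs a uniform bound on $\|\hat\pi^F_{z'}\|$ near $z$. Weak continuity gives pointwise boundedness on compact neighbourhoods, and the uniform boundedness principle then gives the norm bound.
\end{itemize}

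This uniformity is the only delicate point. If it resists, I would restrict the objectivity condition to finite-rank or pure states. That restriction is enough, because matrix elements in pure states determine the operator.

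The second step applies the same argument to $F'$, giving $\operatorname{tr}(\hat\rho\,\hat\pi^{F'}_{z'})/\,$-averages converging to $\operatorname{tr}(\hat\rho\,\hat\pi^{F'}_z)$. Both limits exist separately, so \eqref{local_objectivity} yields
\begin{equation*}
\operatorname{tr}\!\left(\hat\rho\,\hat\pi^F_z\right)=\operatorname{tr}\!\left(\hat\rho\,\hat\pi^{F'}_z\right)
\end{equation*}
for every $\hat\rho$.

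The third step takes $\hat\rho=\ket{\phi}\bra{\phi}$ for arbitrary unit $\phi$. The two positive operators then have equal quadratic forms, and polarization gives $\hat\pi^F_z=\hat\pi^{F'}_z$.

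There is one caveat about which neighbourhoods are allowed. The neighbourhoods $B^F_\epsilon(z)$ live in different hypersurfaces, and "regular shrinking" must mean that they are contained in sets shrinking to $z$ in the topology of $\Gamma_F$. I will state this interpretation explicitly, since the continuity argument only needs containment in a shrinking family of open sets around $z$.
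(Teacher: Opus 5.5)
Your proposal is correct and follows the paper's own argument. The paper writes the normalized probabilities as $\mu_F$- and $\mu_{F'}$-averages of $\operatorname{tr}(\hat\rho\,\hat\pi_{z'})$, takes the $\epsilon\to0$ limit separately on each hypersurface using weak continuity, and concludes from local objectivity that all expectation values, and hence the operators, coincide. The paper only asserts the convergence step and the final identification. Your justification of the limit (including the remark that pure states already suffice) and your explicit polarization step fill in details it leaves implicit.
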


\begin{proof}
By the density representation, for every state $\hat\rho\in S(\mathcal H)$,
\begin{equation}
\frac{\operatorname{tr}\!\left(\hat\rho\,\hat\Pi^F_{B_\epsilon^F(z)}\right)}{\mu_F(B_\epsilon^F(z))}=\frac{1}{\mu_F(B_\epsilon^F(z))}\int_{B_\epsilon^F(z)}d\mu_F(z')\,\operatorname{tr}\!\left(\hat\rho\,\hat\pi^F_{z'}\right).
\end{equation}
By weak continuity of the density, the right-hand side converges to $\operatorname{tr}(\hat\rho\,\hat\pi^F_z)$ as $\epsilon\to0$. The same argument for $F'$ gives the limit $\operatorname{tr}(\hat\rho\,\hat\pi^{F'}_z)$. Local objectivity therefore implies
\begin{equation}
\operatorname{tr}\!\left(\hat\rho\,\hat\pi^F_z\right)=\operatorname{tr}\!\left(\hat\rho\,\hat\pi^{F'}_z\right)
\end{equation}
for every state $\hat\rho\in S(\mathcal H)$, and hence $\hat\pi^F_z=\hat\pi^{F'}_z$ in the weak operator sense.
\end{proof}

When the condition of Proposition~\ref{prop:local-objectivity-density} holds, the frame-dependent densities can be consistently identified at common physical events. We may then define a single operator-valued density $\hat\pi_z$ on $\mathbb R^{1,3}\times V_0^+$ by choosing any frame $F$ such that $z\in\Gamma_F$ and setting
\begin{equation}
\hat\pi_z:=\hat\pi^F_z.
\end{equation}
Equation~\eqref{density_hypersurface_independence} guarantees that this definition is independent of the chosen hypersurface. In this case, Eq.~\eqref{frame_density_POVM} becomes
\begin{equation}\label{frame_density_POVM_independent}
\hat\Pi^F_{C}=\int_{C}d\mu_F(z)\,\hat\pi_z,\qquad C\in\sigma_{\Gamma_F}.
\end{equation}

\begin{proposition}[Microscopic covariance on the reference hypersurface]\label{prop:microscopic-covariance-reference-hypersurface}
Assume that the frame-dependent family of POVMs $F\mapsto\hat\Pi^F$ is Poincar\'e covariant in the sense of Definition~\ref{def:poincare-covariant-frame-povm}. Assume also that each $\hat\Pi^F$ admits a density representation as in Eq.~\eqref{frame_density_POVM}, with covariantly normalized measures $\mu_F$, and that the corresponding densities are continuous in the weak operator sense. If local objectivity holds in the sense of Definition~\ref{def:local-objectivity}, then the reference-frame density transforms covariantly between any two points of the reference phase space related by a Poincar\'e transformation. More precisely, let $z=\iota_{F_0}(\mathbf x,\mathbf p)\in\Gamma_{F_0}$ and let $g\in P^\uparrow_+$ be such that $gz\in\Gamma_{F_0}$. If $gz=\iota_{F_0}(\mathbf x',\mathbf p')$, then
\begin{equation}\label{microscopic_covariance_coordinates_appendix}
\hat U(g)\hat\pi_{(\mathbf x,\mathbf p)}\hat U^\dagger(g)=\hat\pi_{(\mathbf x',\mathbf p')}.
\end{equation}
\end{proposition}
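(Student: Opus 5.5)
The plan is to chain three ingredients: Poincaré covariance of the frame-dependent POVMs, the push-forward convention for the measures $\mu_F$, and the hypersurface independence of densities from Proposition~\ref{prop:local-objectivity-density}. Fix $z=\iota_{F_0}(\mathbf x,\mathbf p)$ and $g$ with $gz\in\Gamma_{F_0}$. The idea is to compute the density of $\hat\Pi^{gF_0}$ at the point $gz\in\Gamma_{gF_0}$ in two ways. Then we use local objectivity at the common point $gz\in\Gamma_{gF_0}\cap\Gamma_{F_0}$ to transfer the result back to the reference frame.

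First, take any measurable $C\subseteq\Gamma_{F_0}$ and combine Eq.~\eqref{frame_covariance} with the density representation \eqref{frame_density_POVM}:
\begin{equation}
\hat\Pi^{gF_0}_{gC}=\hat U(g)\hat\Pi^{F_0}_C\hat U^\dagger(g)=\int_C d\mu_{F_0}(z')\,\hat U(g)\hat\pi^{F_0}_{z'}\hat U^\dagger(g).
\end{equation}
Changing variables $z''=gz'$ and using $\mu_{gF_0}=g_*\mu_{F_0}$ from Eq.~\eqref{measure_pushforward}, the right-hand side becomes $\int_{gC}d\mu_{gF_0}(z'')\,\hat U(g)\hat\pi^{F_0}_{g^{-1}z''}\hat U^\dagger(g)$. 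Since $gC$ ranges over all of $\sigma_{\Gamma_{gF_0}}$, we obtain two densities for the same POVM with respect to the same measure. By Radon--Nikodym uniqueness they agree $\mu_{gF_0}$-a.e.:
\begin{equation}
\hat\pi^{gF_0}_{gz'}=\hat U(g)\hat\pi^{F_0}_{z'}\hat U^\dagger(g).
\end{equation}
To promote this from almost everywhere to everywhere, I would test against a fixed state $\hat\rho$ and use weak continuity of both sides. Continuity of the right-hand side in $z'$ follows from continuity of $\hat\pi^{F_0}$ and continuity of $z'\mapsto gz'$. Two continuous functions that agree a.e. with respect to a measure with full support agree everywhere. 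Full support follows from nondegeneracy, or from $\rho>0$ as fixed in Appendix~\ref{Regularity_conditions}.

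Second, $gz$ lies in both $\Gamma_{gF_0}$ and $\Gamma_{F_0}$ by hypothesis. Proposition~\ref{prop:local-objectivity-density} therefore gives $\hat\pi^{gF_0}_{gz}=\hat\pi^{F_0}_{gz}$. Combining this with the previous step yields $\hat U(g)\hat\pi^{F_0}_z\hat U^\dagger(g)=\hat\pi^{F_0}_{gz}$. Translating into coordinates with Eq.~\eqref{reference_frame_density_coordinates} and $gz=\iota_{F_0}(\mathbf x',\mathbf p')$, which agrees with Eq.~\eqref{g_x_p_prime}, gives Eq.~\eqref{microscopic_covariance_coordinates_appendix}.

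I expect the main obstacle to be the a.e.-to-pointwise upgrade in the first step. Weak continuity of the densities is only stated frame by frame, and full support of $\mu_{gF_0}$ must be justified. I would handle both through the push-forward: $\mu_{gF_0}$ inherits a positive density from $\rho$ because $g$ is a diffeomorphism between the hypersurface phase spaces. A secondary point is the meaning of a "regular shrinking neighbourhood" on $\Gamma_{gF_0}$. This only matters if one invokes local objectivity directly rather than through Proposition~\ref{prop:local-objectivity-density}, so I would rely on that proposition throughout.
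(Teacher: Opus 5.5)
Your proposal is correct and follows the paper's proof essentially step by step. Poincar\'e covariance, the push-forward convention $\mu_{gF_0}=g_*\mu_{F_0}$, and Radon--Nikodym uniqueness give $\hat\pi^{gF_0}_{gz}=\hat U(g)\hat\pi^{F_0}_z\hat U^\dagger(g)$, and then Proposition~\ref{prop:local-objectivity-density} at $gz\in\Gamma_{F_0}\cap\Gamma_{gF_0}$ closes the argument. The paper pulls back to $\Gamma_{F_0}$ where you push forward to $\Gamma_{gF_0}$, which makes no difference, and your state-by-state a.e.-to-pointwise upgrade, using weak continuity and full support of $\mu$ from $\rho>0$, is more careful than the paper's one-line appeal to continuity.
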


\begin{proof}
By Poincar\'e covariance of the integrated POVM, for every measurable $C\in\sigma_{\Gamma_{F_0}}$,
\begin{equation}
\hat\Pi^{gF_0}_{gC}=\hat U(g)\hat\Pi^{F_0}_{C}\hat U^\dagger(g).
\end{equation}
Using the density representations and the covariant normalization of the measures, $\mu_{gF_0}=g_*\mu_{F_0}$, this becomes
\begin{equation}
\int_{gC}d\mu_{gF_0}(z')\,\hat\pi^{gF_0}_{z'}=\int_{C}d\mu_{F_0}(z)\,\hat U(g)\hat\pi^{F_0}_z\hat U^\dagger(g).
\end{equation}
Changing variables $z'=gz$ on the left-hand side gives
\begin{equation}
\int_{C}d\mu_{F_0}(z)\,\hat\pi^{gF_0}_{gz}=\int_{C}d\mu_{F_0}(z)\,\hat U(g)\hat\pi^{F_0}_z\hat U^\dagger(g).
\end{equation}
Since this holds for every measurable $C$, uniqueness of the weak Radon--Nikodym density implies
\begin{equation}\label{density_covariance_between_frames}
\hat\pi^{gF_0}_{gz}=\hat U(g)\hat\pi^{F_0}_z\hat U^\dagger(g)
\end{equation}
for $\mu_{F_0}$-almost every $z\in\Gamma_{F_0}$. By weak continuity of the densities, the equality holds pointwise.

Now suppose that $gz\in\Gamma_{F_0}$. Since $gz\in\Gamma_{gF_0}$ by construction, the point $gz$ lies in the intersection $\Gamma_{F_0}\cap\Gamma_{gF_0}$. Local objectivity, together with Proposition~\ref{prop:local-objectivity-density}, gives
\begin{equation}
\hat\pi^{gF_0}_{gz}=\hat\pi^{F_0}_{gz}.
\end{equation}
Combining this with Eq.~\eqref{density_covariance_between_frames}, we obtain
\begin{equation}
\hat U(g)\hat\pi^{F_0}_z\hat U^\dagger(g)=\hat\pi^{F_0}_{gz}.
\end{equation}
Finally,  the reference-frame coordinates $z=\iota_{F_0}(\mathbf x,\mathbf p)$ and $gz=\iota_{F_0}(\mathbf x',\mathbf p')$, Eq.~\eqref{density_covariance_between_frames} together with local objectivity gives directly Eq.~\eqref{microscopic_covariance_coordinates_appendix}.
\end{proof}

\section{Contexts and coarse partitions}\label{app:contexts}

In the main text, the set of coarse partitions $\mathcal P_{\rm CG}$ was defined by inferential completeness, Eq.~\eqref{dynkin_inferential_completeness}, and closure under coarsening, Eq.~\eqref{closure_under_coarsening_definition}. In this appendix we give an equivalent contextual definition. The contextual formulation makes explicit the role of finite statistical inference within a single classical domain, while the ordinary Dynkin closure of the union of all contexts accounts for the ideal inferential completeness of the full coarse-grained sector.

We use the notation $\operatorname{Ev}(P)$ and $\mathcal G(\mathcal A)$ introduced in the main text. The only new ingredient is the finite additive analogue of the Dynkin closure. For a nonempty finite family of partitions $\mathfrak F=\{P^{(1)},\ldots,P^{(m)}\}$, let $\mathcal D_{\rm fin}(\mathfrak F)$ be the smallest family of measurable subsets of $\Gamma$ containing $\mathcal G(\mathfrak F)$ and closed under the following finite operations:
\begin{equation}
A,B\in\mathcal D_{\rm fin}(\mathfrak F),\qquad A\cap B=\emptyset\quad\Longrightarrow\quad A\sqcup B\in\mathcal D_{\rm fin}(\mathfrak F),
\end{equation}
and
\begin{equation}
A,B\in\mathcal D_{\rm fin}(\mathfrak F),\qquad A\subseteq B\quad\Longrightarrow\quad B\setminus A\in\mathcal D_{\rm fin}(\mathfrak F).
\end{equation}
Unlike the Dynkin closure $\mathcal D(\mathcal P_{\rm CG})$ used in the main text, $\mathcal D_{\rm fin}(\mathfrak F)$ is generated only from finitely many partitions and only by finite additive operations. It therefore captures the finite statistical inferences available within a single classical context.

\begin{definition}[Context]\label{def:context}
A context is a nonempty subset $\mathcal C\subseteq\mathcal P_\Gamma$ satisfying the following finite inferential closure condition. For every nonempty finite family $\mathfrak F\subseteq\mathcal C$, and for every partition $P=\{C_i\}_{i=1}^N\in\mathcal P_\Gamma$,
\begin{equation}\label{context_closure}
\left[C_i\in\mathcal D_{\rm fin}(\mathfrak F)\quad\text{for all }i=1,\ldots,N\right]\quad\Longrightarrow\quad P\in\mathcal C.
\end{equation}
\end{definition}

\begin{definition}[Contextual set of coarse partitions]\label{def:contextual-coarse-partitions}
A contextual set of coarse partitions is a nonempty subset $\mathcal P_{\rm CG}\subseteq\mathcal P_\Gamma$ for which there exists a nonempty collection $\mathfrak C$ of contexts such that
\begin{equation}\label{contextual_union}
\mathcal P_{\rm CG}=\bigcup_{\mathcal C\in\mathfrak C}\mathcal C,
\end{equation}
and such that the $\nu$-completed Dynkin closure of $\mathcal P_{\rm CG}$ is the full measurable phase-space event space,
\begin{equation}\label{contextual_inferential_completeness}
\overline{\mathcal D}^{\,\nu}(\mathcal P_{\rm CG})=\sigma_\Gamma.
\end{equation}
\end{definition}

This definition separates two levels of inference. Inside a context $\mathcal C$, finite statistical inferences are required to remain directly classical. Across the full set $\mathcal P_{\rm CG}$, the $\nu$-completed Dynkin closure condition expresses ideal inferential completeness: arbitrary measurable events may be assigned probabilities in principle, up to operationally irrelevant $\nu$-null modifications, but need not correspond to directly classical measurements.

\begin{proposition}[Equivalence of the direct and contextual definitions]\label{prop:contextual-main-equivalence}
A nonempty subset $\mathcal P_{\rm CG}\subseteq\mathcal P_\Gamma$ is a set of coarse partitions in the sense of Definition~\ref{def:family-coarse-partitions} if and only if it is a contextual set of coarse partitions in the sense of Definition~\ref{def:contextual-coarse-partitions}.
\end{proposition}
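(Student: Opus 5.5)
We prove the two implications separately, working directly with the closure operators $\mathcal D_{\rm fin}$ and $\mathcal D$.

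\emph{Contextual $\Rightarrow$ direct.} Let $\mathcal P_{\rm CG}=\bigcup_{\mathcal C\in\mathfrak C}\mathcal C$ satisfy Eq.~\eqref{contextual_inferential_completeness}. Inferential completeness, Eq.~\eqref{dynkin_inferential_completeness}, is then given, so only closure under coarsening, Eq.~\eqref{closure_under_coarsening_definition}, remains. Take $P\in\mathcal C$ for some context $\mathcal C$, and let $P'\succeq P$. Each cell $C'\in P'$ is a disjoint union of the cells of $P$ contained in it. This holds because each cell of $P$ lies in some cell of $P'$, and both families partition $\Gamma$; so a cell of $P$ meeting $C'$ must lie in $C'$. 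Hence $C'\in\operatorname{Ev}(P)\subseteq\mathcal G(\{P\})\subseteq\mathcal D_{\rm fin}(\{P\})$. Applying Eq.~\eqref{context_closure} with $\mathfrak F=\{P\}$ gives $P'\in\mathcal C\subseteq\mathcal P_{\rm CG}$.

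\emph{Direct $\Rightarrow$ contextual.} Given $\mathcal P_{\rm CG}$ satisfying Definition~\ref{def:family-coarse-partitions}, we choose the collection $\mathfrak C:=\{\mathcal C_P:P\in\mathcal P_{\rm CG}\}$, where
\[
\mathcal C_P:=\{P'\in\mathcal P_\Gamma: P'\succeq P\}.
\]
By closure under coarsening, $\mathcal C_P\subseteq\mathcal P_{\rm CG}$. Since $P\in\mathcal C_P$, the union of the $\mathcal C_P$ is exactly $\mathcal P_{\rm CG}$. Inferential completeness carries over unchanged, because the union is the same set. The remaining step is to check that each $\mathcal C_P$ is a context.

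\emph{Main step: $\mathcal C_P$ satisfies Eq.~\eqref{context_closure}.} Take a finite family $\mathfrak F=\{P^{(1)},\dots,P^{(m)}\}\subseteq\mathcal C_P$. Every cell of every $P^{(k)}$ is a union of cells of $P$, so $\mathcal G(\mathfrak F)\subseteq\operatorname{Ev}(P)$. The family $\operatorname{Ev}(P)$ is a finite Boolean algebra, and it is closed under disjoint unions and proper differences. Since $\mathcal D_{\rm fin}(\mathfrak F)$ is the smallest family containing $\mathcal G(\mathfrak F)$ with these closure properties, it follows that $\mathcal D_{\rm fin}(\mathfrak F)\subseteq\operatorname{Ev}(P)$.

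Now suppose a partition $P''=\{C_i\}$ has all its cells in $\mathcal D_{\rm fin}(\mathfrak F)$. Then each $C_i$ is a union of cells of $P$. Because the $C_i$ partition $\Gamma$, every cell of $P$ is contained in some $C_i$ (we assume partition cells are nonempty; empty cells impose no condition). So $P''\succeq P$, which means $P''\in\mathcal C_P$.

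I expect this step to be the main obstacle, and specifically its bookkeeping around empty cells. The rest of the argument uses only the definitions.
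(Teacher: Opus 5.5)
Your proof is correct and follows the paper's argument step by step. You use the contexts $\mathcal C_P$ of all coarsenings of $P$ and the inclusion $\mathcal D_{\rm fin}(\mathfrak F)\subseteq\operatorname{Ev}(P)$ for the direct-to-contextual direction, and the context closure with $\mathfrak F=\{P\}$ for the converse; your justification that cells of a coarsening are unions of cells of $P$, and your empty-cell remark, supply details the paper asserts without proof and are not a real obstacle.
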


\begin{proof}
Assume first that $\mathcal P_{\rm CG}$ satisfies the direct definition in the main text. Thus $\overline{\mathcal D}^{\,\nu}(\mathcal P_{\rm CG})=\sigma_\Gamma$ and $\mathcal P_{\rm CG}$ is closed under coarsening. For each $P\in\mathcal P_{\rm CG}$, define
\begin{equation}
\mathcal C_P:=\{P'\in\mathcal P_\Gamma:P'\succeq P\},
\end{equation}
the set of all coarsenings of $P$. By closure under coarsening, $\mathcal C_P\subseteq\mathcal P_{\rm CG}$. We now show that $\mathcal C_P$ is a context. Let $\mathfrak F\subseteq\mathcal C_P$ be a nonempty finite family, and let $Q=\{Q_a\}_{a=1}^M\in\mathcal P_\Gamma$ be such that $Q_a\in\mathcal D_{\rm fin}(\mathfrak F)$ for all $a=1,\ldots,M$. Since every partition in $\mathfrak F$ is a coarsening of $P$, every event resolved by any partition in $\mathfrak F$ is a union of cells of $P$. Therefore
\begin{equation}
\mathcal G(\mathfrak F)\subseteq\operatorname{Ev}(P).
\end{equation}
Moreover, $\operatorname{Ev}(P)$ is closed under finite disjoint unions and relative complements. Hence
\begin{equation}
\mathcal D_{\rm fin}(\mathfrak F)\subseteq\operatorname{Ev}(P).
\end{equation}
It follows that each cell $Q_a$ is a union of cells of $P$, and therefore $Q\succeq P$. Hence $Q\in\mathcal C_P$. This proves that $\mathcal C_P$ is a context.

Since $P\in\mathcal C_P$ for every $P\in\mathcal P_{\rm CG}$, and since each $\mathcal C_P$ is contained in $\mathcal P_{\rm CG}$, we have
\begin{equation}
\mathcal P_{\rm CG}=\bigcup_{P\in\mathcal P_{\rm CG}}\mathcal C_P.
\end{equation}
Together with $\overline{\mathcal D}^{\,\nu}(\mathcal P_{\rm CG})=\sigma_\Gamma$, this proves that $\mathcal P_{\rm CG}$ is a contextual set of coarse partitions.

Conversely, assume that $\mathcal P_{\rm CG}$ is a contextual set of coarse partitions. By definition, $\overline{\mathcal D}^{\,\nu}(\mathcal P_{\rm CG})=\sigma_\Gamma$. It remains to prove closure under coarsening. Let $P\in\mathcal P_{\rm CG}$ and let $P'\succeq P$. Since $\mathcal P_{\rm CG}=\bigcup_{\mathcal C\in\mathfrak C}\mathcal C$, there exists a context $\mathcal C\in\mathfrak C$ such that $P\in\mathcal C$. Every cell of $P'$ is a union of cells of $P$, hence every cell of $P'$ belongs to $\operatorname{Ev}(P)\subseteq\mathcal D_{\rm fin}(\{P\})$. Applying the context closure condition \eqref{context_closure} with $\mathfrak F=\{P\}$ gives $P'\in\mathcal C$. Since $\mathcal C\subseteq\mathcal P_{\rm CG}$, we conclude that $P'\in\mathcal P_{\rm CG}$. Thus $\mathcal P_{\rm CG}$ is closed under coarsening, completing the proof.
\end{proof}

Definition \ref{def:contextual-coarse-partitions} should not be confused with a global noncontextuality assumption \cite{KochenSpecker1967,Spekkens2005,AbramskyBrandenburger2011}. We do not require all possible coarse-grained measurements, nor all possible POVM effects, to be embedded into a single Boolean algebra or a single underlying classical probability space. A context is only an internally closed domain of finite statistical inference. Different contexts may be mutually incompatible, even though their partitions all belong to $\mathcal P_{\rm CG}$.

\begin{figure}[h]
    \centering
    \includegraphics[width=0.5\textwidth]{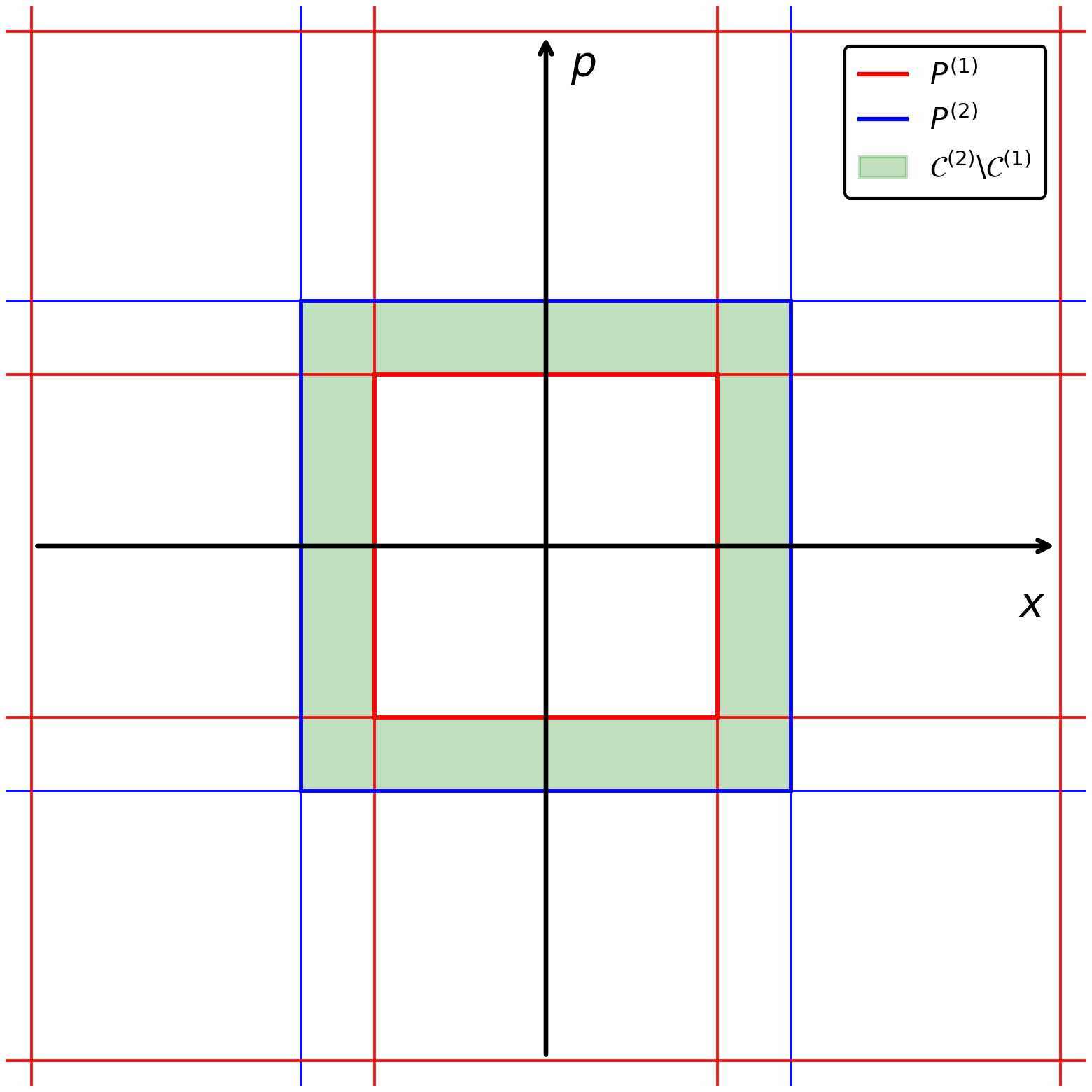}
    \caption{
        A partition may be coarse enough for its cells to behave approximately classically, but combining cells taken from different coarse partitions can generate regions that are much finer than either partition individually. Such regions can be assigned probabilities by formal statistical inference, but they need not correspond to outcomes of any admissible classical measurement. Thus, outside a fixed context, inference can lead to predictions that are not themselves classically testable. The red partition $P^{(1)}$ and the blue partition $P^{(2)}$ may each satisfy Eq.~\eqref{PVMness} when considered separately. However, by recombining cells from the two partitions using Eq.~\eqref{p_operation_2}, one can form the difference $C^{(2)}\setminus C^{(1)}$, shown in green. This derived region is fine-grained and may fail to satisfy Eq.~\eqref{PVMness}. Within a single coarse context this problem does not arise: all statistical predictions obtained by recombining cells remain associated with coarse regions that can again be tested by classical measurements.
    }
    \label{fig:different_contexts}
\end{figure}

This distinction is important. Two partitions $P^{(1)}$ and $P^{(2)}$ may each be directly classical, while still failing to belong to one common context. If they were placed in the same context, the finite additive operations in Eqs.~\eqref{p_operation_1} and~\eqref{p_operation_2} could generate events that are too fine-grained to admit a direct classical interpretation. An example of this is shown in Fig.~\ref{fig:different_contexts}. In such a case, the two partitions should be treated as belonging to distinct contexts, say $\mathcal C^{(1)}$ and $\mathcal C^{(2)}$. This does not mean that an experimenter is forbidden from implementing measurements associated with different contexts. Rather, it means that two individually admissible coarse-grained measurements need not admit a joint classical event structure. What fails is the interpretation of all their set-theoretic recombinations as directly measurable classical events. 

If the recombination of measurements from different contexts produces a fine-grained event, the corresponding probability may still be inferable within the idealized infinite-resource extension encoded by $\overline{\mathcal D}^{\,\nu}(\mathcal P_{\rm CG})=\sigma_\Gamma$. However, the resulting event need not correspond to a directly implementable classical measurement. By contrast, within a fixed context $\mathcal C$, every partition generated by the finite inferential closure condition is required to remain directly classical. Thus each context provides an internally consistent Kolmogorovian event structure, while $\mathcal P_{\rm CG}$ as a whole provides the broader collection of directly classical partitions whose statistics may generate the full measurable phase-space structure by ideal inference.

\section{Minimal informational content}\label{appendix_minimal_information}

Def.~\ref{def:minimal-directional-information-gain} can be formulated in a weaker, more operational way, by directly considering the (relative) Shannon entropy. 

\begin{definition}[Weak minimal directional information gain]\label{def:weak-minimal-directional-information-gain}
A coarse-graining scheme $(\Gamma,\hat\Pi,\mathcal P_{\rm CG})$ satisfies minimal directional information gain with minimal information gain per cell $0 \le M \le 1$ if there exists a measurable partition $\{\Omega_i\}_{i=1}^N$ of $S^2$ angular region $\Omega\subseteq S^2$ with

    \begin{equation} \label{H_log_N_M}
        \frac{H(\{\Omega_i\}_{i=1}^N)}{\log N} \ge M,
    \end{equation}
such that the corresponding phase-space partition belongs to the coarse-grained sector:
\begin{equation}
        \{ \mathbb R ^3 \times Q_{\Omega_i}\}_{i=1}^N \in \mathcal P_{\text{CG}}.
    \end{equation}
\end{definition}

As shown in Appendix \ref{M_delta_proof}, Def.~\ref{def:minimal-directional-information-gain} implies Def.~\ref{def:weak-minimal-directional-information-gain}.

\begin{proposition} \label{M_delta_proposition}
    If a coarse-graining scheme $(\Gamma,\hat\Pi,\mathcal P_{\rm CG})$ satisfies minimal directional information gain with tolerance $\delta$, it satisfies weak minimal directional information gain with 
    
    \begin{equation} \label{M_delta}
    M(\delta) = h_2(\frac{1 -\delta}{2}) := \frac{
    -\frac{1-\delta}{2}\log\!\left(\frac{1-\delta}{2}\right)
    -\frac{1+\delta}{2}\log\!\left(\frac{1+\delta}{2}\right)
    }{\log 2}.
\end{equation}
\end{proposition}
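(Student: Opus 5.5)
The plan is to use the dichotomic angular partition that Definition~\ref{def:minimal-directional-information-gain} already supplies as the partition required by Definition~\ref{def:weak-minimal-directional-information-gain}, with $N=2$. Let $\Omega\subseteq S^2$ be the measurable region given by minimal directional information gain, so that $\tfrac12(1-\delta)\le\sigma(\Omega)\le\tfrac12$ and $\{\mathbb R^3\times Q_\Omega,\Gamma\setminus(\mathbb R^3\times Q_\Omega)\}\in\mathcal P_{\rm CG}$. I would set $\Omega_1:=\Omega$ and $\Omega_2:=\Omega^{\rm c}=S^2\setminus\Omega$, which is again measurable.

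The first step is to check that the phase-space partition $\{\mathbb R^3\times Q_{\Omega_i}\}_{i=1}^2$ coincides with the one supplied by Definition~\ref{def:minimal-directional-information-gain}. Since $\mathbf p\mapsto\mathbf p/|\mathbf p|$ maps $\mathbb R^3_*$ onto $S^2$, the sets $Q_\Omega$ and $Q_{\Omega^{\rm c}}$ partition $\mathbb R^3_*$, so $Q_{\Omega^{\rm c}}=\mathbb R^3_*\setminus Q_\Omega$. It follows that $\mathbb R^3\times Q_{\Omega^{\rm c}}=\Gamma\setminus(\mathbb R^3\times Q_\Omega)$. Hence $\{\mathbb R^3\times Q_{\Omega_i}\}_{i=1}^2$ is exactly the partition assumed to lie in $\mathcal P_{\rm CG}$, and the membership requirement of Definition~\ref{def:weak-minimal-directional-information-gain} holds.

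The second step is the entropy bound. With $s:=\sigma(\Omega)$ we have $\sigma(\Omega^{\rm c})=1-s$, so
\begin{equation}
\frac{H(\{\Omega_1,\Omega_2\})}{\log 2}=h_2(s):=\frac{-s\log s-(1-s)\log(1-s)}{\log 2}.
\end{equation}
This ratio does not depend on the base of the logarithm. The function $h_2$ is continuous on $[0,1]$, with the convention $0\log0=0$. It is strictly concave and symmetric about $1/2$, so it is nondecreasing on $[0,1/2]$; equivalently, its derivative $\log\frac{1-s}{s}/\log2$ is nonnegative for $s\in(0,1/2]$. Since $s\in[(1-\delta)/2,1/2]$, we obtain $h_2(s)\ge h_2((1-\delta)/2)$. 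Writing out $h_2((1-\delta)/2)$ and using $1-(1-\delta)/2=(1+\delta)/2$ reproduces exactly Eq.~\eqref{M_delta}. This verifies Eq.~\eqref{H_log_N_M} with $M=M(\delta)$, and $0\le M(\delta)\le 1$ holds automatically.

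There is no real obstacle in this argument. The only points needing care are the set identity $Q_{\Omega^{\rm c}}=\mathbb R^3_*\setminus Q_\Omega$ and the monotonicity of $h_2$ on $[0,1/2]$, together with the endpoint convention $0\log0=0$ for the degenerate limit $\delta\to1$. The monotonicity is the step I would justify explicitly, via the concavity-and-symmetry argument or the derivative sign above.
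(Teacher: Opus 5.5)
Your proposal is correct and follows the paper's proof: you take the dichotomic partition $\{\Omega, S^2\setminus\Omega\}$ with $N=2$, and you use the monotonicity of the binary entropy on $(0,1/2]$ to bound $h_2(\sigma(\Omega))$ from below by $h_2((1-\delta)/2)$. Your explicit check that $\mathbb R^3\times Q_{\Omega^{\rm c}}=\Gamma\setminus(\mathbb R^3\times Q_\Omega)$, which shows the membership in $\mathcal P_{\rm CG}$ carries over, is a step the paper leaves implicit.
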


Thus, this requirement is indeed weaker (at least before fixing $\delta$ and $M$). Plugging in \eqref{delta} yields

\begin{equation}
    M\left(\frac{1}{6}\right) \approx 0.98.
\end{equation}

This might appear as a strong requirement, as $M$ is close to $1$, but this has to be put in the context that the Shannon entropy is very flat at its maximum, see Fig.~\ref{fig:entropy}. Thus, while the information gain is close to maximal, there is a lot of freedom in the implementation of the corresponding partition.

\begin{proposition}
    For $\epsilon' < \frac{1}{3}$ and $\sqrt{a_\epsilon} < \frac{1}{3}$, Theorem \ref{thm:no-go-constraints} still holds when replacing the requirement of minimal directional information by its weaker version in Def.~\ref{def:weak-minimal-directional-information-gain}, for
\begin{equation} \label{M_epsilon}
    M \ge M(\epsilon,\epsilon') = \text{max} \{h_2(\sqrt{a_\epsilon}), h_2(\epsilon')\}.
\end{equation}
The graph of these functions is presented in Fig.~\ref{fig:entropy}.
\end{proposition}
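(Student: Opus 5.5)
The plan is to repeat the proof of Theorem~\ref{thm:no-go-constraints}, with one change: instead of a single dichotomic cone, I work with the whole partition $\{\mathbb R^3\times Q_{\Omega_i}\}_{i=1}^N\in\mathcal P_{\rm CG}$ supplied by Definition~\ref{def:weak-minimal-directional-information-gain}. Write $\sigma_i:=\sigma(\Omega_i)$, $\sigma_{\max}:=\max_i\sigma_i$, and let $\Pi_i(\mathbf k):=\Pi_{\mathbb R^3\times Q_{\Omega_i}}(\mathbf k)$ be the diagonal multipliers of Proposition~\ref{prop:diagonal-momentum-marginal}. These satisfy $\sum_i\Pi_i=1$ almost everywhere. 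By Proposition~\ref{prop:angular-response-area-bounds} they also obey the area bounds $\sigma_i^2\le\Pi_i\le u(\sigma_i):=1-(1-\sigma_i)^2$.

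The argument has three steps.

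\emph{Step 1: each classicality condition forces one dominant cell.} I use the partition-level conditions directly rather than applying Propositions~\ref{prop:weak-repeatability-almost-projective} and~\ref{prop:weak-momentum-compatibility-quasi-localization} cell by cell. That cell-by-cell route cannot work, because a partition into many tiny cells passes every single-cell test and still has $H/\log N\approx1$.

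For weak non-invasive repeatability, the diagonal form turns Eq.~\eqref{PVMness} into $\operatorname*{ess\,inf}_{\mathbf k}\sum_i\Pi_i(1-\Pi_i)<\epsilon$. Since $\sum_i\Pi_i=1$, we have $\sum_i\Pi_i(1-\Pi_i)\ge 1-\max_i\Pi_i\ge 1-u(\sigma_{\max})$. Using the endpoint regime $\Pi_{\max}\to1$ as in Proposition~\ref{prop:weak-repeatability-almost-projective} (where $\Pi(1-\Pi)<\epsilon/2$ corresponds to $\Pi>1-a_\epsilon$), I aim for the condition $(1-\sigma_{\max})^2<a_\epsilon$, that is, $1-\sigma_{\max}<\sqrt{a_\epsilon}$.

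For weak momentum compatibility, Eq.~\eqref{inf_compatibility_equiv} becomes $\operatorname*{ess\,inf}_{\mathbf k}[1-\Pi_{i(\mathbf k)}(\mathbf k)]<\epsilon'$, where $i(\mathbf k)$ is the index of the cell containing $\mathbf k$. So some cell $i$ must have $\operatorname*{ess\,sup}_{Q_{\Omega_i}}\Pi_i>1-\epsilon'$.

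\emph{Step 2: sharpen the upper bound for $\mathbf k$ inside the cone.} The area bound $u(\sigma_i)$ only yields $1-\sigma_i<\sqrt{\epsilon'}$, which would give the stronger hypothesis $M\ge h_2(\sqrt{\epsilon'})$ rather than the stated $h_2(\epsilon')$. To reach the stated threshold I need, for $\mathbf n_{\mathbf k}\in\Omega_i$, the refined estimate
\begin{equation}
1-\Pi_i(\mathbf k)=\frac{1}{4\pi}\int_{\Omega_i^{\rm c}}d\Omega(\mathbf n)\,(1-\mathbf n_{\mathbf k}\cdot\mathbf n)\ \ge\ \sigma(\Omega_i^{\rm c}),
\end{equation}
so that compatibility forces $1-\sigma_i<\epsilon'$. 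I would first try to get this from the representation \eqref{Pi_QOmega_angular_response}, combined with the fact that $\mathbf k$ is in the region itself. If it only holds up to a constant, I would fall back on the fact that the infimum condition is tested on a null-modifiable set and optimise over the choice of $\mathbf k$.

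I expect Step 2 to be the main obstacle. For a cap $\Omega_i^{\rm c}$ touching $\mathbf n_{\mathbf k}$, the integral behaves like $2\sigma(\Omega_i^{\rm c})^2$, which is smaller than $\sigma(\Omega_i^{\rm c})$. So the inequality can fail pointwise, and the argument may really need the essential supremum over $Q_{\Omega_i}$ together with a different handling of boundary directions.

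\emph{Step 3: one dominant cell caps the normalised entropy.} I would show that $\sigma_{\max}\ge1-s$ with $s<1/3$ implies $H(\{\Omega_i\})/\log N<h_2(s)$. The route is the grouping bound $H\le h_b(\sigma_{\max})+(1-\sigma_{\max})\log(N-1)$, divided by $\log N$, checking the $N=2$ case explicitly and then showing the ratio is decreasing in $N$ for $s<1/3$. This is where the hypotheses $\sqrt{a_\epsilon}<1/3$ and $\epsilon'<1/3$ enter. Applying the lemma with $s=\sqrt{a_\epsilon}$ and with $s=\epsilon'$ contradicts $M\ge M(\epsilon,\epsilon')$ in each of the two branches of Theorem~\ref{thm:no-go-constraints}.
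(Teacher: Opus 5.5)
\textbf{Step 2 cannot be completed, because what it targets is false.} The claim you are trying to prove there is that weak compatibility with tolerance $\epsilon'$ rules out $M\ge h_2(\epsilon')$. The obstruction has nothing to do with boundary directions: that claim fails.

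From Eq.~\eqref{Pi_QOmega_angular_response} one has $\Pi_{\mathbb R^3\times Q_A}(\mathbf k)=\sigma(A)-\mathbf n_{\mathbf k}\cdot\frac{1}{4\pi}\int_A d\Omega(\mathbf n)\,\mathbf n$. For a cap $A$ of normalized area $t$ centred at $\mathbf n_0$ this equals $t-t(1-t)\,\mathbf n_{\mathbf k}\cdot\mathbf n_0$, which is $3t^2-2t^3$ on the rim. Let $\Omega^{\rm c}$ be such a cap with $t=\epsilon'$. Then:
\begin{itemize}
\item the partition $\{\mathbb R^3\times Q_\Omega,\mathbb R^3\times Q_{\Omega^{\rm c}}\}$ has $H/\log 2=h_2(\epsilon')$;
\item for $\mathbf k\in Q_\Omega$ pointing just outside the rim, $1-\Pi_{\mathbb R^3\times Q_\Omega}(\mathbf k)$ is arbitrarily close to $3\epsilon'^2-2\epsilon'^3<\epsilon'$.
\end{itemize}

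This extends to a full scheme. Let $\mathcal P_{\rm CG}$ consist of all partitions having a cell that contains $\mathbb R^3\times Q_{R\Omega}$ for some $R\in SO(3)$.
\begin{itemize}
\item It is closed under coarsening.
\item It is inferentially complete: every measurable subset of $\mathbb R^3\times Q_{R(\Omega^{\rm c})}$ is a coarse cell, and finitely many rotated caps cover $S^2$.
\item Each of its momentum-marginal partitions has some $Q_{i_0}\supseteq Q_{R\Omega}$, so monotonicity of $Q\mapsto\Pi_{\mathbb R^3\times Q}$ gives the same estimate.
\end{itemize}
Combined with the POVM of Example~\ref{ex:covariant-massless-povm}, this refutes the claim for every $\epsilon'\in(0,1/3)$.

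Moving a small disc around $\mathbf n_0$ from the cap into $\Omega$ lowers the value at those directions to about $t^2$, so any $t<\sqrt{\epsilon'}$ works. Hence your area bound $(1-\sigma_i)^2$ is sharp even on $Q_{\Omega_i}$. What your Steps 1 and 3 actually prove is the version with $h_2(\sqrt{\epsilon'})$ and $\sqrt{\epsilon'}<1/3$, and that is the correct statement. The paper's one-line argument has the same slip. Proposition~\ref{M_sigma_Omega} with $\eta=\epsilon'$ only yields a union of cells with $\epsilon'\le s\le1-\epsilon'$. That does not exclude the alternative $s<\sqrt{\epsilon'}$ in Eq.~\eqref{s_momentum_alternative}.

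\textbf{Repeatability branch: a missing estimate, then a valid alternative route.} Your inequality $\sum_i\Pi_i(1-\Pi_i)\ge1-\Pi_{\max}$ alone only gives $(1-\sigma_{\max})^2<\epsilon$, i.e.\ threshold $h_2(\sqrt\epsilon)$. The missing estimate is $\sum_i\Pi_i(1-\Pi_i)\ge2\Pi_{\max}(1-\Pi_{\max})$. It holds because $1-\Pi_i\ge\Pi_{\max}$ for every $i$ other than the maximizing one. On the positive-measure set where the sum is below $\epsilon$:
\begin{itemize}
\item the first bound gives $\Pi_{\max}>1-\epsilon>a_\epsilon$;
\item the second then forces $\Pi_{\max}>1-a_\epsilon$, hence $(1-\sigma_{\max})^2<a_\epsilon$.
\end{itemize}

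With this fix your argument is a genuine alternative to the paper's. You set aside the dichotomic Propositions because single cells may all be small. The paper, however, does not apply them cell by cell. By closure under coarsening, every union $\bigcup_{j\in J}\Omega_j$ gives a coarse dichotomic cell. Proposition~\ref{M_sigma_Omega}, a greedy subset-sum argument plus the grouping bound of your Step 3, then shows that normalized entropy at least $h_2(\eta)$ forces a union of area in $[\eta,1-\eta]$.

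Your partition-level version replaces the subset-sum lemma by the identity $\sum_i\Pi_i=1$. It also delivers the sharper structural conclusion of a single dominant cell. The entropy estimate is common to both routes.
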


\begin{figure}[h]
    \centering
    \includegraphics[width=\textwidth]{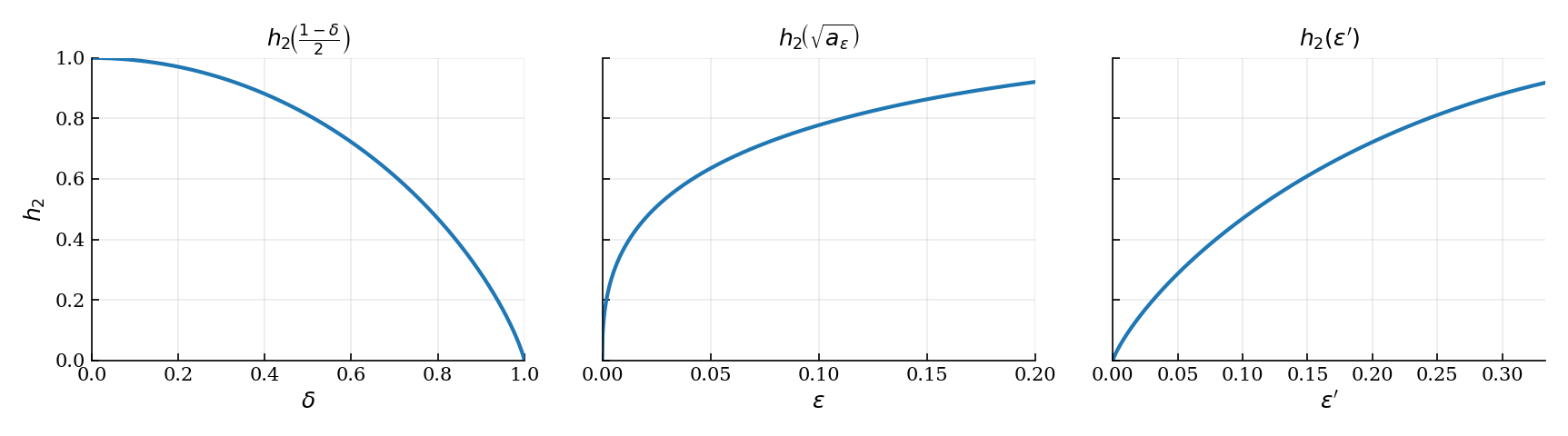}
    \caption{
        Plot of the normalised binary Shannon entropy $h_2(x) = -x \log x -(1-x) \log (1-x)$ for different arguments. The first plot shows $M(\delta)$ from \eqref{M_delta}. The two last plots show the functions out of which $M(\epsilon,\epsilon')$ in \eqref{M_epsilon} is construted by taking the maximum.
    }
    \label{fig:entropy}
\end{figure}

The above proposition is based on the following, which is proven in Appendix \ref{M_sigma_Omega_proof}.

\begin{proposition} \label{M_sigma_Omega}
    If a coarse-graining scheme $(\Gamma,\hat\Pi,\mathcal P_{\rm CG})$ satisfies Def.~\ref{def:weak-minimal-directional-information-gain} with minimum information gain per cell $M \ge h_2(\eta)$, for some $\eta \in (0,\frac{1}{3})$, there exists $\Omega \subseteq S^2$ with $\mathbb R^3 \times Q_\Omega \, \triangleleft \, \mathcal P_{\mathrm{CG}}$ and
    \begin{equation}
        \eta \le \sigma(\Omega) \le 1-\eta.
    \end{equation}
\end{proposition}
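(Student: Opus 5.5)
The proof has two parts. First I find a union of cells $\Omega$ of the given angular partition with $\eta\le\sigma(\Omega)\le1-\eta$. Then I use closure under coarsening to show that $\mathbb R^3\times Q_\Omega$ is a coarse cell. Let $\{\Omega_i\}_{i=1}^N$ be the partition supplied by Definition~\ref{def:weak-minimal-directional-information-gain} and set $p_i:=\sigma(\Omega_i)$, so $\sum_ip_i=1$. For any $J\subseteq\{1,\dots,N\}$, put $\Omega_J:=\bigsqcup_{i\in J}\Omega_i$. Then $Q_{\Omega_J}=\bigsqcup_{i\in J}Q_{\Omega_i}$, and the dichotomic partition $\{\mathbb R^3\times Q_{\Omega_J},\Gamma\setminus(\mathbb R^3\times Q_{\Omega_J})\}$ is a coarsening of $\{\mathbb R^3\times Q_{\Omega_i}\}_{i=1}^N\in\mathcal P_{\rm CG}$. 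By Eq.~\eqref{closure_under_coarsening_definition} it belongs to $\mathcal P_{\rm CG}$, hence $\mathbb R^3\times Q_{\Omega_J}\triangleleft\mathcal P_{\rm CG}$. So it suffices to find $J$ with $\eta\le\sum_{i\in J}p_i\le1-\eta$.

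\emph{Step 1: no cell is too large.} I show that $p_{\max}:=\max_ip_i\le 1-\eta$. Suppose instead $q:=1-p_{\max}<\eta$. If $N=2$, then $H/\log 2=h_2(q)<h_2(\eta)$ because $h_2$ is increasing on $[0,1/2]$ and $q<\eta<1/3$; this contradicts $M\ge h_2(\eta)$. If $N\ge3$, I use the grouping (Fano-type) bound $H\le h(q)+q\log(N-1)$, where $h$ is the binary entropy in nats. Dividing by $\log N\ge\log3$ and using $\log(N-1)\le\log N$ gives
\begin{equation}
\frac{H}{\log N}\le f(q):=\frac{h(q)}{\log 3}+q .
\end{equation}
On $[0,1/2]$ the function $f$ is increasing, since $h'(q)>0$ there. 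Hence $H/\log N<f(\eta)$, and I need the key inequality $f(\eta)\le h_2(\eta)=h(\eta)/\log2$ for $\eta\in(0,1/3)$. Equivalently, $g(\eta):=c\,h(\eta)-\eta\ge0$ with $c:=1/\log2-1/\log3\approx0.5325$. The function $g$ is concave with $g(0)=0$, and $g(1/3)\approx0.5325\times0.6365-1/3>0$, so $g\ge0$ on $[0,1/3]$. This contradicts $M\ge h_2(\eta)$, so $p_{\max}\le1-\eta$.

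\emph{Step 2: a greedy union lands in the window.} If some $p_i\ge\eta$, take $J=\{i\}$; by Step 1 this gives $\eta\le\sigma(\Omega)\le1-\eta$. Otherwise every $p_i<\eta$. Add cells one at a time and stop at the first index $m$ with $\sum_{i\le m}p_i\ge\eta$; such an $m$ exists because the total is $1>\eta$. The last cell added contributes less than $\eta$, and the partial sum just before it was below $\eta$, so $\sum_{i\le m}p_i<2\eta<1-\eta$, using $\eta<1/3$. Taking $J=\{1,\dots,m\}$ completes the construction.

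The main obstacle is the numerical inequality in Step 1 for $N\ge3$. It is where the hypothesis $\eta<1/3$ enters most tightly: $g(1/3)$ is only marginally positive, so the concavity argument has to be checked carefully. One also has to confirm that the bound $H\le h(q)+q\log(N-1)$ applies for the relevant range of $q$. If the $N=2$ and $N\ge3$ cases fail to combine cleanly, I would fall back on bounding $H$ directly by $h(q)+q\log(N-1)$ and optimizing over $N$.
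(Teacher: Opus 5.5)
Your proof is correct and follows the paper's route. The paper argues by contradiction: if no union of cells has area in $[\eta,1-\eta]$, then either every cell is below $\eta$, which the same greedy bound $2\eta<1-\eta$ rules out, or some cell has $1-p_{\max}<\eta$, which the same grouping bound $H\le h(r)+r\log(N-1)$ rules out. You run the same two steps directly, and you also spell out the closure-under-coarsening step that the paper leaves implicit.

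Your split into $N=2$ and $N\ge3$, with the explicit concavity check of $c\,h(\eta)\ge\eta$, is more careful than the paper's closing appeal to $h_2(\eta)>\eta$. That inequality alone is too weak at $N=3$, where one needs roughly $h_2(\eta)\ge1.71\,\eta$, which does hold for $\eta<1/3$. The grouping bound you were unsure about holds for every $q\in[0,1]$.
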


Thus, in that case, Eq.~\eqref{M_epsilon} is clearly incompatible with \eqref{s_repeatability_alternative} and \eqref{s_momentum_alternative}, respectively.

\section{Outlook towards stronger classicality requirements for massive nonrelativistic particles} \label{stronger_classicality_nonrelativistic}

We motivate that, if further constraining $\mathcal P_{\mathrm{CG}}(\epsilon,\epsilon')$ to only contain partitions where all cells are coarse with respect to $d_\sigma$, the coarse-graining scheme of Theorem \ref{coarse_graining_nonrelativistic} fulfills Defs. \ref{def:weak-noninvasive-repeatability} and \ref{def:weak-momentum-compatibility} not only in the weak sense, with the inf, but for a large class of states $\ket \psi \in \mathcal H$. Indeed, we will show that the non-classical deviations related to both definitions becomes small when considering states that are concentrated "far enough" from the boundary of the partition. Thus, if a partition is coarse enough, meaning the border region of each cell is relatively small compared to its volume, non-invasive repeatability and compatibility with quantum momentum will hold for nearly all states that could be prepared. We make the notion of concentration more concrete.

\begin{definition} [Probability mass at the border]
    The probability mass of a state $\ket{\psi} \in \mathcal H$ within distance $R>0$ of the border of a partition $P \in \mathcal P_\Gamma$ is

\begin{equation} \label{eta_R}
    \eta_R(\psi,P) := \int_{\partial_RP} \frac{d^3 \mathbf{x} d^3 \mathbf{p}}{(2\pi \hbar)^3} |\braket{\mathbf{x},\mathbf{p}|\psi}|^2.
\end{equation}

Here, for $P = \{C_i\}_{i=1}^N \in \mathcal P_\Gamma$,

\begin{equation}
    \partial_R P := \bigsqcup_{i=1}^N \{z \in C_i \, : \, d_\sigma(z,\Gamma \setminus C_i ) \le R\}.
\end{equation}
    
\end{definition}

Then, we can formulate the following bound.

\begin{proposition} [General bound on non-repeatability] \label{general_bound_non_repeatability_nonrelativistic}
Given a system in state $\ket{\psi} \in \mathcal H$ and a partition $P \in \mathcal P_\Gamma$, the probability to see a non-classical transition when repeating the experiment represented by $P$ is bounded, for any $R>0$, by

\begin{equation} \label{reapitability_bound_nonrelativistic}
    \sum_{i=1}^N\braket{\psi|\hat{\Pi}_{C_i}-\hat{\Pi}_{C_i}^2|\psi} \le 8 \eta_R(\psi,P) + \tau(R),
\end{equation}

where 

\begin{equation}
    \tau(R) = e^{-R^2/2} (8 + 4R^2 + R^4).
\end{equation}
\end{proposition}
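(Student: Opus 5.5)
The plan is to start from the identity $\hat\Pi_{C_i}-\hat\Pi_{C_i}^2=\hat\Pi_{C_i}\hat\Pi_{\Gamma\setminus C_i}$, which follows from $\sum_j\hat\Pi_{C_j}=\hat{\mathbb I}$. This reduces the non-repeatability to overlap terms between cells and complements. I would then split each cell as $C_i=C_i^{\rm in}\sqcup C_i^{\partial}$, where $C_i^{\partial}:=\{z\in C_i:d_\sigma(z,\Gamma\setminus C_i)\le R\}$, so that $\bigsqcup_i C_i^\partial=\partial_RP$. I would split $\Gamma\setminus C_i$ in the same way into its own border and interior pieces. Every term in $\sum_i\braket{\psi|\hat\Pi_{C_i}\hat\Pi_{\Gamma\setminus C_i}|\psi}$ then falls into one of two classes. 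In the first class, both factors are supported on sets at mutual $d_\sigma$-distance $>R$; this is automatic when one factor is an interior piece $C_i^{\rm in}$, since every point of $C_i^{\rm in}$ is farther than $R$ from $\Gamma\setminus C_i$. In the second class, at least one factor is a border piece.

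For the first class I would bound the operator norm using the explicit overlap \eqref{overlap}, via a Schur test. For $A,B\subseteq\Gamma$ with $d_\sigma(A,B)>R$, the kernel of $\hat\Pi_A\hat\Pi_B$ in the coherent-state representation is controlled by $\int_{d_\sigma(z,z')>R}\frac{d^3\mathbf x'd^3\mathbf p'}{(2\pi\hbar)^3}|\braket{z|z'}|\,|\braket{z'|\cdot}|$. Here $|\braket{z|z'}|=e^{-d_\sigma(z,z')^2/2}$. Passing to the dimensionless coordinates in which $d_\sigma$ is Euclidean, the phase-space measure becomes a constant multiple of $d^6u$. The radial integral is then
\begin{equation*}
\int_R^\infty r^5e^{-r^2/2}\,dr=e^{-R^2/2}\left(R^4+4R^2+8\right),
\end{equation*}
which reproduces exactly $\tau(R)$ once the angular factor and the $(2\pi\hbar)^3$ normalization are tracked. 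Summing over $i$ would be handled by observing that the sets $C_i^{\rm in}$ are disjoint and that $\sum_i\hat\Pi_{C_i^{\rm in}}\le\hat{\mathbb I}$. This lets the Schur bound be applied once to a block-structured operator, rather than $N$ times, so that the total far-separated contribution is $\le\tau(R)$ and not $N\tau(R)$.

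For the second class I would bound each term by a multiple of $\braket{\psi|\hat\Pi_{C_i^\partial}|\psi}$ or of the corresponding border weight of the complement. Summing these then gives a multiple of $\eta_R(\psi,P)$, because $\sum_i\braket{\psi|\hat\Pi_{C_i^\partial}|\psi}=\eta_R(\psi,P)$ and each border point is counted boundedly often. The key inequality I would use is that, for $0\le X\le\hat{\mathbb I}$ and any effect $E$, $\mathrm{Re}\braket{\psi|EX|\psi}$ can be controlled linearly rather than via the Cauchy--Schwarz square root. To do this I would symmetrize, writing $EX+XE=(E+X)^2-E^2-X^2$-type identities, or using $E(\hat{\mathbb I}-\hat\Pi_{C_i})+(\hat{\mathbb I}-\hat\Pi_{C_i})E\le 2E$ when $E\le\hat\Pi_{C_i}$ commutes up to the far-separated corrections already estimated. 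The factor $8$ would come from counting the border-border and border-interior cross terms on both sides, with a factor of $2$ for symmetrization, the two border pieces (of $C_i$ and of its complement), and a factor of $2$ from reabsorbing the complement's border into the neighbouring cells' borders.

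The main obstacle is this last step: obtaining a bound linear in $\eta_R$ rather than $\sqrt{\eta_R}$. Naive Cauchy--Schwarz, $|\braket{\psi|\hat\Pi_A\hat\Pi_B|\psi}|\le\|\hat\Pi_A\psi\|\,\|\hat\Pi_B\psi\|$, only yields $\sqrt{\eta_R}$. My first attempt would be to exploit positivity of $\hat\Pi_{C_i}(\hat{\mathbb I}-\hat\Pi_{C_i})$ together with the operator inequality $\hat\Pi_A^2\le\hat\Pi_A$. This would let me write $\braket{\psi|\hat\Pi_{C_i}-\hat\Pi_{C_i}^2|\psi}\le\braket{\psi|\hat\Pi_{C_i^\partial}|\psi}+\ldots$ directly, with every leftover cross term containing an interior factor separated by more than $R$ from its partner, so that it is absorbed into $\tau(R)$. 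If this fails, the fallback is to accept the square root bound, which would not give the stated form.
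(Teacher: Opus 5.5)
There is a genuine gap, and it is the step you flag yourself: you never find a way to bound the border contributions linearly in $\eta_R$, and the operator-level routes you sketch do not work as stated.

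First, the symmetrized inequality $E(\hat{\mathbb I}-\hat\Pi_{C_i})+(\hat{\mathbb I}-\hat\Pi_{C_i})E\le 2E$ is equivalent to $E\hat\Pi_{C_i}+\hat\Pi_{C_i}E\ge 0$. That is false in general for noncommuting effects, even when $E\le\hat\Pi_{C_i}$.

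Second, there is an operator-level version that does linearize: Cauchy--Schwarz, then $ab\le\tfrac12(a^2+b^2)$, then $\hat\Pi_A^2\le\hat\Pi_A$. It handles $\braket{\psi|\hat\Pi_{C_i^\partial}\hat\Pi_{(\Gamma\setminus C_i)^\partial}|\psi}$, but it leaves you with $\sum_i\braket{\psi|\hat\Pi_{(\Gamma\setminus C_i)^\partial}|\psi}$. A point $z'\in C_j^\partial$ lies in $(\Gamma\setminus C_i)^\partial$ for every cell $C_i$ within $d_\sigma$-distance $R$ of $z'$. For an arbitrary $P\in\mathcal P_\Gamma$ (for instance, many cells much finer than $R$) this number is unbounded. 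So your claim that each border point is counted boundedly often fails, and no $N$-independent constant comes out. Relatedly, the $8$ in the statement is not a combinatorial count of cross terms.

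The missing idea is to linearize pointwise in the coherent-state representation. There the overcounting is absorbed by the integrability of the overlap rather than controlled by counting.

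1. Write $\braket{\psi|\hat\Pi_{C_i}\hat\Pi_{C_j}|\psi}=\int_{C_i}d\mu(z)\int_{C_j}d\mu(z')\,\braket{\psi|z}\braket{z|z'}\braket{z'|\psi}$ and take absolute values inside.
2. Use $|\braket{\psi|z}|\,|\braket{z'|\psi}|\le\tfrac12[Q_\psi(z)+Q_\psi(z')]$, with $Q_\psi(z):=|\braket{z|\psi}|^2$.
3. Split according to whether $d_\sigma(z,z')\le R$ or $>R$, not into cell pieces.
4. In the near region, $z\in C_i$ and $z'\notin C_i$ force $z\in\partial_RP$, and symmetrically $z'\in\partial_RP$. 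Since the cells are disjoint, summing over $i$ the integral in the other variable never exceeds $\int_\Gamma d\mu(z')\,e^{-d_\sigma(z,z')^2/2}=8$. This gives $2\cdot\tfrac12\cdot 8\,\eta_R$.
5. In the far region, the same disjointness together with $\int_\Gamma d\mu\,Q_\psi=1$ gives $2\cdot\tfrac12\,\tau(R)$.

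This is the paper's proof. Your Schur-test treatment of the separated terms is compatible with it: the row and column sums of the summed kernel are each bounded by $\tau(R)$. Your interior/border decomposition would also close if you applied this pointwise AM--GM to the border--border term instead of an operator inequality.
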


The proof is given in Appendix~\ref{general_bound_non_repeatability_nonrelativistic_proof}.

Eq.~\eqref{reapitability_bound_nonrelativistic} can be made small, in particular smaller than $\epsilon$, by choosing $R$ large enough so that $\tau(R) < \epsilon$ and $\ket \psi$ so that it has support sufficiently far away from $\partial_R P$, meaning $\eta_R(\psi,P) \le \frac{1}{8} \left (\epsilon - \tau(R) \right)$. 

For a partition $P = \{\mathbb R^3 \times Q_i\}_{i=1}^N$, we can formulate a similar bound, related to Def.~\ref{def:weak-momentum-compatibility}, by replacing the phase-space distance with the pure momentum distance
\begin{equation}
    d_{\sigma}^{\mathbf{\hat P}}(\mathbf{k},\mathbf{p})
    =
    \frac{\sigma}{\hbar}
    |\mathbf{k}-\mathbf{p}|,
\end{equation}

defining the thickened momentum boundary 

\begin{equation}
    \partial_R^{\mathbf{\hat P}}P
    = \bigsqcup_{i=1}^N  \partial_R Q_i := 
    \bigsqcup_{i=1}^N \left\{
    \mathbf{k}\in Q_i:
    d_{\sigma}^{\mathbf{\hat P}}
    \left(
    \mathbf{k},
    \mathbb R^3\setminus Q_i
    \right)
    \le R
    \right\}.
\end{equation}

Then, the probability mass of $\psi$ near the boundary of $P$ is defined as
\begin{equation}
    \eta_R^{\mathbf{\hat P}}(\psi,P)
    :=
    \int_{\partial_R^{\mathbf{\hat P}}P}
    d^3\mathbf{k}\,
    |\braket{\mathbf{k}|\psi}|^2.
\end{equation}

This results in the following bound.

\begin{proposition} [General bound on the non-compatibility between classical and quantum momentum] \label{general_bound_non_compatibility_nonreativistic}

Given a system in state $\ket{\psi} \in \mathcal H$ and a partition $P = \{\mathbb R^3 \times Q_i\}_{i=1}^N \in \mathcal P_\Gamma$, the probability to see differences between the classical and quantum momentum is bounded, for any $R>0$, by

    \begin{equation} \label{equivalence_bound_nonrelativistic}
    \sum_{j\neq i}
    \braket{\psi|
    \hat E_{\hat{\mathbf P}}(Q_i)
    \hat{\Pi}_{\mathbb R^3\times Q_j}
    \hat E_{\hat{\mathbf P}}(Q_i)
    |\psi}
    \le
    \eta_R^{\mathbf{\hat P}}(\psi,Q_i)
    + \tau_{\mathbf{\hat P}}(R),
\end{equation}

where 

\begin{equation}
    \tau_{\mathbf{\hat P}}(R) = \operatorname{erf}(R)
    +
    \frac{2R}{\sqrt{\pi}}e^{-R^2}.
\end{equation}
\end{proposition}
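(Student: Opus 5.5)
The plan is to exploit the fact that, for the Weyl--Heisenberg POVM of Eq.~\eqref{Pi_C_nonrelativistic}, every momentum-marginal effect is a multiplication operator in the momentum representation. This reduces the left-hand side of Eq.~\eqref{equivalence_bound_nonrelativistic} to a classical integral of $|\braket{\mathbf k|\psi}|^2$ against a Gaussian tail. I read the statement for a fixed cell index $i$; the summed version follows by adding over $i$, using that the sets $\partial_R Q_i$ are disjoint.

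\textbf{Step 1 (diagonal form).} For a cell $\mathbb R^3\times Q$, insert $\ket{\mathbf x,\mathbf p}=\hat U(T_{\mathbf x})\hat U(G_{\mathbf p})\ket{\mathbf 0,\mathbf 0}$ into $\braket{\mathbf k|\hat\Pi_{\mathbb R^3\times Q}|\mathbf k'}$. The $\mathbf x$-integration produces $(2\pi\hbar)^3\delta^3(\mathbf k-\mathbf k')$, which cancels the prefactor of the measure. Together with Eq.~\eqref{overlap_momentum_position}, this gives
\begin{equation}
\hat\Pi_{\mathbb R^3\times Q}=\int d^3\mathbf k\,\Pi_Q(\mathbf k)\ket{\mathbf k}\bra{\mathbf k},\qquad
\Pi_Q(\mathbf k)=\left(\frac{\sigma^2}{\pi\hbar^2}\right)^{3/2}\int_Q d^3\mathbf p\,e^{-\sigma^2|\mathbf k-\mathbf p|^2/\hbar^2}.
\end{equation}
This is the nonrelativistic analogue of Proposition~\ref{prop:diagonal-momentum-marginal}. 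Because the cells $\{Q_j\}_j$ partition momentum space, $\sum_{j\neq i}\Pi_{Q_j}(\mathbf k)=1-\Pi_{Q_i}(\mathbf k)$. This equals the normalized Gaussian mass centered at $\mathbf k$ that lies in $\mathbb R^3\setminus Q_i$. Since $\hat E_{\hat{\mathbf P}}(Q_i)$ is multiplication by $\chi_{Q_i}$, the left-hand side of Eq.~\eqref{equivalence_bound_nonrelativistic} becomes
\begin{equation}
\int_{Q_i}d^3\mathbf k\,|\braket{\mathbf k|\psi}|^2\,\bigl[1-\Pi_{Q_i}(\mathbf k)\bigr].
\end{equation}

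\textbf{Step 2 (split).} Divide $Q_i$ into the thickened boundary $\partial_RQ_i$ and the interior $Q_i\setminus\partial_RQ_i$.

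On $\partial_RQ_i$, use the bound $1-\Pi_{Q_i}\le 1$. This contribution is then at most $\eta_R^{\hat{\mathbf P}}(\psi,Q_i)$.

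On the interior, $d^{\hat{\mathbf P}}_\sigma(\mathbf k,\mathbb R^3\setminus Q_i)>R$. The complement of $Q_i$ is therefore contained in $\{\mathbf p:\ \sigma|\mathbf p-\mathbf k|/\hbar>R\}$. Changing variables to $\mathbf u=\sigma(\mathbf p-\mathbf k)/\hbar$, we get $1-\Pi_{Q_i}(\mathbf k)\le\pi^{-3/2}\int_{|\mathbf u|>R}d^3\mathbf u\,e^{-|\mathbf u|^2}$, uniformly in $\mathbf k$. Since $\int_{Q_i}|\braket{\mathbf k|\psi}|^2\le1$, the interior contributes at most this tail constant.

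\textbf{Step 3 (tail).} The radial integral is $\frac{4}{\sqrt\pi}\int_R^\infty r^2e^{-r^2}dr$. One integration by parts gives $\operatorname{erfc}(R)+\frac{2R}{\sqrt\pi}e^{-R^2}$.

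\textbf{Main obstacle.} The delicate point is reconciling this tail with the stated $\tau_{\hat{\mathbf P}}(R)$. The computation produces $\operatorname{erfc}(R)=1-\operatorname{erf}(R)$, not $\operatorname{erf}(R)$. I expect the statement to contain a typo, with $\operatorname{erfc}$ intended, since only then does $\tau_{\hat{\mathbf P}}(R)\to0$ as $R\to\infty$, as the surrounding discussion requires. As literally written, the bound would follow from the $\operatorname{erfc}$ version only where $\operatorname{erfc}(R)\le\operatorname{erf}(R)$, i.e.\ for $R\gtrsim0.48$. For smaller $R$ it cannot be obtained this way, and would have to rest on the trivial bound by $1$, which holds only when $\eta+\tau\ge1$. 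Apart from this point, the only care needed is justifying the $\mathbf x$-integration, which I would do with wave packets $\psi$ (Fubini) instead of formal $\delta$-functions.
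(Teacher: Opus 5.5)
Your argument is correct and is essentially the paper's proof: the same diagonal Gaussian multiplier, the same split of $Q_i$ into $\partial_RQ_i$ and its interior, and the same $\mathbf k$-independent tail. The paper's own computation also ends with $\operatorname{erfc}(R)+\frac{2R}{\sqrt\pi}e^{-R^2}$, which confirms your diagnosis of the $\operatorname{erf}$ typo. That typo is not harmless: for a half-space cell and $\psi$ supported at momentum distance about $\hbar/\sigma$ from its boundary, the left side stays near $\frac12\operatorname{erfc}(1)$, while the literal right side tends to $0$ as $R\to0$.

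The paper states the bound summed over $i$, with $\eta^{\hat{\mathbf P}}_R(\psi,P)$ and a single $\tau_{\hat{\mathbf P}}(R)$. To get that form, sum the interior terms $\tau_{\hat{\mathbf P}}(R)\int_{Q_i\setminus\partial_RQ_i}d^3\mathbf k\,|\braket{\mathbf k|\psi}|^2$ before using $\sum_i\int_{Q_i}d^3\mathbf k\,|\braket{\mathbf k|\psi}|^2=1$. Adding your final fixed-$i$ bounds literally would instead give $N\tau_{\hat{\mathbf P}}(R)$.
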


The proof is given in Appendix~\ref{general_bound_non_compatibility_nonreativistic_proof}.

\section{Proofs} \label{proofs}

\subsection{Proofs of Sect.~\ref{Phasespace_measurements}}

\subsubsection{Proof of Proposition \ref{prop:operational-implies-microscopic-covariance}}\label{proof:operational-implies-microscopic-covariance}

\begin{proof}
By the density representation,
\begin{equation}
\frac{\operatorname{tr}\!\left(\hat\rho\,\hat\Pi_{B_\epsilon(\mathbf x',\mathbf p')}\right)}{\mu(B_\epsilon(\mathbf x',\mathbf p'))}
=
\frac{1}{\mu(B_\epsilon(\mathbf x',\mathbf p'))}\int_{B_\epsilon(\mathbf x',\mathbf p')}d\mu(\mathbf y,\mathbf q)\,\operatorname{tr}\!\left(\hat\rho\,\hat\pi_{(\mathbf y,\mathbf q)}\right).
\end{equation}
By weak continuity of the density, the right-hand side converges to
\begin{equation}
\operatorname{tr}\!\left(\hat\rho\,\hat\pi_{(\mathbf x',\mathbf p')}\right)
\end{equation}
as $\epsilon\to0$. Similarly,
\begin{equation}
\frac{\operatorname{tr}\!\left(\hat U^\dagger(g)\hat\rho\,\hat U(g)\hat\Pi_{B_\epsilon(\mathbf x,\mathbf p)}\right)}{\mu(B_\epsilon(\mathbf x,\mathbf p))}
\end{equation}
converges to
\begin{equation}
\operatorname{tr}\!\left(\hat U^\dagger(g)\hat\rho\,\hat U(g)\hat\pi_{(\mathbf x,\mathbf p)}\right)
=
\operatorname{tr}\!\left(\hat\rho\,\hat U(g)\hat\pi_{(\mathbf x,\mathbf p)}\hat U^\dagger(g)\right).
\end{equation}
Therefore Eq.~\eqref{operational_local_poincare_covariance} implies
\begin{equation}
\operatorname{tr}\!\left(\hat\rho\,\hat\pi_{(\mathbf x',\mathbf p')}\right)
=
\operatorname{tr}\!\left(\hat\rho\,\hat U(g)\hat\pi_{(\mathbf x,\mathbf p)}\hat U^\dagger(g)\right)
\end{equation}
for every state $\hat\rho\in S(\mathcal H)$. Hence
\begin{equation}
\hat\pi_{(\mathbf x',\mathbf p')}=\hat U(g)\hat\pi_{(\mathbf x,\mathbf p)}\hat U^\dagger(g),
\end{equation}
which proves the claim.
\end{proof}

\subsubsection{Proof of Proposition \ref{prop:rho-mu-depends-on-p-modulus}}\label{proof:rho-mu-depends-on-p-modulus}

\begin{proof}
Let $h=(\mathbf a,R)$ be an element of the Euclidean subgroup, acting on $\Gamma$ as in Eq.~\eqref{Euclidean_action}. Since this action preserves the Lebesgue measure $d\nu(\mathbf x,\mathbf p)=d^3\mathbf x\,d^3\mathbf p$, the density representation \eqref{Pi_C_density} gives, for every measurable region $C\subseteq\Gamma$,
\begin{equation}
\hat\Pi_{hC}
=
\int_C d\nu(\mathbf x,\mathbf p)\,\rho(h\cdot(\mathbf x,\mathbf p))\,\hat\pi_{h\cdot(\mathbf x,\mathbf p)}.
\end{equation}
On the other hand, Euclidean covariance of the integrated POVM, Eq.~\eqref{Euclidean_integrated_covariance}, together with microscopic covariance of the density, Eq.~\eqref{microscopic_covariance_coordinates}, gives
\begin{equation}
\hat\Pi_{hC}
=
\int_C d\nu(\mathbf x,\mathbf p)\,\rho(\mathbf x,\mathbf p)\,\hat\pi_{h\cdot(\mathbf x,\mathbf p)}.
\end{equation}
Therefore, for every measurable $C\subseteq\Gamma$,
\begin{equation}\label{rho_mu_weight_difference_integral}
\int_C d\nu(\mathbf x,\mathbf p)\,\Big[\rho(h\cdot(\mathbf x,\mathbf p))-\rho(\mathbf x,\mathbf p)\Big]\hat\pi_{h\cdot(\mathbf x,\mathbf p)}=0.
\end{equation}
Consider the measurable set on which the scalar factor in square brackets is positive. Restricting Eq.~\eqref{rho_mu_weight_difference_integral} to this set gives the integral of a positive operator-valued function with a nonnegative scalar weight. By the nondegeneracy assumption, this is possible only if that set has vanishing $\nu$-measure. The same argument applied to the set on which the scalar factor is negative shows that this set also has vanishing $\nu$-measure. Hence
\begin{equation}
\rho(h\cdot(\mathbf x,\mathbf p))=\rho(\mathbf x,\mathbf p)
\end{equation}
for $\nu$-almost every $(\mathbf x,\mathbf p)\in\Gamma$. Since $h=(\mathbf a,R)$ was arbitrary, this proves Eq.~\eqref{rho_mu_translation_rotation_invariance}.

Taking $R=\mathbb I$ shows that $\rho$ is independent of $\mathbf x$ up to $\nu$-null sets. Taking $\mathbf a=\mathbf0$ then shows that its remaining dependence on $\mathbf p$ is invariant under all rotations. Therefore it can depend on $\mathbf p$ only through $|\mathbf p|$, and there exists a measurable function $\rho:\mathbb R_+\to[0,\infty)$ such that Eq.~\eqref{rho_mu_mod_p} holds.
\end{proof}

\subsubsection{Proof of Example~\ref{ex:covariant-massless-povm}}\label{proof:covariant-povm-example}

\begin{proof}
We first prove Poincar\'e covariance. Let $x=(0,\mathbf x)$ and $p=(|\mathbf p|,\mathbf p)$. With the conventions used in the main text, the scalar massless representation acts in momentum space as
\begin{equation}\label{scalar_massless_representation_action}
\braket{\mathbf k|\hat U(a,\Lambda)|\psi}
=
e^{\frac{i}{\hbar}k_\mu a^\mu}\braket{\Lambda^{-1}\mathbf k|\psi}.
\end{equation}
Using Eq.~\eqref{seed_covariant_povm}, we obtain
\begin{align}
\braket{\mathbf k|\hat U(a,\Lambda)|\mathbf x,\mathbf p}
&=
e^{\frac{i}{\hbar}k_\mu a^\mu}
\braket{\Lambda^{-1}\mathbf k|\mathbf x,\mathbf p} \nonumber\\
&=
e^{\frac{i}{\hbar}k_\mu a^\mu}
e^{\frac{i}{\hbar}(\Lambda^{-1}k)_\mu x^\mu}
\phi((\Lambda^{-1}k)^\mu p_\mu) \nonumber\\
&=
e^{\frac{i}{\hbar}k_\mu(\Lambda x+a)^\mu}
\phi(k^\mu(\Lambda p)_\mu).
\end{align}
Suppose now that $g=(a,\Lambda)$ maps the phase-space point $(\mathbf x,\mathbf p)$ to another point of the reference hypersurface, $g\cdot(\mathbf x,\mathbf p)=(\mathbf x',\mathbf p')$, in the sense of Eq.~\eqref{g_x_p_prime}. Then $(0,\mathbf x')=\Lambda(0,\mathbf x)+a$ and $p'=\Lambda p$. Hence
\begin{equation}
\braket{\mathbf k|\hat U(g)|\mathbf x,\mathbf p}
=
e^{\frac{i}{\hbar}k_\mu x'^\mu}\phi(k^\mu p'_\mu)
=
\braket{\mathbf k|\mathbf x',\mathbf p'}.
\end{equation}
Therefore,
\begin{equation}
\hat U(g)\ket{\mathbf x,\mathbf p}=\ket{\mathbf x',\mathbf p'},
\end{equation}
and consequently
\begin{equation}
\hat U(g)\hat\pi_{(\mathbf x,\mathbf p)}\hat U^\dagger(g)
=
\hat\pi_{(\mathbf x',\mathbf p')}.
\end{equation}
This proves microscopic Poincar\'e covariance.

We now prove the resolution of the identity. It is enough to compute matrix elements in the momentum basis. From Eq.~\eqref{covariant_povm_example},
\begin{equation}\label{k_x_p_covariant_povm}
\braket{\mathbf k|\mathbf x,\mathbf p}
=
e^{-\frac{i}{\hbar}\mathbf k\cdot\mathbf x}\phi(k^\mu p_\mu).
\end{equation}
Thus
\begin{align}
&\frac{|\mathbf q|}{N_{\mathbf q}}\int_{\mathbb R^3}d^3\mathbf x\int_{\mathbb R^3_*}\frac{d^3\mathbf p}{|\mathbf p|^4}\,
\braket{\mathbf k|\mathbf x,\mathbf p}\braket{\mathbf x,\mathbf p|\mathbf k'} \nonumber\\
&\qquad =
\frac{|\mathbf q|}{N_{\mathbf q}}\int_{\mathbb R^3}d^3\mathbf x\int_{\mathbb R^3_*}\frac{d^3\mathbf p}{|\mathbf p|^4}\,
e^{-\frac{i}{\hbar}(\mathbf k-\mathbf k')\cdot\mathbf x}
\phi(k^\mu p_\mu)\overline{\phi(k'^\mu p_\mu)} \nonumber\\
&\qquad =
(2\pi\hbar)^3\frac{|\mathbf q|}{N_{\mathbf q}}\delta^3(\mathbf k-\mathbf k')
\int_{\mathbb R^3_*}\frac{d^3\mathbf p}{|\mathbf p|^4}\,|\phi(k^\mu p_\mu)|^2.
\end{align}
Define
\begin{equation}\label{I_k_covariant_povm}
I(\mathbf k):=\int_{\mathbb R^3_*}\frac{d^3\mathbf p}{|\mathbf p|^4}\,|\phi(k^\mu p_\mu)|^2.
\end{equation}
Since
\begin{equation}
k^\mu p_\mu=|\mathbf k||\mathbf p|(1-\mathbf n_{\mathbf k}\cdot\mathbf n_{\mathbf p}),
\end{equation}
and since $d^3\mathbf p/|\mathbf p|^4$ is rotationally invariant, we may rotate the integration variable so that $\mathbf n_{\mathbf k}$ is mapped to $\mathbf n_{\mathbf q}$. Writing $r=|\mathbf p|$, we get
\begin{align}
I(\mathbf k)
&=
\int_0^\infty\frac{dr}{r^2}\int_{S^2}d\Omega(\mathbf n)\,
\left|\phi\!\left(|\mathbf k|r(1-\mathbf n_{\mathbf q}\cdot\mathbf n)\right)\right|^2 \nonumber\\
&=
\frac{|\mathbf k|}{|\mathbf q|}
\int_0^\infty\frac{ds}{s^2}\int_{S^2}d\Omega(\mathbf n)\,
\left|\phi\!\left(|\mathbf q|s(1-\mathbf n_{\mathbf q}\cdot\mathbf n)\right)\right|^2 \nonumber\\
&=
\frac{|\mathbf k|}{|\mathbf q|}I(\mathbf q),
\end{align}
where in the second line we used the rescaling $r=(|\mathbf q|/|\mathbf k|)s$. By Eq.~\eqref{N_q_covariant_povm},
\begin{equation}
N_{\mathbf q}=\frac{(2\pi\hbar)^3}{2}I(\mathbf q).
\end{equation}
Therefore,
\begin{align}
(2\pi\hbar)^3\frac{|\mathbf q|}{N_{\mathbf q}}\delta^3(\mathbf k-\mathbf k')I(\mathbf k)
&=
(2\pi\hbar)^3\frac{|\mathbf q|}{N_{\mathbf q}}\delta^3(\mathbf k-\mathbf k')\frac{|\mathbf k|}{|\mathbf q|}I(\mathbf q) \nonumber\\
&=
2|\mathbf k|\delta^3(\mathbf k-\mathbf k').
\end{align}
Hence
\begin{equation}
\bra{\mathbf k}
\int_\Gamma d^3\mathbf x\,d^3\mathbf p\,\rho(|\mathbf p|)\hat\pi_{(\mathbf x,\mathbf p)}
\ket{\mathbf k'}
=
2|\mathbf k|\delta^3(\mathbf k-\mathbf k')
=
\braket{\mathbf k|\mathbf k'}.
\end{equation}
This proves the normalization condition \eqref{identity}.
\end{proof}

\subsection{Proofs of Sect.~\ref{nogo_theorem}}\label{proof_nogo_theorem}

\subsubsection{Proof of Proposition~\ref{prop:detection-kernel-invariant-form}}\label{proof:detection-kernel-invariant-form}

\begin{proof}
Let $p=(|\mathbf p|,\mathbf p)$ and $k=(|\mathbf k|,\mathbf k)$ be the corresponding null four-momenta, and denote by $\mathrm{LG}(\mathbf p)$ the little group of $p$, i.e. the subgroup of Lorentz transformations $S$ such that $Sp=p$. Since $S$ leaves the spacetime origin fixed and leaves $p$ invariant, microscopic Poincar\'e covariance gives
\begin{equation}
\hat U(S)\hat\pi_{(\mathbf0,\mathbf p)}\hat U^\dagger(S)=\hat\pi_{(\mathbf0,\mathbf p)}.
\end{equation}
Using the transformation law of the generalized momentum eigenstates,
\begin{equation}
\hat U(S)\ket{\mathbf k}=\ket{S\mathbf k},
\end{equation}
where $S\mathbf k$ denotes the spatial part of $S(|\mathbf k|,\mathbf k)$, we obtain
\begin{equation}\label{kernel_little_group_invariance}
\mathcal K(\mathbf k,\mathbf p)
=
\braket{\mathbf k|\hat\pi_{(\mathbf0,\mathbf p)}|\mathbf k}
=
\braket{S\mathbf k|\hat\pi_{(\mathbf0,\mathbf p)}|S\mathbf k}
=
\mathcal K(S\mathbf k,\mathbf p)
\end{equation}
for every $S\in\mathrm{LG}(\mathbf p)$.

We now show explicitly why, for fixed $\mathbf p$, the kernel depends on $\mathbf k$ only through $k^\mu p_\mu/|\mathbf p|$. Let
\begin{equation}
\mathbf n_{\mathbf p}:=\frac{\mathbf p}{|\mathbf p|}.
\end{equation}
For a null momentum $k=(|\mathbf k|,\mathbf k)$, define the $\mathbf p$-adapted light-cone components
\begin{equation}
k_{\mathbf p}^-:=\frac{k^\mu p_\mu}{|\mathbf p|}=|\mathbf k|-\mathbf k\cdot\mathbf n_{\mathbf p},
\qquad
k_{\mathbf p}^+:=|\mathbf k|+\mathbf k\cdot\mathbf n_{\mathbf p},
\end{equation}
and the transverse component
\begin{equation}
\mathbf k_{\perp\mathbf p}:=\mathbf k-(\mathbf k\cdot\mathbf n_{\mathbf p})\mathbf n_{\mathbf p}.
\end{equation}
Since $k$ is null, these quantities satisfy
\begin{equation}
k_{\mathbf p}^+k_{\mathbf p}^-=|\mathbf k_{\perp\mathbf p}|^2.
\end{equation}
Thus, for $k_{\mathbf p}^->0$, the null momentum $k$ is completely determined by $k_{\mathbf p}^-$ and by the transverse vector $\mathbf k_{\perp\mathbf p}$.

The little group of $p=(|\mathbf p|,\mathbf p)$ contains null rotations. In the $\mathbf p$-adapted decomposition, a null rotation with transverse parameter $\mathbf b\perp\mathbf n_{\mathbf p}$ acts as
\begin{equation}
k_{\mathbf p}^-\mapsto k_{\mathbf p}^-,
\qquad
\mathbf k_{\perp\mathbf p}\mapsto \mathbf k_{\perp\mathbf p}+\mathbf b\,k_{\mathbf p}^-,
\qquad
k_{\mathbf p}^+\mapsto k_{\mathbf p}^+ +2\mathbf b\cdot\mathbf k_{\perp\mathbf p}+|\mathbf b|^2k_{\mathbf p}^-.
\end{equation}
This transformation leaves $p$ invariant and preserves the null condition. Let $k$ and $\ell$ be two future null momenta such that
\begin{equation}
k_{\mathbf p}^-=\ell_{\mathbf p}^-=:s>0.
\end{equation}
Choosing
\begin{equation}
\mathbf b:=\frac{\boldsymbol{\ell}_{\perp\mathbf p}-\mathbf k_{\perp\mathbf p}}{s}
\end{equation}
maps $\mathbf k_{\perp\mathbf p}$ to $\boldsymbol{\ell}_{\perp\mathbf p}$. Since the transformed momentum is null and has the same value of $k_{\mathbf p}^-=s$, its $k_{\mathbf p}^+$ component is fixed by the null condition and equals $\ell_{\mathbf p}^+$. Hence this null rotation maps $k$ to $\ell$.

Therefore any two future null momenta with the same value of $k_{\mathbf p}^-=k^\mu p_\mu/|\mathbf p|$ are related by an element of $\mathrm{LG}(\mathbf p)$, except for the collinear set $k^\mu p_\mu=0$, which has vanishing measure. Since the detection kernel is invariant under the little group, Eq.~\eqref{kernel_little_group_invariance}, it is constant on each level set of $k^\mu p_\mu/|\mathbf p|$. Hence, for fixed $\mathbf p$, there exists a nonnegative function $f_{\mathbf p}$ such that
\begin{equation}\label{kernel_fixed_p}
\mathcal K(\mathbf k,\mathbf p)
=
f_{\mathbf p}\!\left(\frac{k^\mu p_\mu}{|\mathbf p|}\right)
\end{equation}
for almost every $\mathbf k\in\mathbb R^3_*$.

It remains to show that the function can be chosen independently of $\mathbf p$. Fix an arbitrary reference momentum $\mathbf q\in\mathbb R^3_*$. For every $\mathbf p\in\mathbb R^3_*$, choose a Lorentz transformation $\Lambda_{\mathbf q\to\mathbf p}$ such that
\begin{equation}
\Lambda_{\mathbf q\to\mathbf p}q=p,
\qquad
q=(|\mathbf q|,\mathbf q).
\end{equation}
Since $\Lambda_{\mathbf q\to\mathbf p}$ leaves the spacetime origin fixed, microscopic covariance gives
\begin{equation}
\hat\pi_{(\mathbf0,\mathbf p)}
=
\hat U(\Lambda_{\mathbf q\to\mathbf p})\hat\pi_{(\mathbf0,\mathbf q)}\hat U^\dagger(\Lambda_{\mathbf q\to\mathbf p}).
\end{equation}
Therefore,
\begin{equation}
\mathcal K(\mathbf k,\mathbf p)
=
\mathcal K(\Lambda_{\mathbf q\to\mathbf p}^{-1}\mathbf k,\mathbf q),
\end{equation}
where $\Lambda_{\mathbf q\to\mathbf p}^{-1}\mathbf k$ denotes the spatial part of $\Lambda_{\mathbf q\to\mathbf p}^{-1}(|\mathbf k|,\mathbf k)$. Using Eq.~\eqref{kernel_fixed_p} with $\mathbf p$ replaced by the fixed momentum $\mathbf q$, the right-hand side depends only on
\begin{equation}
\frac{(\Lambda_{\mathbf q\to\mathbf p}^{-1}k)^\mu q_\mu}{|\mathbf q|}.
\end{equation}
By Lorentz invariance of the Minkowski product and by $\Lambda_{\mathbf q\to\mathbf p}q=p$, we have
\begin{equation}
(\Lambda_{\mathbf q\to\mathbf p}^{-1}k)^\mu q_\mu
=
k^\mu(\Lambda_{\mathbf q\to\mathbf p}q)_\mu
=
k^\mu p_\mu.
\end{equation}
Hence
\begin{equation}
\mathcal K(\mathbf k,\mathbf p)
=
f_{\mathbf q}\!\left(\frac{k^\mu p_\mu}{|\mathbf q|}\right).
\end{equation}
Since $\mathbf q$ is fixed, the constant factor $|\mathbf q|^{-1}$ can be absorbed into the definition of a single nonnegative function $\phi$. Therefore,
\begin{equation}
\mathcal K(\mathbf k,\mathbf p)=\phi(k^\mu p_\mu),
\end{equation}
that is,
\begin{equation}
\braket{\mathbf k|\hat\pi_{(\mathbf0,\mathbf p)}|\mathbf k}
=
\phi\!\left(|\mathbf k|\,|\mathbf p|-\mathbf k\cdot\mathbf p\right),
\end{equation}
for almost every $\mathbf k,\mathbf p\in\mathbb R^3_*$.
\end{proof}

\subsubsection{Proof of Proposition~\ref{prop:diagonal-momentum-marginal}}\label{proof:diagonal-momentum-marginal}

\begin{proof}
Since the POVM is a regular Poincar\'e-covariant phase-space POVM, Proposition~\ref{prop:rho-mu-depends-on-p-modulus} gives $\rho(\mathbf x,\mathbf p)=\rho(|\mathbf p|)$. Let $T_{\mathbf x}$ denote the spatial translation that maps $(\mathbf0,\mathbf p)$ to $(\mathbf x,\mathbf p)$. By microscopic covariance,
\begin{equation}
\hat\pi_{(\mathbf x,\mathbf p)}=\hat U(T_{\mathbf x})\hat\pi_{(\mathbf0,\mathbf p)}\hat U^\dagger(T_{\mathbf x}).
\end{equation}
In the momentum representation, spatial translations act as
\begin{equation}
\hat U(T_{\mathbf x})\ket{\mathbf k}=e^{-\frac{i}{\hbar}\mathbf x\cdot\mathbf k}\ket{\mathbf k}.
\end{equation}
Therefore,
\begin{equation}
\braket{\mathbf k|\hat\pi_{(\mathbf x,\mathbf p)}|\mathbf k'}
=
e^{\frac{i}{\hbar}\mathbf x\cdot(\mathbf k-\mathbf k')}
\braket{\mathbf k|\hat\pi_{(\mathbf0,\mathbf p)}|\mathbf k'}.
\end{equation}
Using the density representation for $\mathbb R^3\times Q$, we obtain
\begin{align}
\braket{\mathbf k|\hat\Pi_{\mathbb R^3\times Q}|\mathbf k'}
&=
\int_{\mathbb R^3}d^3\mathbf x\int_Qd^3\mathbf p\,\rho(|\mathbf p|)\braket{\mathbf k|\hat\pi_{(\mathbf x,\mathbf p)}|\mathbf k'} \nonumber\\
&=
\int_{\mathbb R^3}d^3\mathbf x\,e^{\frac{i}{\hbar}\mathbf x\cdot(\mathbf k-\mathbf k')}
\int_Qd^3\mathbf p\,\rho(|\mathbf p|)\braket{\mathbf k|\hat\pi_{(\mathbf0,\mathbf p)}|\mathbf k'} \nonumber\\
&=
(2\pi\hbar)^3\delta^3(\mathbf k-\mathbf k')\int_Qd^3\mathbf p\,\rho(|\mathbf p|)\braket{\mathbf k|\hat\pi_{(\mathbf0,\mathbf p)}|\mathbf k'}.
\end{align}
Since the expression is multiplied by $\delta^3(\mathbf k-\mathbf k')$, only the diagonal part of the kernel contributes. Using Proposition~\ref{prop:detection-kernel-invariant-form}, we get
\begin{equation}
\braket{\mathbf k|\hat\Pi_{\mathbb R^3\times Q}|\mathbf k'}
=
(2\pi\hbar)^3\delta^3(\mathbf k-\mathbf k')
\int_Qd^3\mathbf p\,\rho(|\mathbf p|)\phi(k^\mu p_\mu).
\end{equation}
Comparing this expression with the standard diagonal form
\begin{equation}
\braket{\mathbf k|\hat\Pi_{\mathbb R^3\times Q}|\mathbf k'}=2|\mathbf k|\,\delta^3(\mathbf k-\mathbf k')\,\Pi_{\mathbb R^3\times Q}(\mathbf k)
\end{equation}
gives Eq.~\eqref{Pi_Q_kernel}. The spectral form \eqref{Pi_Q_spectral_form} follows immediately from the normalization $\braket{\mathbf k|\mathbf k'}=2|\mathbf k|\delta^3(\mathbf k-\mathbf k')$.
\end{proof}

\subsubsection{Proof of Proposition~\ref{prop:weak-repeatability-almost-projective}}\label{proof:weak-repeatability-almost-projective}

\begin{proof}
Since $\mathbb R^3\times Q\triangleleft\mathcal P_{\rm CG}$, by definition there exists a coarse partition $P\in\mathcal P_{\rm CG}$ such that $\mathbb R^3\times Q\in P$. The dichotomic partition
\begin{equation}
\left\{\mathbb R^3\times Q,\mathbb R^3\times Q^{\rm c}\right\},
\qquad
Q^{\rm c}:=\mathbb R^3_*\setminus Q,
\end{equation}
is a coarsening of $P$: it keeps the cell $\mathbb R^3\times Q$ unchanged and merges all the remaining cells of $P$ into its complement $\mathbb R^3\times Q^{\rm c}=\Gamma\setminus(\mathbb R^3\times Q)$. Therefore, by closure under coarsening, Eq.~\eqref{closure_under_coarsening_definition}, this dichotomic partition also belongs to $\mathcal P_{\rm CG}$. Applying Definition~\ref{def:weak-noninvasive-repeatability} to this partition gives
\begin{equation}
\inf_{\substack{\ket{\psi}\in\mathcal H\\ \braket{\psi|\psi}=1}}
\braket{\psi|\hat\Pi_{\mathbb R^3\times Q}-\hat\Pi_{\mathbb R^3\times Q}^2+\hat\Pi_{\mathbb R^3\times Q^{\rm c}}-\hat\Pi_{\mathbb R^3\times Q^{\rm c}}^2|\psi}<\epsilon.
\end{equation}
Since $\hat\Pi_{\mathbb R^3\times Q^{\rm c}}=\hat{\mathbb I}-\hat\Pi_{\mathbb R^3\times Q}$, the operator in the expectation value is $2(\hat\Pi_{\mathbb R^3\times Q}-\hat\Pi_{\mathbb R^3\times Q}^2)$. Using the diagonal representation from Proposition~\ref{prop:diagonal-momentum-marginal}, we obtain
\begin{equation}
\inf_{\substack{\ket{\psi}\in\mathcal H\\ \braket{\psi|\psi}=1}}
\int_{\mathbb R^3_*}\frac{d^3\mathbf k}{2|\mathbf k|}\,|\psi(\mathbf k)|^2
\left[\Pi_{\mathbb R^3\times Q}(\mathbf k)-\Pi_{\mathbb R^3\times Q}^2(\mathbf k)\right]
<\frac{\epsilon}{2}.
\end{equation}
Equivalently, defining
\begin{equation}
f(\mathbf k):=\frac{|\psi(\mathbf k)|^2}{2|\mathbf k|},
\end{equation}
we may write
\begin{equation}
\inf_{f\in\mathcal D(\mathbb R^3_*)}
\int_{\mathbb R^3_*}d^3\mathbf k\,f(\mathbf k)
\left[\Pi_{\mathbb R^3\times Q}(\mathbf k)-\Pi_{\mathbb R^3\times Q}^2(\mathbf k)\right]
<\frac{\epsilon}{2},
\end{equation}
where
\begin{equation}
\mathcal D(\mathbb R^3_*):=\left\{f\in L^1(\mathbb R^3_*):f(\mathbf k)\geq0\ \text{a.e.},\ \int_{\mathbb R^3_*}d^3\mathbf k\,f(\mathbf k)=1\right\}.
\end{equation}
Since normalized wavefunctions generate arbitrary probability densities of this form, the infimum over states is the infimum over probability densities. Therefore,
\begin{equation}
\operatorname*{ess\,inf}_{\mathbf k\in\mathbb R^3_*}
\left[\Pi_{\mathbb R^3\times Q}(\mathbf k)-\Pi_{\mathbb R^3\times Q}^2(\mathbf k)\right]
<\frac{\epsilon}{2}.
\end{equation}
This proves Eq.~\eqref{essinf_Pi_Q_projectivity_defect}.

Finally, since $0\leq\Pi_{\mathbb R^3\times Q}(\mathbf k)\leq1$ almost everywhere, the inequality
\begin{equation}
x(1-x)<\frac{\epsilon}{2}
\end{equation}
can hold only if
\begin{equation}
x<a_\epsilon
\qquad\text{or}\qquad
x>1-a_\epsilon,
\end{equation}
where $a_\epsilon$ is defined by Eq.~\eqref{a_epsilon_definition} and satisfies $a_\epsilon(1-a_\epsilon)=\epsilon/2$. Thus Eq.~\eqref{essinf_Pi_Q_projectivity_defect} implies
\begin{equation}
\operatorname*{ess\,inf}_{\mathbf k\in\mathbb R^3_*}\Pi_{\mathbb R^3\times Q}(\mathbf k)<a_\epsilon
\quad\text{or}\quad
\operatorname*{ess\,sup}_{\mathbf k\in\mathbb R^3_*}\Pi_{\mathbb R^3\times Q}(\mathbf k)>1-a_\epsilon,
\end{equation}
which proves Eq.~\eqref{projective}.
\end{proof}

\subsubsection{Proof of Proposition~\ref{prop:angular-response-area-bounds}}\label{proof:angular-response-area-bounds}

\begin{proof}
By Proposition~\ref{prop:diagonal-momentum-marginal}, applied to the momentum cone $Q_\Omega$ of Definition~\ref{def:angular-region-momentum-cone}, the corresponding multiplier is
\begin{equation}
\Pi_{\mathbb R^3\times Q_\Omega}(\mathbf k)=\frac{(2\pi\hbar)^3}{2|\mathbf k|}\int_{Q_\Omega}d^3\mathbf p\,\rho(|\mathbf p|)\Phi(k^\mu p_\mu).
\end{equation}
Writing $\mathbf p=r\mathbf n$, with $r>0$ and $\mathbf n\in S^2$, gives
\begin{equation}\label{Pi_QOmega_before_J}
\Pi_{\mathbb R^3\times Q_\Omega}(\mathbf k)=\frac{(2\pi\hbar)^3}{2|\mathbf k|}\int_0^\infty dr\,r^2\rho(r)\int_\Omega d\Omega(\mathbf n)\,\Phi\!\left(|\mathbf k|r(1-\mathbf n_{\mathbf k}\cdot\mathbf n)\right).
\end{equation}
Applying the same formula to $Q=\mathbb R^3_*$ gives $\hat\Pi_\Gamma=\hat{\mathbb I}$, hence
\begin{equation}\label{rho_Phi_normalization}
\frac{(2\pi\hbar)^3}{2|\mathbf k|}\int_0^\infty dr\,r^2\rho(r)\int_{S^2}d\Omega(\mathbf n)\,\Phi\!\left(|\mathbf k|r(1-\mathbf n_0\cdot\mathbf n)\right)=1,
\end{equation}
where $\mathbf n_0\in S^2$ is arbitrary. Define
\begin{equation}
J(s):=(2\pi\hbar)^3\int_0^\infty dr\,r^2\rho(r)\Phi(sr).
\end{equation}
Then Eq.~\eqref{rho_Phi_normalization} becomes
\begin{equation}
\frac{1}{2|\mathbf k|}\int_{S^2}d\Omega(\mathbf n)\,J\!\left(|\mathbf k|(1-\mathbf n_0\cdot\mathbf n)\right)=1.
\end{equation}
Using spherical coordinates with $u=\mathbf n_0\cdot\mathbf n$, this reads
\begin{equation}
\frac{\pi}{|\mathbf k|}\int_{-1}^{1}du\,J\!\left(|\mathbf k|(1-u)\right)=1.
\end{equation}
With the change of variables $s=|\mathbf k|(1-u)$, we obtain
\begin{equation}\label{J_integral_identity}
\frac{\pi}{|\mathbf k|^2}\int_0^{2|\mathbf k|}ds\,J(s)=1.
\end{equation}
Since this holds for every $|\mathbf k|>0$, differentiating Eq.~\eqref{J_integral_identity} gives
\begin{equation}\label{J_s}
J(s)=\frac{s}{2\pi}
\end{equation}
for almost every $s>0$. Substituting this identity into Eq.~\eqref{Pi_QOmega_before_J} yields
\begin{align}
\Pi_{\mathbb R^3\times Q_\Omega}(\mathbf k)
&=
\frac{1}{2|\mathbf k|}\int_\Omega d\Omega(\mathbf n)\,J\!\left(|\mathbf k|(1-\mathbf n_{\mathbf k}\cdot\mathbf n)\right) \nonumber\\
&=
\frac{1}{4\pi}\int_\Omega d\Omega(\mathbf n)\,\left(1-\mathbf n_{\mathbf k}\cdot\mathbf n\right),
\end{align}
which proves Eq.~\eqref{Pi_QOmega_angular_response}.

It remains to prove the area bounds. Fix $\mathbf k\in\mathbb R^3_*$ and write $\mathbf n':=\mathbf n_{\mathbf k}$. Define
\begin{equation}
I_{\mathbf n'}(\Omega):=\int_\Omega d\Omega(\mathbf n)\,\left(1-\mathbf n\cdot\mathbf n'\right).
\end{equation}
Then $\Pi_{\mathbb R^3\times Q_\Omega}(\mathbf k)=I_{\mathbf n'}(\Omega)/(4\pi)$. We now bound $I_{\mathbf n'}(\Omega)$ only in terms of the normalized area $s:=\sigma(\Omega)$.

Choose spherical coordinates around the axis $\mathbf n'$, so that $\mathbf n\cdot\mathbf n'=\cos\theta$. For each polar angle $\theta$, define the angular length of the section of $\Omega$ at latitude $\theta$ by
\begin{equation}
L_\Omega(\theta):=\sin\theta\int_0^{2\pi}d\varphi\,\chi_\Omega(\theta,\varphi).
\end{equation}
Then
\begin{equation}
\operatorname{Area}(\Omega)=\int_0^\pi d\theta\,L_\Omega(\theta)=4\pi s,
\end{equation}
with
\begin{equation}
0\leq L_\Omega(\theta)\leq 2\pi\sin\theta,
\end{equation}
and
\begin{equation}
I_{\mathbf n'}(\Omega)=\int_0^\pi d\theta\,(1-\cos\theta)L_\Omega(\theta).
\end{equation}
The weight $1-\cos\theta$ is monotone increasing in $\theta$. Therefore, under the fixed-area constraint $\operatorname{Area}(\Omega)=\int_0^\pi d\theta\,L_\Omega(\theta)=4\pi s$, the integral is minimized by placing as much angular length as possible near $\theta=0$, and maximized by placing as much angular length as possible near $\theta=\pi$.

For the lower bound, let $R_s^+$ be the spherical cap centered at $\mathbf n'$ with normalized area $s$, namely
\begin{equation}
R_s^+:=\{(\theta,\varphi):0\leq\theta\leq\theta_0\},
\end{equation}
where $\theta_0$ is determined by
\begin{equation}
4\pi s=\operatorname{Area}(R_s^+)=2\pi\int_0^{\theta_0}d\theta\,\sin\theta=2\pi(1-\cos\theta_0).
\end{equation}
Hence $\cos\theta_0=1-2s$. Therefore,
\begin{align}
I_{\mathbf n'}(\Omega)
&\geq I_{\mathbf n'}(R_s^+) \nonumber\\
&=2\pi\int_0^{\theta_0}d\theta\,\sin\theta\,(1-\cos\theta) \nonumber\\
&=2\pi\int_{\cos\theta_0}^{1}du\,(1-u)
=4\pi s^2.
\end{align}
For the upper bound, the extremal region is the cap $R_s^-$ centered at $-\mathbf n'$, with the same normalized area $s$. The analogous computation gives
\begin{equation}
I_{\mathbf n'}(\Omega)\leq I_{\mathbf n'}(R_s^-)=4\pi(2s-s^2).
\end{equation}
Dividing by $4\pi$, we obtain
\begin{equation}
s^2\leq\Pi_{\mathbb R^3\times Q_\Omega}(\mathbf k)\leq 2s-s^2=1-(1-s)^2.
\end{equation}
Since $s=\sigma(\Omega)$, this proves Eq.~\eqref{Pi_QOmega_area_bounds}.
\end{proof}

\subsubsection{Proof of Proposition~\ref{prop:weak-momentum-compatibility-quasi-localization}}\label{proof:weak-momentum-compatibility-quasi-localization}

\begin{proof}
The momentum spectral measure is diagonal in the generalized momentum basis:
\begin{equation}\label{k_E_Q_k_prime}
\braket{\mathbf k|\hat E_{\hat{\mathbf P}}(Q)|\mathbf k'}=
2|\mathbf k|\,\delta^3(\mathbf k-\mathbf k')\,\chi_Q(\mathbf k).
\end{equation}
Moreover, by Proposition~\ref{prop:diagonal-momentum-marginal}, the momentum-marginal effect $\hat\Pi_{\mathbb R^3\times Q}$ is diagonal with multiplier $\Pi_{\mathbb R^3\times Q}(\mathbf k)$.

Since $\mathbb R^3\times Q\triangleleft\mathcal P_{\rm CG}$, there exists a coarse partition $P\in\mathcal P_{\rm CG}$ such that $\mathbb R^3\times Q\in P$. The dichotomic partition
\begin{equation}
\left\{\mathbb R^3\times Q,\mathbb R^3\times Q^{\rm c}\right\},
\qquad
Q^{\rm c}:=\mathbb R^3_*\setminus Q,
\end{equation}
is a coarsening of $P$: it keeps the cell $\mathbb R^3\times Q$ unchanged and merges all the remaining cells into its complement $\mathbb R^3\times Q^{\rm c}=\Gamma\setminus(\mathbb R^3\times Q)$. Therefore, by closure under coarsening, Eq.~\eqref{closure_under_coarsening_definition}, this dichotomic partition belongs to $\mathcal P_{\rm CG}$.

Applying weak classical--quantum momentum compatibility, Definition~\ref{def:weak-momentum-compatibility}, to the partition $\{Q,Q^{\rm c}\}$ gives
\begin{equation}
\inf_{\substack{\ket{\psi}\in\mathcal H\\ \braket{\psi|\psi}=1}}
\left[
\braket{\psi|\hat E_{\hat{\mathbf P}}(Q)\hat\Pi_{\mathbb R^3\times Q^{\rm c}}\hat E_{\hat{\mathbf P}}(Q)|\psi}
+
\braket{\psi|\hat E_{\hat{\mathbf P}}(Q^{\rm c})\hat\Pi_{\mathbb R^3\times Q}\hat E_{\hat{\mathbf P}}(Q^{\rm c})|\psi}
\right]<\epsilon.
\end{equation}
Using the diagonal forms of $\hat E_{\hat{\mathbf P}}(Q)$, $\hat E_{\hat{\mathbf P}}(Q^{\rm c})$, $\hat\Pi_{\mathbb R^3\times Q}$, and $\hat\Pi_{\mathbb R^3\times Q^{\rm c}}$, together with
\begin{equation}
\Pi_{\mathbb R^3\times Q^{\rm c}}(\mathbf k)=1-\Pi_{\mathbb R^3\times Q}(\mathbf k),
\end{equation}
we obtain
\begin{equation}
\inf_{\substack{\ket{\psi}\in\mathcal H\\ \braket{\psi|\psi}=1}}
\int_{\mathbb R^3_*}\frac{d^3\mathbf k}{2|\mathbf k|}\,|\psi(\mathbf k)|^2
\left\{
\chi_Q(\mathbf k)+[\chi_{Q^{\rm c}}(\mathbf k)-\chi_Q(\mathbf k)]\Pi_{\mathbb R^3\times Q}(\mathbf k)
\right\}<\epsilon.
\end{equation}
Equivalently, setting
\begin{equation}
f(\mathbf k):=\frac{|\psi(\mathbf k)|^2}{2|\mathbf k|},
\end{equation}
and
\begin{equation}
\mathcal D(\mathbb R^3_*):=\left\{f\in L^1(\mathbb R^3_*):f(\mathbf k)\geq0\ \text{a.e.},\ \int_{\mathbb R^3_*}d^3\mathbf k\,f(\mathbf k)=1\right\},
\end{equation}
this becomes
\begin{equation}
\inf_{f\in\mathcal D(\mathbb R^3_*)}
\int_{\mathbb R^3_*}d^3\mathbf k\,f(\mathbf k)
\left\{
\chi_Q(\mathbf k)+[\chi_{Q^{\rm c}}(\mathbf k)-\chi_Q(\mathbf k)]\Pi_{\mathbb R^3\times Q}(\mathbf k)
\right\}<\epsilon.
\end{equation}
Since normalized wavefunctions generate arbitrary probability densities of this form, this implies
\begin{equation}
\operatorname*{ess\,inf}_{\mathbf k\in\mathbb R^3_*}
\left\{
\chi_Q(\mathbf k)+[\chi_{Q^{\rm c}}(\mathbf k)-\chi_Q(\mathbf k)]\Pi_{\mathbb R^3\times Q}(\mathbf k)
\right\}<\epsilon,
\end{equation}
which proves Eq.~\eqref{momentum_compatibility_essinf}.

Finally, the expression inside braces is equal $1-\Pi_{\mathbb R^3\times Q}(\mathbf k)$ for $\mathbf k\in Q$, and equal to $\Pi_{\mathbb R^3\times Q}(\mathbf k)$ for $\mathbf k\in Q^{\rm c}$. Hence Eq.~\eqref{momentum_compatibility_essinf} is equivalent to
\begin{equation}
\operatorname*{ess\,inf}_{\mathbf k\in Q}\left[1-\Pi_{\mathbb R^3\times Q}(\mathbf k)\right]<\epsilon
\quad\text{or}\quad
\operatorname*{ess\,inf}_{\mathbf k\in Q^{\rm c}}\Pi_{\mathbb R^3\times Q}(\mathbf k)<\epsilon.
\end{equation}
The first inequality is the same as
\begin{equation}
\operatorname*{ess\,sup}_{\mathbf k\in Q}\Pi_{\mathbb R^3\times Q}(\mathbf k)>1-\epsilon.
\end{equation}
Therefore,
\begin{equation}
\operatorname*{ess\,sup}_{\mathbf k\in Q}\Pi_{\mathbb R^3\times Q}(\mathbf k)>1-\epsilon
\quad\text{or}\quad
\operatorname*{ess\,inf}_{\mathbf k\in Q^{\rm c}}\Pi_{\mathbb R^3\times Q}(\mathbf k)<\epsilon,
\end{equation}
which proves Eq.~\eqref{compatibility_double}.
\end{proof}

\subsubsection{Proof of Theorem~\ref{thm:no-go-constraints}}\label{proof:no-go-constraints}

\begin{proof}
By minimal directional information gain, Definition~\ref{def:minimal-directional-information-gain}, there exists a measurable angular region $\Omega\subseteq S^2$ such that
\begin{equation}\label{proof_delta_sigma}
\frac{1}{2}(1-\delta)\leq\sigma(\Omega)\leq\frac{1}{2},
\end{equation}
and
\begin{equation}
\left\{\mathbb R^3\times Q_\Omega,\Gamma\setminus(\mathbb R^3\times Q_\Omega)\right\}\in\mathcal P_{\rm CG}.
\end{equation}
Set
\begin{equation}
s:=\sigma(\Omega).
\end{equation}
Then
\begin{equation}\label{s_interval}
\frac{1}{2}(1-\delta)\leq s\leq\frac{1}{2}.
\end{equation}
Since $\mathbb R^3\times Q_\Omega$ is a cell of a coarse partition, we have $\mathbb R^3\times Q_\Omega\triangleleft\mathcal P_{\rm CG}$. By Proposition~\ref{prop:angular-response-area-bounds}, the diagonal multiplier of $\hat\Pi_{\mathbb R^3\times Q_\Omega}$ satisfies
\begin{equation}\label{area_bounds_s}
s^2\leq \Pi_{\mathbb R^3\times Q_\Omega}(\mathbf k)\leq 1-(1-s)^2
\end{equation}
for almost every $\mathbf k\in\mathbb R^3_*$.

Assume first that the scheme satisfies weak non-invasive repeatability with tolerance $\epsilon$. By Proposition~\ref{prop:weak-repeatability-almost-projective}, applied to $Q=Q_\Omega$, one has
\begin{equation}\label{repeatability_projective_proof}
\operatorname*{ess\,inf}_{\mathbf k\in\mathbb R^3_*}\Pi_{\mathbb R^3\times Q_\Omega}(\mathbf k)<a_\epsilon
\quad\text{or}\quad
\operatorname*{ess\,sup}_{\mathbf k\in\mathbb R^3_*}\Pi_{\mathbb R^3\times Q_\Omega}(\mathbf k)>1-a_\epsilon,
\end{equation}
where $a_\epsilon$ is defined in Eq.~\eqref{a_epsilon_definition}. Combining this with Eq.~\eqref{area_bounds_s}, we obtain
\begin{equation}
s^2<a_\epsilon
\quad\text{or}\quad
1-(1-s)^2>1-a_\epsilon.
\end{equation}
Equivalently,
\begin{equation}\label{s_repeatability_alternative}
s<\sqrt{a_\epsilon}
\quad\text{or}\quad
s>1-\sqrt{a_\epsilon}.
\end{equation}
For the range of tolerances considered in the theorem, in particular for $\epsilon<1/6$, one has $a_\epsilon<1/6<1/4$, and therefore $1-\sqrt{a_\epsilon}>1/2$. Since minimal directional information gain gives $s\leq1/2$, the second alternative in Eq.~\eqref{s_repeatability_alternative} is impossible. Hence $s<\sqrt{a_\epsilon}$. Combining this with Eq.~\eqref{s_interval}, namely $s\geq(1-\delta)/2$, gives
\begin{equation}
\frac{1}{2}(1-\delta)<\sqrt{a_\epsilon}.
\end{equation}
Equivalently,
\begin{equation}\label{proof_repeatability_constraint}
\delta+2\sqrt{a_\epsilon}>1.
\end{equation}
This proves Eq.~\eqref{no_go_repeatability_constraint}. In particular, if $\epsilon<1/6$ and $\delta<1/6$, then $a_\epsilon<\epsilon<1/6$, and hence
\begin{equation}
\delta+2\sqrt{a_\epsilon}
<
\frac{1}{6}+2\sqrt{\frac{1}{6}}
<1,
\end{equation}
which contradicts Eq.~\eqref{proof_repeatability_constraint}. Therefore weak non-invasive repeatability with $\epsilon<1/6$ is incompatible with minimal directional information gain with $\delta<1/6$.

Assume now that the scheme satisfies weak classical--quantum momentum compatibility with tolerance $\epsilon'$. By Proposition~\ref{prop:weak-momentum-compatibility-quasi-localization}, applied to $Q=Q_\Omega$, one has
\begin{equation}\label{momentum_quasilocalized_proof}
\operatorname*{ess\,sup}_{\mathbf k\in Q_\Omega}\Pi_{\mathbb R^3\times Q_\Omega}(\mathbf k)>1-\epsilon'
\quad\text{or}\quad
\operatorname*{ess\,inf}_{\mathbf k\in Q_\Omega^{\rm c}}\Pi_{\mathbb R^3\times Q_\Omega}(\mathbf k)<\epsilon'.
\end{equation}
Combining Eq.~\eqref{momentum_quasilocalized_proof} with the global bounds \eqref{area_bounds_s}, we obtain
\begin{equation}
1-(1-s)^2>1-\epsilon'
\quad\text{or}\quad
s^2<\epsilon'.
\end{equation}
Equivalently,
\begin{equation}\label{s_momentum_alternative}
s>1-\sqrt{\epsilon'}
\quad\text{or}\quad
s<\sqrt{\epsilon'}.
\end{equation}
For the range of tolerances considered in the theorem, in particular for $\epsilon'<1/6$, one has $1-\sqrt{\epsilon'}>1/2$. Since $s\leq1/2$, the first alternative in Eq.~\eqref{s_momentum_alternative} is impossible. Hence $s<\sqrt{\epsilon'}$. Combining this with Eq.~\eqref{s_interval} gives
\begin{equation}
\frac{1}{2}(1-\delta)<\sqrt{\epsilon'}.
\end{equation}
Equivalently,
\begin{equation}\label{proof_momentum_constraint}
\delta+2\sqrt{\epsilon'}>1.
\end{equation}
This proves Eq.~\eqref{no_go_momentum_constraint}. In particular, if $\epsilon'<1/6$ and $\delta<1/6$, then
\begin{equation}
\delta+2\sqrt{\epsilon'}
<
\frac{1}{6}+2\sqrt{\frac{1}{6}}
<1,
\end{equation}
which contradicts Eq.~\eqref{proof_momentum_constraint}. Therefore weak classical--quantum momentum compatibility with $\epsilon'<1/6$ is incompatible with minimal directional information gain with $\delta<1/6$.
\end{proof}

\subsection{Proofs of Sect.~\ref{massive}}

\subsubsection{Proof of Proposition \ref{inf_repeatability_bound_nonrelativistic}} \label{inf_repeatability_bound_nonrelativistic_proof}

\begin{proof}
Let $ C \in P$ be the cell containing the ball $B_R(\mathbf{x}_0,\mathbf{p}_0)$, and choose the trial state
\begin{equation}
    \ket{\psi}=\ket{\mathbf{x}_0,\mathbf{p}_0}.
\end{equation}
Using the resolution of the identity we obtain
\begin{align}
    \sum_{i=1}^N\braket{\psi|\hat{\Pi}_{C_i}-\hat{\Pi}_{ C_i}^2|\psi}
    &=1-\sum_{i=1}^N\braket{\psi|\hat{\Pi}_{ C_i}^2|\psi}.
\end{align}
Since every term in the sum is non-negative, we may keep only the contribution from the cell $ C$:
\begin{align}
    \sum_{i=1}^N\braket{\psi|\hat{\Pi}_{ C_i}-\hat{\Pi}_{ C_i}^2|\psi}
    &\leq 1-\braket{\psi|\hat{\Pi}_{ C}^2|\psi}.
\end{align}
Moreover, since $\hat{\Pi}_{ C}$ is self-adjoint, the Cauchy-Schwarz inequality gives
\begin{equation}
    \braket{\psi|\hat{\Pi}_{ C}|\psi}^2 \leq \braket{\psi|\psi}\braket{\psi|\hat{\Pi}_{ C}^2|\psi},
\end{equation}
with $\braket{\psi|\psi}=1$.
Therefore,
\begin{align} \label{estimate_1}
    \sum_{i=1}^N\braket{\psi|\hat{\Pi}_{ C_i}-\hat{\Pi}_{ C_i}^2|\psi}
    &\leq 1-\braket{\psi|\hat{\Pi}_{ C}|\psi}^2.
\end{align}
We now estimate the expectation value. By definition,
\begin{align}
    \braket{\psi|\hat{\Pi}_{ C}|\psi}
    &=\int_{ C}\frac{d^3\mathbf{x}d^3\mathbf{p}}{(2\pi\hbar)^3}|\braket{\mathbf{x},\mathbf{p}|\mathbf{x}_0,\mathbf{p}_0}|^2.
\end{align}
Using Eq.~\eqref{overlap} and $B_R(\mathbf{x}_0,\mathbf{p}_0)\subseteq  C$, we get
\begin{align}
    \braket{\psi|\hat{\Pi}_{ C}|\psi}
    &\geq \int_{B_R(\mathbf{x}_0,\mathbf{p}_0)}\frac{d^3\mathbf{x}d^3\mathbf{p}}{(2\pi\hbar)^3}\exp\left(-\frac{|\mathbf{x}-\mathbf{x}_0|^2}{2\sigma^2}-\frac{\sigma^2|\mathbf{p}-\mathbf{p}_0|^2}{2\hbar^2}\right).
\end{align}
For the dimensionless variables
\begin{equation}
    \mathbf{u}=\frac{\mathbf{x}-\mathbf{x}_0}{\sqrt{2}\sigma},\qquad \mathbf{v}=\frac{\sigma(\mathbf{p}-\mathbf{p}_0)}{\sqrt{2}\hbar},
\end{equation}

this becomes

\begin{align}
    \braket{\psi|\hat{\Pi}_{ C}|\psi}
    &\geq \frac{1}{\pi^3}\int_{|\mathbf{u}|^2+|\mathbf{v}|^2<R^2}d^3\mathbf{u}d^3\mathbf{v}\,e^{-|\mathbf{u}|^2-|\mathbf{v}|^2}.
\end{align}
Writing $r^2=|\mathbf{u}|^2+|\mathbf{v}|^2$ in $\mathbb R^6$, and using that the area of the unit sphere $S^5$ is $\pi^3$, we find
\begin{align}  
    \braket{\psi|\hat{\Pi}_{ C}|\psi} &\geq \frac{1}{\pi^3}\int_{|\mathbf{u}|^2+|\mathbf{v}|^2<R^2}d^3\mathbf{u}d^3\mathbf{v}\,e^{-|\mathbf{u}|^2-|\mathbf{v}|^2}\\
    &=\int_0^R dr\,r^5e^{-r^2} \\
    &=1-e^{-R^2}\left(1+R^2+\frac{R^4}{2}\right). \label{estimate_2}
\end{align}

Combining the inequalities \eqref{estimate_1} and \eqref{estimate_2} yields
\begin{align}
    \sum_{i=1}^N\braket{\psi|\hat{\Pi}_{ C_i}-\hat{\Pi}_{ C_i}^2|\psi}
    &\leq 1-\left(1-e^{-R^2}\left(1+R^2+\frac{R^4}{2}\right)\right)^2.
\end{align}

This proves the claim, as the infimum is bounded by any trial state.
\end{proof}

\subsubsection{Proof of Proposition \ref{inf_momentum_compatibility_nonrelativistic}} \label{inf_momentum_compatibility_nonrelativistic_proof}

\begin{proof}
Let $ C=\mathbb R^3\times Q$ be the cell containing $B_{R'}(\mathbf{x}_0,\mathbf{p}_0)$, and choose the trial state
\begin{equation}
    \ket{\psi}=\ket{\mathbf{x}_0,\mathbf{p}_0}.
\end{equation}
Since $ C=\mathbb R^3\times Q$ contains $B_{R'}(\mathbf{x}_0,\mathbf{p}_0)$, we have
\begin{equation}
    \left\{\mathbf{p}\in\mathbb R^3:\frac{\sigma^2|\mathbf{p}-\mathbf{p}_0|^2}{2\hbar^2}<R'^2\right\}\subseteq Q.
\end{equation}

Using $\sum_{i=1}^N\hat E_{\mathbf{\hat P}}(Q_i)=\hat{\mathbb I}$ and positivity, we find

\begin{align}
    &\sum_{i=1}^N\braket{\psi|\hat E_{\mathbf{\hat P}}(Q_i)-\hat E_{\mathbf{\hat P}}(Q_i)\hat{\Pi}_{\mathbb R^3 \times Q_i} \hat E_{\mathbf{\hat P}}(Q_i)|\psi} \notag\\
    &\leq 1-\braket{\psi|\hat E_{\hat{\mathbf P}}(Q)\hat{\Pi}_{\mathbb R^3\times Q}\hat E_{\hat{\mathbf P}}(Q)|\psi}. \label{estimate_3}
\end{align}

We compute the remaining expectation value in momentum space. Since $\hat{\Pi}_{\mathbb R^3\times Q}$ is obtained by integrating the coherent-state POVM over all positions, it acts in momentum space by multiplication, with
\begin{equation}
    F_{Q}(\mathbf{k})=\int_{Q}d^3\mathbf{p}\left(\frac{\sigma^2}{\pi\hbar^2}\right)^{3/2}\exp\left[-\frac{\sigma^2}{\hbar^2}|\mathbf{k}-\mathbf{p}|^2\right].
\end{equation}
Therefore,
\begin{align}
    \braket{\psi|\hat E_{\hat{\mathbf P}}(Q)\hat{\Pi}_{\mathbb R^3\times Q}\hat E_{\hat{\mathbf P}}(Q)|\psi}
    &=\int_{Q}d^3\mathbf{k}\,|\braket{\mathbf{k}|\mathbf{x}_0,\mathbf{p}_0}|^2F_{Q}(\mathbf{k}).
\end{align}
Using Eq.~\eqref{overlap_momentum_position} we get
\begin{align}
    &\braket{\psi|\hat E_{\hat{\mathbf P}}(Q)\hat{\Pi}_{\mathbb R^3\times Q}\hat E_{\hat{\mathbf P}}(Q)|\psi} \notag\\
    &=\int_{Q}d^3\mathbf{k}\int_{Q}d^3\mathbf{p}\left(\frac{\sigma^2}{\pi\hbar^2}\right)^3\exp\left[-\frac{\sigma^2}{\hbar^2}|\mathbf{k}-\mathbf{p}_0|^2-\frac{\sigma^2}{\hbar^2}|\mathbf{k}-\mathbf{p}|^2\right].
\end{align}
With dimensionless momentum variables
\begin{equation}
    \mathbf{u}=\frac{\sigma(\mathbf{k}-\mathbf{p}_0)}{\hbar},\qquad \mathbf{v}=\frac{\sigma(\mathbf{p}-\mathbf{p}_0)}{\hbar},
\end{equation}

this becomes

\begin{align}
    \braket{\psi|\hat E_{\hat{\mathbf P}}(Q)\hat{\Pi}_{\mathbb R^3\times Q}\hat E_{\hat{\mathbf P}}(Q)|\psi}
    &\geq \frac{1}{\pi^3}\int_{|\mathbf{u}|<\sqrt{2}R'}d^3\mathbf{u}\int_{|\mathbf{v}|<\sqrt{2}R'}d^3\mathbf{v}\,e^{-|\mathbf{u}|^2-|\mathbf{u}-\mathbf{v}|^2}.
\end{align}
We now restrict the integration domain further. If $|\mathbf{u}|<\frac{R'}{\sqrt{2}}$ and $|\mathbf{v}-\mathbf{u}|<\frac{R'}{\sqrt{2}}$, then
\begin{equation}
    |\mathbf{v}|\leq |\mathbf{u}|+|\mathbf{v}-\mathbf{u}|<\sqrt{2}R'.
\end{equation}
Thus this smaller domain is contained in the previous integration domain, and therefore
\begin{align}
    \braket{\psi|\hat E_{\hat{\mathbf P}}(Q)\hat{\Pi}_{\mathbb R^3\times Q}\hat E_{\hat{\mathbf P}}(Q)|\psi}
    &\geq \frac{1}{\pi^3}\int_{|\mathbf{u}|<R'/\sqrt{2}}d^3\mathbf{u}\int_{|\mathbf{v}-\mathbf{u}|<R'/\sqrt{2}}d^3\mathbf{v}\,e^{-|\mathbf{u}|^2-|\mathbf{u}-\mathbf{v}|^2}.
\end{align}
With the change of variables
\begin{equation}
    \mathbf{w}=\mathbf{v}-\mathbf{u},
\end{equation}
we obtain
\begin{align}
    \braket{\psi|\hat E_{\hat{\mathbf P}}(Q)\hat{\Pi}_{\mathbb R^3\times Q}\hat E_{\hat{\mathbf P}}(Q)|\psi}
    &\geq \frac{1}{\pi^3}\int_{|\mathbf{u}|<R'/\sqrt{2}}d^3\mathbf{u}\int_{|\mathbf{w}|<R'/\sqrt{2}}d^3\mathbf{w}\,e^{-|\mathbf{u}|^2-|\mathbf{w}|^2} \\
    &=\left(\frac{1}{\pi^{3/2}}\int_{|\mathbf{u}|<R'/\sqrt{2}}d^3\mathbf{u}\,e^{-|\mathbf{u}|^2}\right)^2\\
    &= \left(\operatorname{erf}\left(\frac{R'}{\sqrt{2}}\right)-\sqrt{\frac{2}{\pi}}R'e^{-R'^2/2}\right)^2.
\end{align}

Combining this with \eqref{estimate_3} yields the claim

\begin{equation}
    \inf_{\ket{\psi} \in \mathcal H} \sum_{i=1}^N\braket{\psi|\hat E_{\mathbf{\hat P}}(Q_i)-\hat E_{\mathbf{\hat P}}(Q_i)\hat{\Pi}_{\mathbb R^3 \times Q_i} \hat E_{\mathbf{\hat P}}(Q_i)|\psi} \le 1 - \left(\operatorname{erf}\left(\frac{R'}{\sqrt{2}}\right)-\sqrt{\frac{2}{\pi}}R'e^{-R'^2/2}\right)^2.
\end{equation}
\end{proof}

\subsubsection{Proof of Proposition \ref{inferential_completeness_rectangles}} \label{inferential_completeness_rectangles_proof}

\begin{proof}
We first show that rectangles of arbitrary size belongs to $\mathcal D(\mathcal R_\ell)$. Consider an arbitrary
\begin{equation}
    R=(a_1,b_1]\times\cdots\times(a_6,b_6].
\end{equation}
Choose numbers $c_j>b_j$ such that  $c_j-b_j\geq\ell \, \forall j$. Then, we also have $c_j-a_j\geq\ell \, \forall j$. For each subset $J\subseteq\{1,\dots,6\}$, define
\begin{equation}
    R_J:=\prod_{j=1}^6 I_j^{(J)},
    \qquad
    I_j^{(J)}:=
    \begin{cases}
        (b_j,c_j], & j\in J,\\
        (a_j,c_j], & j\notin J.
    \end{cases}
\end{equation}
Every $R_J$ belongs to $\mathcal R_\ell$, because each factor has length either $c_j-b_j\geq\ell$ or $c_j-a_j\geq\ell$. Moreover,
\begin{equation}
    R=R_\varnothing\setminus \bigcup_{\emptyset\neq J\subseteq\{1,\dots,6\}} R_J.
\end{equation}
To make this a disjoint Dynkin-system construction, set
\begin{equation}
    A_J:=\left(\prod_{j\in J}(b_j,c_j]\right)\times
    \left(\prod_{j\notin J}(a_j,b_j]\right).
\end{equation}
Then the rectangles $A_J$ are pairwise disjoint and
\begin{equation}
    R_\varnothing
    =
    \bigsqcup_{J\subseteq\{1,\dots,6\}} A_J.
\end{equation}
For $J\neq\varnothing$, the set $A_J$ can be written as a finite alternating difference of the coarse rectangles $R_K$ with $K\supseteq J$. Explicitly,
\begin{equation}
    A_J
    =
    R_J\setminus \bigsqcup_{K\supsetneq J} A_K.
\end{equation}

Starting from the maximal sets $J=\{1,\dots,6\}$ and proceeding by downwards induction in $|J|$, this shows that each $A_J$ with $J\neq\varnothing$ belongs to $\mathcal D(\mathcal R_\ell)$. Since $\mathcal D(\mathcal R_\ell)$ is closed under disjoint unions, it follows that
\begin{equation}
    \bigsqcup_{\emptyset\neq J\subseteq\{1,\dots,6\}} A_J
    \in
    \mathcal D(\mathcal R_\ell).
\end{equation}
This set is contained in $R_\varnothing$, and $R_\varnothing\in\mathcal R_\ell\subseteq\mathcal D(\mathcal R_\ell)$. Since a Dynkin system is closed under relative complements of nested sets, we obtain
\begin{equation}
    R
    =
    R_\varnothing
    \setminus
    \bigsqcup_{\emptyset\neq J\subseteq\{1,\dots,6\}} A_J
    \in
    \mathcal D(\mathcal R_\ell).
\end{equation}
Thus every bounded half-open rectangle in $\Gamma$ belongs to $\mathcal D(\mathcal R_\ell)$.

The bounded half-open rectangles form a $\pi$-system generating the Borel $\sigma$-algebra $\mathcal B(\Gamma)$. Therefore, by the $\pi$-$\lambda$ theorem (Theorem 3.2 in \cite{Billingsley1995}),
\begin{equation}
    \mathcal B(\Gamma)\subseteq \mathcal D(\mathcal R_\ell).
\end{equation}
Conversely, since every element of $\mathcal R_\ell$ is Borel, and $\mathcal B(\Gamma)$ is itself a Dynkin system, minimality implies

\begin{equation}
    \mathcal D(\mathcal R_\ell)=\mathcal B(\Gamma).
\end{equation}

By definition, $\sigma_\Gamma$ is the completion of the Borel algebra $B(\Gamma)$. Thus,

\begin{equation}
    \overline{\mathcal D(\mathcal R_\ell)}^\nu= \overline{\mathcal B(\Gamma)}^\nu = \sigma_\Gamma.
\end{equation}

Finally, suppose that
\begin{equation}
    \mathcal R_\ell\subseteq \mathcal G(\mathcal P_{\mathrm{CG}}).
\end{equation}
Then, by monotonicity of Dynkin closure and $\nu$-completion,
\begin{equation}
    \overline{\mathcal D(\mathcal R_\ell)}^\nu
    \subseteq
    \overline{\mathcal D(\mathcal G(\mathcal P_{\mathrm{CG}}))}^\nu.
\end{equation}
This implies
\begin{equation}
    \sigma_\Gamma
    =
    \overline{\mathcal D(\mathcal R_\ell)}^\nu
    \subseteq
    \overline{\mathcal D(\mathcal G(\mathcal P_{\mathrm{CG}}))}^\nu
    \subseteq
    \sigma_\Gamma,
\end{equation}
and therefore
\begin{equation}
    \overline{\mathcal D(\mathcal G(\mathcal P_{\mathrm{CG}}))}^\nu
    =
    \sigma_\Gamma.
\end{equation}
\end{proof}

\subsubsection{Proof of Eq.~\eqref{res_identity_massive_rel}} \label{res_identity_massive_rel_proof}

\begin{proof}
Writing
\begin{equation}
    C_\alpha^2:=\frac{\alpha}{\pi mK_1(2\alpha m)},
\end{equation}
the coherent states satisfy
\begin{equation}
    \braket{\mathbf{k}|\mathbf{x},\mathbf{p}}
    =
    C_\alpha^{1/2}
    \exp\!\left(-\frac{i}{\hbar}\mathbf{k}\cdot\mathbf{x}\right)
    \exp\!\left(-\alpha\frac{p_\mu k^\mu}{m}\right),
\end{equation}
where \(p=(p^0,\mathbf{p})\), \(p^0=\sqrt{m^2+|\mathbf{p}|^2}\). Consider the operator
\begin{equation}
    \hat A
    :=
    \int_{\Gamma}
    \frac{d^3\mathbf{x}\,d^3\mathbf{p}}{(2\pi\hbar)^3}
    \ket{\mathbf{x},\mathbf{p}}\bra{\mathbf{x},\mathbf{p}}.
\end{equation}
Its momentum-space kernel is
\begin{align}
    A(\mathbf{k},\mathbf{k}') := \braket{\mathbf{k}|\hat A| \mathbf{k'}} 
    &=
    C_\alpha^2
    \int
    \frac{d^3\mathbf{x}\,d^3\mathbf{p}}{(2\pi\hbar)^3}
    \exp\!\left(-\frac{i}{\hbar}(\mathbf{k}-\mathbf{k}')\cdot\mathbf{x}\right)
    \exp\!\left(-\alpha\frac{p\cdot k+p\cdot k'}{m}\right) \\
    &=
    C_\alpha^2
    \delta^{(3)}(\mathbf{k}-\mathbf{k}')
    \int d^3\mathbf{p}\,
    \exp\!\left(-2\alpha\frac{p\cdot k}{m}\right).
\end{align}
It remains to evaluate
\begin{equation}
    I(k)
    :=
    \int \frac{d^3\mathbf{p}}{p^0}\,
    p^0 \exp\!\left(-2\alpha\frac{p\cdot k}{m}\right).
\end{equation}

The measure \(d^3\mathbf{p}/p^0\) is Lorentz invariant. Thus, applying a Lorentz transformation $\Lambda_{\mathbf k}$ so that $\Lambda_{\mathbf k} k = (m,\mathbf 0)$, yields

\begin{equation}
    I(k) = \int \frac{d^3\mathbf{p}}{p^0}\,
    (\Lambda_{\mathbf k}^{-1} p)^0 \exp\!\left(-2\alpha p^0\right) = \frac{k^0}{m} \int d^3\mathbf{p} \,
    e^{-2\alpha p^0},
\end{equation} 

as $(\Lambda_{\mathbf k}^{-1} p)^0 = \frac{k^0}{m} p^0 + \text{ terms odd in } \mathbf{p}$, and the latter terms vanish when integrating against the rotationally symmetric kernel. Using the substitution
\begin{equation}
    p^0=m\cosh t,
\end{equation}
this can be evaluated to
\begin{equation}
    I(k) = \frac{2\pi m}{\alpha}K_2(2\alpha m)\,k^0.
\end{equation}
Therefore
\begin{align}
    A(\mathbf{k},\mathbf{k}')
    &=
    \frac{K_2(2\alpha m)}{K_1(2\alpha m)}
    2k^0\delta^{(3)}(\mathbf{k}-\mathbf{k}').
\end{align}
With respect to the measure \(d^3\mathbf{k}/(2k^0)\), the identity operator has kernel
\begin{equation}
    \mathbb I(\mathbf{k},\mathbf{k}')
    =
    2k^0\delta^{(3)}(\mathbf{k}-\mathbf{k}').
\end{equation}
Hence
\begin{equation}
    \hat A
    =
    \frac{K_2(2\alpha m)}{K_1(2\alpha m)} \mathbb{\hat I}.
\end{equation}
\end{proof}

\subsubsection{Proof of Eq.~\eqref{covariance_massive_relativistic}} \label{covariance_massive_relativistic_proof}

\begin{proof}
The coherent states have wavefunction
\begin{equation}
    \psi_{\mathbf x,\mathbf p}(\mathbf k) := \braket{\mathbf{k}|\mathbf{x},\mathbf{p}}
    =
    C_\alpha^{1/2}
    \exp\!\left(-\frac{i}{\hbar}\mathbf{k}\cdot\mathbf{x}\right)
    \exp\!\left(-\alpha\frac{p \cdot k}{m}\right).
\end{equation}

Thus, for $x = (0,\mathbf x)$,

\begin{align}
    \braket{\mathbf{k}|\hat U(T_a) \hat U(\Lambda)| \mathbf{x},\mathbf{p}} &=  e^{-\frac{i}{\hbar} k\cdot a} \psi_{\mathbf x, \mathbf p}(\Lambda^{-1}(\mathbf{k})) \\
    &= C_\alpha^{1/2}
    \exp\!\left(-\frac{i}{\hbar}((\Lambda^{-1}k)\cdot x + k\cdot a \right)
    \exp\!\left(-\alpha\frac{p \cdot (\Lambda^{-1}) k}{m}\right)\\
    &= C_\alpha^{1/2}
    \exp\!\left(-\frac{i}{\hbar}k\cdot(\Lambda x + a)\right)
    \exp\!\left(-\alpha\frac{(\Lambda p) \cdot k}{m}\right)\\
    &= C_\alpha^{1/2}
    \exp\!\left(-\frac{i}{\hbar}\mathbf{k}\cdot\mathbf{x'}\right)
    \exp\!\left(-\alpha\frac{p' \cdot k}{m}\right)\\
    &= \braket{\mathbf{k}| \mathbf{x'},\mathbf{p'}},
\end{align}

where we used the assumptions $\Lambda x + a = (0,\mathbf{x'})$ and $\Lambda \cdot (p^0,\mathbf p) = ((p')^0,\mathbf {p'})$. As this holds for all $\mathbf k$, this concludes the proof.
\end{proof}

\subsubsection{Proof of Eq.~\ref{overlap_relativistic}} \label{overlap_relativistic_proof}

\begin{proof}
The coherent-state wavefunction is
\begin{equation}
    \braket{\mathbf{k}|\mathbf{x},\mathbf{p}}=\left(\frac{\alpha}{\pi mK_1(2\alpha m)}\right)^{1/2}\exp\!\left(-\frac{i}{\hbar}k\cdot x\right)\exp\!\left(-\alpha\frac{p\cdot k}{m}\right),
\end{equation}
Therefore
\begin{align}
    \braket{\mathbf{x},\mathbf{p}|\mathbf{x}',\mathbf{p}'} &= \frac{\alpha}{\pi mK_1(2\alpha m)}\int \frac{d^3\mathbf{k}}{2k^0}\exp\!\left(-\frac{\alpha}{m}(p+p')\cdot k+\frac{i}{\hbar}(x-x')\cdot k\right) \\
    &= \frac{\alpha}{\pi mK_1(2\alpha m)}\int \frac{d^3\mathbf{k}}{2k^0}\exp\!\left(-\xi\cdot k\right),
\end{align}
with
\begin{equation} \label{xi}
    \xi^\mu=\frac{\alpha}{m}(p+p')^\mu-\frac{i}{\hbar}(x-x')^\mu.
\end{equation}

First, assume that \(\xi\) is real-valued. Then, by
Lorentz invariance of \(d^3\mathbf{k}/(2k^0)\), the integral depends only
on the invariant \(\xi^2\), and we may evaluate it in the frame
\(\xi^\mu=(\sqrt{\xi^2},\mathbf 0)\). Using the substitution $k^0 = m \cosh t$, this gives
\begin{equation} \label{int_e_xi}
    \int \frac{d^3\mathbf{k}}{2k^0}e^{-\xi\cdot k} =\frac{2\pi m}{\sqrt{\xi^2}}K_1\!\left(m\sqrt{\xi^2}\right).
\end{equation}
To generalise this to complex $\xi$, note that both sides of Eq.~\eqref{int_e_xi} are analytic functions of \(\xi\) as long as $\text{Re} \, \xi$ is future time-like, and hence, the identity extends by analytic continuation to all complex \(\xi\) in this domain. Since, by Eq.~\eqref{xi}, $\text{Re} \, \xi = \frac{\alpha}{m}(p+p')$ is always future time-like, this concludes the proof.

\end{proof}

\subsubsection{Proof of Theorem \ref{coarse_graining_relativistic}} \label{coarse_graining_relativistic_proof}

\begin{proof}
    As discussed in section \ref{massive_relativistic_particle}, the map $C \mapsto \hat \Pi_C$ is a regular, Poincar\'e covariant phase-space POVM. Clearly, the condition defining $\mathcal P_{\text{CG}}$ in \eqref{P_CG_relativistic} is preserved under coarsening, ensuring Eq.~\eqref{closure_under_coarsening_definition} holds. Furthermore, by definition of $\mathcal P_{\mathrm{CG}}(\epsilon,\epsilon')$, for any $\mathbf{p_0} \in \mathbb R^3$, the implication

    \begin{equation}
        C \subseteq \mathbb{R}^3 \times (\mathbb R^3 \setminus H_{R_{\mathrm{min}}(\epsilon,\epsilon')}(\mathbf p_0)) \, \Rightarrow C \, \triangleleft \, \mathcal P_{\mathrm{CG}}(\epsilon,\epsilon'),
    \end{equation}

    holds, as the complement of the hyperbolic ball can be partitioned freely. As massive hyperbolic balls are compact, this ensures that arbitrarily small rectangles are coarse. By applying disjoint countable union and completing by $\nu$-null sets, these small rectangles generate the entire Lebesgue $\sigma$-algebra. Thus, Eq.~\eqref{dynkin_inferential_completeness} holds and we can conclude that $(\Gamma, \hat \Pi, \mathcal P_{\mathrm{CG}}(\epsilon,\epsilon'))$ is indeed a coarse-graining scheme.

The following observation will be used to prove weak non-invasive repeatability and weak quantum-classical momentum compatibility. Let $P \in\mathcal P_{\mathrm{CG}}$. By definition, there exist a cell $C_0\in P$ and a momentum $\mathbf p_0$ such that

\begin{equation}
    \mathbb R^3\times H_R(\mathbf p_0)\subseteq C_0,
\end{equation}

where $R=R_{\mathrm{min}}(\epsilon,\epsilon')$. The position-integrated POVM elements are diagonal in momentum space,

\begin{equation}
    \hat\Pi_{\mathbb R^3\times Q}
    =
    F_Q(\hat{\mathbf P}),
\end{equation}

with

\begin{equation}
    F_Q(\mathbf k)
    =
    \frac{\int_Q \frac{d^3\mathbf p}{2p^0}\,e^{-2\alpha p\cdot k/m}}
    {\int_{\mathbb R^3} \frac{d^3\mathbf p}{2p^0}\,e^{-2\alpha p\cdot k/m}}.
\end{equation}

Since $H_R(\mathbf p_0)\subseteq Q$, we obtain

\begin{equation}
    F_Q(\mathbf p_0)
    \geq
    F_{H_R(\mathbf p_0)}(\mathbf p_0).
\end{equation}

By Lorentz invariance of the integrand and the hyperbolic ball $H_R(\mathbf p_0)$, the latter function may be evaluated in the rest frame of $p_0$. With the substitution

\begin{equation}
    p\cdot p_0=m^2\cosh\rho,
\end{equation}

one finds

\begin{equation}
    F_{H_R(\mathbf p_0)}(\mathbf p_0)
    =
    \eta_{\alpha,m}(R),
\end{equation}

where

\begin{equation}
    \eta_{\alpha,m}(R)
    :=
    \frac{\int_0^R d\rho\,\sinh^2\rho\,e^{-2\alpha m\cosh\rho}}
    {\int_0^\infty d\rho\,\sinh^2\rho\,e^{-2\alpha m\cosh\rho}}.
\end{equation}

This gives a $\mathbf{p_0}$-independent bound for $F_Q(\mathbf p_0)$. Additionally, as the numerator converges monotonically to the denominator,

\begin{equation}
    \eta_{\alpha,m}(R)\rightarrow 1
\end{equation}

as $R\rightarrow\infty$.  

We first use this to prove weak non-invasive repeatability. Let $\ket{\psi_n} \in \mathcal H$ be a sequence of normalized states whose momentum-space wavefunctions converge to a delta distribution at $\mathbf p_0$. Then

\begin{equation}
    \bra{\psi_n}\hat\Pi_{C_0}\ket{\psi_n}
    \rightarrow
    F_{C_0}(\mathbf p_0)
    \geq
    \eta_{\alpha,m}(R).
\end{equation}

Using $\sum_i\hat\Pi_{C_i}=\mathbb{\hat I}$, dropping all terms except $i=0$ and using
$\langle \hat A^2\rangle\geq \langle \hat A\rangle^2$ for positive operators implies

\begin{equation}
    \sum_i
    \bra{\psi_n}
    (\hat\Pi_{C_i}-\hat\Pi_{C_i}^2)
    \ket{\psi_n}
    \leq
    1-\bra{\psi_n}\hat\Pi_{C_0}\ket{\psi_n}^2.
\end{equation}

Taking the limit $n\to\infty$ yields

\begin{equation}
    \inf_{\|\psi\|=1}
    \sum_i
    \bra{\psi}
    (\hat\Pi_{C_i}-\hat\Pi_{C_i}^2)
    \ket{\psi}
    \leq
    1-\eta_{\alpha,m}(R)^2.
\end{equation}

Therefore weak non-invasive repeatability with tolerance $\epsilon$ is guaranteed whenever

\begin{equation}
    1-\eta_{\alpha,m}(R)^2\leq\epsilon.
\end{equation}

We next prove classical-quantum momentum compatibility. For a partition $P = \{\mathbb R^3 \times Q_i\}_{i\in I}$, let

\begin{equation}
    G(\mathbf k)
    :=
    \sum_i
    \chi_{Q_i}(\mathbf k)
    \bigl(1-F_{Q_i}(\mathbf k)\bigr).
\end{equation}

Then

\begin{equation}
    \sum_i
    \hat E_{\hat{\mathbf P}}(Q_i)
    -
    \hat E_{\hat{\mathbf P}}(Q_i)
    \hat\Pi_{\mathbb R^3\times Q_i}
    \hat E_{\hat{\mathbf P}}(Q_i)
    =
    G(\hat{\mathbf P}).
\end{equation}

Evaluating again on the concentrating sequence $\ket{\psi_n}$,

\begin{equation}
    \lim_{n\to\infty}
    \bra{\psi_n}G(\hat{\mathbf P})\ket{\psi_n}
    =
    1-F_{Q_0}(\mathbf p_0)
    \leq
    1-\eta_{\alpha,m}(R).
\end{equation}

Hence

\begin{equation}
    \inf_{\|\psi\|=1}
    \sum_i
    \bra{\psi}
    \hat E_{\hat{\mathbf P}}(Q_i)
    -
    \hat E_{\hat{\mathbf P}}(Q_i)
    \hat\Pi_{\mathbb R^3\times Q_i}
    \hat E_{\hat{\mathbf P}}(Q_i)
    \ket{\psi}
    \leq
    1-\eta_{\alpha,m}(R).
\end{equation}

Thus momentum compatibility with tolerance $\epsilon'$ is guaranteed whenever

\begin{equation}
    1-\eta_{\alpha,m}(R)\leq\epsilon'.
\end{equation}

We can thus choose $R_{\mathrm{min}}(\epsilon,\epsilon')$ large enough so that both

\begin{equation}
    1-\eta_{\alpha,m}(R)^2\leq\epsilon,
    \qquad
    1-\eta_{\alpha,m}(R)\leq\epsilon'
\end{equation}

hold, and be guarateed that the coarse-graining scheme fulfills Defs. \ref{def:weak-noninvasive-repeatability} and \ref{def:weak-momentum-compatibility}.

Finally, consider the hemispherical partition

\begin{equation}
    C^\pm
    =
    \{(\mathbf x,\mathbf p)\in\Gamma:\pm p^1>0\}.
\end{equation}

Choose $\mathbf p_0=(P,0,0)$ with

\begin{equation}
    P>m\sinh R_{\mathrm{min}}.
\end{equation}

Then the hyperbolic ball $H_{R_{\mathrm{min}}}(\mathbf p_0)$ lies entirely in the positive hemisphere, so

\begin{equation}
    \mathbb R^3\times H_{R_{\mathrm{min}}}(\mathbf p_0)
    \subseteq
    C^+.
\end{equation}

Hence $\{C^+,C^-\}\in\mathcal P_{\mathrm{CG}}(\epsilon,\epsilon')$ for any $\epsilon,\epsilon'$. Since this partition distinguishes opposite momentum directions, Def.~\ref{def:minimal-directional-information-gain} is satisfied for arbitrary tolerance $\delta$. This concludes the proof.
    
\end{proof}

\subsection{Proofs of Appendix \ref{appendix_minimal_information}}

\subsubsection{Proof of Proposition \ref{M_delta_proposition}} \label{M_delta_proof}

\begin{proof}
    Let \(\Omega \subseteq S^2\) be the angular region in Eq.~\eqref{Omega}, and set
\[
    \Omega_1 := \Omega,
    \qquad
    \Omega_2 := S^2 \setminus \Omega .
\]
Writing \(a:=\sigma(\Omega)\), we have
\[
    \frac12(1-\delta) < a \leq \frac12 .
\]
The normalized Shannon entropy of the dichotomic partition is
\[
    \frac{H(\{\Omega,S^2\setminus\Omega\})}{\log 2}
    =
    \frac{-a\log a -(1-a)\log(1-a)}{\log 2}.
\]
Since the binary entropy is monotonically increasing on \(0<a\leq 1/2\), it follows that
\[
    \frac{H(\{\Omega,S^2\setminus\Omega\})}{\log 2}
    \geq
    \frac{
    -\frac{1-\delta}{2}\log\!\left(\frac{1-\delta}{2}\right)
    -\frac{1+\delta}{2}\log\!\left(\frac{1+\delta}{2}\right)
    }{\log 2}
    =: M(\delta).
\]
\end{proof}

\subsubsection{Proof of Proposition \ref{M_sigma_Omega}} \label{M_sigma_Omega_proof}

\begin{proof}
We want to show that if
\begin{equation}
    \frac{H(p_1,\cdots,p_N)}{\log N} = \frac{-\sum_{i=1}^N p_i \log p_i}{\log N}
    \ge h_2(\eta)
\end{equation}
for some \(0<\eta<1/3\), then there exists a subset
\(J\subseteq \{1,\dots,N\}\) such that
\begin{equation}
    \eta \le \sum_{j\in J} p_j \le 1-\eta .
\end{equation}

Suppose by contradiction that no such subset exists. Then every subset sum lies outside the interval
\([\eta,1-\eta]\). In particular, no individual cell can have measure in this interval. We claim that there must then exist one cell with measure larger than \(1-\eta\). If this were not the case, then every cell would have measure strictly smaller than \(\eta\). Taking cells one by one, let \(J\) be a minimal subset such that
\begin{equation}
    \sum_{j\in J} p_j \ge \eta .
\end{equation}
By minimality, removing the last cell brings the sum below \(\eta\). Since every individual cell has measure smaller than \(\eta\), we obtain
\begin{equation}
    \sum_{j\in J} p_j < 2\eta < 1-\eta ,
\end{equation}
where the last inequality uses \(\eta<1/3\). This gives a subset sum in
\([\eta,1-\eta]\), contradicting the assumption. Hence one cell must have measure
\begin{equation}
    p_{\max} > 1-\eta .
\end{equation}
Writing \(p_{\max}=1-r\), this means \(r<\eta\). For fixed \(r\), the entropy is maximized when the remaining weight \(r\) is uniformly distributed over the other \(N-1\) cells. Therefore
\begin{equation}
    H(p_1,\dots,p_N)
    \le h_2(r)+r\log(N-1).
\end{equation}
Since \(r<\eta<1/3\), and since the right-hand side is increasing in \(r\) in this range, we get
\begin{equation}
    H(p_1,\dots,p_N)
    <
    h_2(\eta)+\eta\log(N-1).
\end{equation}
Dividing by \(\log N\), and using \(h_2(\eta)>\eta\) for \(0<\eta<1/3\), gives
\begin{equation}
    \frac{H(p_1,\dots,p_N)}{\log N}
    <
    \frac{h_2(\eta)+\eta\log(N-1)}{\log N}
    \le h_2(\eta).
\end{equation}
This contradicts the assumed lower bound on the normalized Shannon entropy, finalising the proof.

\end{proof}

\subsection{Proofs of Appendix \ref{stronger_classicality_nonrelativistic}}

\subsubsection{Proof of Proposition \ref{general_bound_non_repeatability_nonrelativistic}}\label{general_bound_non_repeatability_nonrelativistic_proof}

\begin{proof}
For simplicity, we write 
\begin{equation}
    \ket{z}=\ket{\mathbf{x},\mathbf{p}}, \qquad d\mu(z)=\frac{d^3\mathbf{x}\,d^3\mathbf{p}}{(2\pi\hbar)^3}, \qquad Q_\psi(z) = |\braket{z|\psi}|^2.
\end{equation}

We want to bound

\begin{equation}
    S := \sum_{i=1}^N \braket{\psi|\hat{\Pi}_{C_i} - \hat{\Pi}_{C_i}^2|\psi}= \sum_{i=1}^N \sum_{j\neq i}\braket{\psi|\hat{\Pi}_{C_i}\hat{\Pi}_{C_j}|\psi}.
\end{equation}

By the triangle inequality and the identity $2ab \le a^2 + b^2$, we have
\begin{align}
    \left|\braket{\psi|\hat{\Pi}_{C_i}\hat{\Pi}_{C_j}|\psi}\right|
    &= \left |\int_{C_i} d\mu(z)\int_{C_j} d\mu(z')\,\braket{\psi|z}\braket{z|z'}\braket{z'|\psi} \right | \notag\\
    &\le \int_{C_i} d\mu(z) \int_{C_j} d\mu(z')\,|\braket{\psi|z}|\,|\braket{z|z'}|\,|\braket{z'|\psi}|\\
    &\le \frac{1}{2}\int_{C_i} d\mu(z)\int_{C_j} d\mu(z')\,|\braket{z|z'}|\left(Q_\psi(z)+Q_\psi(z')\right),
\end{align}

where, by Eq.~\eqref{overlap},

\begin{equation}
    |\braket{z|z'}|=e^{-d_{\sigma}(z,z')^2/2}.
\end{equation}

Thus,
\begin{align}
    S &\le\frac{1}{2}\sum_{i=1}^N\int_{C_i} d\mu(z)\int_{\Gamma\setminus C_i} d\mu(z')\,e^{-d_{\sigma}(z,z')^2/2}\left(Q_\psi(z)+Q_\psi(z')\right).
\end{align}

We now split the right-hand side into a near-boundary and a far-from-boundary contribution. For fixed $R>0$, write
\begin{align}
    S &\le S_{\mathrm{near}}+S_{\mathrm{far}},
\end{align}
with

\begin{align}
S_{\mathrm{near}} &:= \frac{1}{2}\sum_{i=1}^N \int_{C_i} d\mu(z) \int_{\Gamma \setminus C_i} d\mu(z')\, \theta(R - d_{\sigma}(z,z')) \notag \\
&\hspace{2cm}\times e^{-d_{\sigma}(z,z')^2/2}\left(Q_\psi(z)+Q_\psi(z')\right),\\
S_{\mathrm{far}}&:= \frac{1}{2}\sum_{i=1}^N\int_{C_i} d\mu(z)\int_{\Gamma \setminus C_i} d\mu(z')\,\theta(d_{\sigma}(z,z')-R) \notag \\
&\hspace{2cm}\times e^{-d_{\sigma}(z,z')^2/2}\left(Q_\psi(z)+Q_\psi(z')\right),
\end{align}

and $\theta(x) =
\begin{cases}
0, & x < 0, \\
1, & x \geq 0 .
\end{cases}$

We first estimate $S_{\mathrm{near}}$. If $z\in C_i$, $z'\in\Gamma\setminus C_i$, and $d_{\sigma}(z,z')\le R$,
then by definition $z \in \partial_R P \cap C_i$.

Therefore, the term with $Q_\psi(z)$ is bounded by
\begin{align}
    &\frac{1}{2} \sum_{i=1}^N \int_{C_i} d\mu(z) \int_{\Gamma \setminus C_i} d\mu(z')\, \theta(R - d_{\sigma}(z,z')) \, e^{-d_{\sigma}(z,z')^2/2} Q_\psi(z)\\
    &\le \frac{1}{2}\sum_{i=1}^N \int_{(\partial_R P) \,  \cap \, C_i} d\mu(z)\, Q_\psi(z) \int_{\Gamma} d\mu(z')\,e^{-d_{\sigma}(z,z')^2/2} = 4 \eta_R(\psi,P),
\end{align}

where we used the integral
\begin{equation}
    \int_{\Gamma} d\mu(z')\,e^{-d_{\sigma}(z,z')^2/2} = 8,
\end{equation}
and the definition of $\eta_R$ in Eq.~\eqref{eta_R}.

Since everything is invariant under exchanging $z \leftrightarrow z'$, the term containing $Q_\psi(z')$ has the same bound and we find
\begin{equation}
    S_{\mathrm{near}}\le 8\eta_R(\psi,P).
\end{equation}

It remains to estimate $S_{\mathrm{far}}$. We again start with the part containing $Q_\psi(z)$.

We have
\begin{align}
    &\frac{1}{2}\sum_{i=1}^N \int_{C_i} d\mu(z)\int_{\Gamma \setminus C_i} d\mu(z')\, \theta(d_{\sigma}(z,z')-R) \, e^{-d_{\sigma}(z,z')^2/2}Q_\psi(z) \notag\\
    &\le \frac{1}{2} \sum_{i=1}^N \int_{C_i} d\mu(z)\, Q_\psi(z) \int_{\Gamma} d\mu(z')\, \theta(d_{\sigma}(z,z')-R) \, e^{-d_{\sigma}(z,z')^2/2}.
\end{align}
By translation invariance, the inner integral is independent of $z$. We denote it by
\begin{equation} \label{tau_R}
    \tau(R) := \int_{\Gamma} d\mu(z')\, \theta(d_{\sigma}(z,z')-R) \, e^{-d_{\sigma}(z,z')^2/2}.
\end{equation}

Thus, 

\begin{align}
    &\frac{1}{2} \sum_{i=1}^N \int_{C_i} d\mu(z) \int_{\Gamma} d\mu(z')\,\theta(R - d_{\sigma}(z,z')) \, e^{-d_{\sigma}(z,z')^2/2}Q_\psi(z) \notag\\
    &\le \frac{1}{2}\tau(R) \sum_{i=1}^N \int_{C_i} d\mu(z)\,Q_\psi(z) = \frac{1}{2}\tau(R),
\end{align}

by the resolution of identity.

Applying the same argument to the term with $Q_\psi(z')$ yields
\begin{equation}
    S_{\mathrm{far}} \le \tau(R).
\end{equation}

To conclude the proof, it is possible to compute the integral in Eq.~\eqref{tau_R} analytically, giving

\begin{equation}
    \tau(R) = e^{-R^2/2}(R^4 + 4R^2 + 8).
\end{equation}
\end{proof}

\subsubsection{Proof of Proposition ~\ref{general_bound_non_compatibility_nonreativistic}}\label{general_bound_non_compatibility_nonreativistic_proof}

\begin{proof}
We write
\begin{equation}
    \ket{\psi_i} := \hat E_{\hat{\mathbf P}}(Q_i) \ket{\psi}
\end{equation}
so that the quantity we want to bound becomes
\begin{equation}
    \sum_{i=1}^N S_i := \sum_{i=1}^N \braket{\psi_i|\hat{\Pi}_{\mathbb R^3\times(\mathbb R^3\setminus Q_i)}|\psi_i},
\end{equation}

where $\braket{\mathbf{k}|\psi_i} = \chi_{Q_i}(\mathbf{k}) \braket{\mathbf{k}|\psi}$.

Using that the $\mathbf{x}$-dependence of coherent states is a plane-wave phase, and Eq.~\eqref{overlap_momentum_position}, we find the matrix-elements
\begin{align}
    \braket{\mathbf{k}|\hat{\Pi}_{\mathbb R^3\times Q}|\mathbf{k}'} &=\delta^{(3)}(\mathbf{k}-\mathbf{k}')\int_{Q} d^3\mathbf{p}\,\left(\frac{\sigma^2}{\pi\hbar^2}\right)^{3/2}\exp\left[-\frac{\sigma^2}{\hbar^2}|\mathbf{k}-\mathbf{p}|^2 \right].
\end{align}
Therefore $\hat{\Pi}_{\mathbb R^3\times Q}$ acts as multiplication in momentum space by $\int_{Q} d^3\mathbf{p}\, G_{\sigma}(\mathbf{k}-\mathbf{p})$ with the normalized Gaussian

\begin{equation}
    G_{\sigma}(\mathbf{k}-\mathbf{p}) =\left(\frac{\sigma^2}{\pi\hbar^2}\right)^{3/2}\exp\left[-\frac{\sigma^2}{\hbar^2}|\mathbf{k}-\mathbf{p}|^2 \right].
\end{equation}

Applying this to $Q=\mathbb R^3\setminus Q_i$, we get

\begin{align}
    S_i&=\int_{\mathbb R^3} d^3\mathbf{k}\,|\braket{\mathbf{k}|\psi_i}|^2\int_{\mathbb R^3\setminus Q_i}d^3\mathbf{p}\,G_{\sigma}(\mathbf{k}-\mathbf{p}) \notag \\
    &=\int_{Q_i} d^3\mathbf{k}\,|\braket{\mathbf{k}|\psi}|^2\int_{\mathbb R^3\setminus Q_i}d^3\mathbf{p}\,G_{\sigma}(\mathbf{k}-\mathbf{p}).
\end{align}

We proceed similarly as for repeatability. The errors can be split into contributions near and far from the boundary,
\begin{equation}
    S_i=S_i^{\mathrm{near}}+ S_i^{\mathrm{far}},
\end{equation}

where

\begin{align}
    S_i^{\mathrm{near}} &:=\int_{\partial_R Q_i}d^3\mathbf{k}\,|\braket{\mathbf{k}|\psi}|^2\int_{\mathbb R^3\setminus Q_i}d^3\mathbf{p}\,G_{\sigma}(\mathbf{k}-\mathbf{p}),\\
    S_i^{\mathrm{far}}
    &:=
    \int_{Q_i \setminus \partial_R Q_i}
    d^3\mathbf{k}\,
    |\braket{\mathbf{k}|\psi}|^2
    \int_{\mathbb R^3\setminus Q_i}
    d^3\mathbf{p}\,
    G_{\sigma}(\mathbf{k}-\mathbf{p}).
\end{align}

We first estimate the near-boundary term. Since $G_\sigma$ is normalized and non-negative,
\begin{align}
    \sum_{i=1}^N S_i^{\mathrm{near}} &\le \sum_{i=1}^N \int_{\partial_RQ_i}d^3\mathbf{k}\,|\braket{\mathbf{k}|\psi}|^2 \notag\\
    &= \eta_R^{\mathbf{\hat P}}(\psi,P).
\end{align}

We now estimate the far-from-boundary term. By definition, for $\mathbf{k} \in Q_i \setminus \partial_R Q_i$,
\begin{align}
    \int_{\mathbb R^3\setminus Q_i} d^3\mathbf{p}\,G_{\sigma}(\mathbf{k}-\mathbf{p}) &\le\int_{\left\{\mathbf{p}\in\mathbb R^3: d_{\sigma}^{\mathbf p}(\mathbf{k},\mathbf{p})>R\right\}} d^3\mathbf{p}\, G_{\sigma}(\mathbf{k}-\mathbf{p}).
\end{align}
The right-hand side is independent of $\mathbf{k}$ by translation invariance. We denote it by
\begin{equation}
    \tau_{\mathbf {\hat P}}(R):=\int_{\left\{\mathbf{p}\in\mathbb R^3: d_{\sigma}^{\mathbf p}(\mathbf{k},\mathbf{p})>R\right\}}d^3\mathbf{p}\,G_{\sigma}(\mathbf{k}-\mathbf{p}).
\end{equation}
It follows that
\begin{align}
    \sum_{i=1}^N S_i^{\mathrm{far}}
    &\le \tau_{\mathbf {\hat P}}(R)\int_{\mathbb R^3}d^3\mathbf{k}\,|\braket{\mathbf{k}|\psi}|^2 \notag\\
    &= \tau_{\mathbf {\hat P}}(R).
\end{align}

Using the definition of the error function, it can be shown that
\begin{align}
    \tau_{\mathbf {\hat P}}(R) =
    \frac{2R}{\sqrt{\pi}}e^{-R^2}+\operatorname{erfc}(R).
\end{align}
This proves the claim.
\end{proof}


\bibliography{bibliography}

@article{RevModPhys.21.400,
  title = {Localized States for Elementary Systems},
  author = {Newton, T. D. and Wigner, E. P.},
  journal = {Rev. Mod. Phys.},
  volume = {21},
  issue = {3},
  pages = {400--406},
  numpages = {0},
  year = {1949},
  month = {Jul},
  publisher = {American Physical Society},
  doi = {10.1103/RevModPhys.21.400},
  url = {https://link.aps.org/doi/10.1103/RevModPhys.21.400}
}

@inproceedings{fulling_1989,
place={Cambridge},
series={London Mathematical Society Student Texts},
title={Quantization of a static, scalar field theory},
DOI={10.1017/CBO9781139172073.004},
booktitle={Aspects of Quantum Field Theory in Curved Spacetime},
publisher={Cambridge University Press},
author={Fulling, Stephen A.},
year={1989},
pages={48–73},
collection={London Mathematical Society Student Texts}
}

@article{PhysRev.139.B963,
  title = {Nonlocal Properties of Stable Particles},
  author = {Fleming, Gordon N.},
  journal = {Phys. Rev.},
  volume = {139},
  issue = {4B},
  pages = {B963--B968},
  numpages = {0},
  year = {1965},
  month = {Aug},
  publisher = {American Physical Society},
  doi = {10.1103/PhysRev.139.B963},
  url = {https://link.aps.org/doi/10.1103/PhysRev.139.B963}
}

@article{PhysRevD.10.3320,
  title = {Remark on causality and particle localization},
  author = {Hegerfeldt, Gerhard C.},
  journal = {Phys. Rev. D},
  volume = {10},
  issue = {10},
  pages = {3320--3321},
  numpages = {0},
  year = {1974},
  month = {Nov},
  publisher = {American Physical Society},
  doi = {10.1103/PhysRevD.10.3320},
  url = {https://link.aps.org/doi/10.1103/PhysRevD.10.3320}
}

@article{PhysRevD.22.377,
  title = {Remarks on causality, localization, and spreading of wave packets},
  author = {Hegerfeldt, Gerhard C. and Ruijsenaars, Simon N. M.},
  journal = {Phys. Rev. D},
  volume = {22},
  issue = {2},
  pages = {377--384},
  numpages = {0},
  year = {1980},
  month = {Jul},
  publisher = {American Physical Society},
  doi = {10.1103/PhysRevD.22.377},
  url = {https://link.aps.org/doi/10.1103/PhysRevD.22.377}
}

@incollection{Malament1996-MALIDO,
  author = {Malament, David B.},
  title = {In Defense of Dogma: Why There Cannot Be a Relativistic Quantum Mechanical Theory of (Localizable) Particles},
  booktitle = {Perspectives on Quantum Reality},
  editor = {Clifton, Rob},
  publisher = {Kluwer Academic Publishers},
  pages = {1--10},
  year = {1996},
  doi = {10.1007/978-94-015-8656-6_1}
}

@article{87c29b34-f4a4-31da-ad1b-ae81d42452e1,
 ISSN = {00318248, 1539767X},
 URL = {https://www.jstor.org/stable/10.1086/338939},
 author = {Hans Halvorson and Rob Clifton},
 journal = {Philosophy of Science},
 number = {1},
 pages = {1--28},
 publisher = {The University of Chicago Press, Philosophy of Science Association},
 title = {No Place for Particles in Relativistic Quantum Theories?},
 urldate = {2024-03-07},
 volume = {69},
 year = {2002}
}

@book{landsman2012mathematical,
  title={Mathematical Topics Between Classical and Quantum Mechanics},
  author={Landsman, N.P.},
  isbn={9781461216803},
  series={Springer Monographs in Mathematics},
  year={2012},
  publisher={Springer New York},
  doi = {https://doi.org/10.1007/978-1-4612-1680-3}
}

@book{schrodinger1928collected,
  title={Collected Papers on Wave Mechanics},
  author={Schr{\"o}dinger, E. and Shearer, J.F. and Deans, W.M.},
  isbn={9780598628084},
  lccn={29010168},
  year={1928},
  publisher={Blackie \& Son limited}
}

@Article{Zeh1970,
author={Zeh, H. D.},
title={On the interpretation of measurement in quantum theory},
journal={Foundations of Physics},
year={1970},
month={Mar},
day={01},
volume={1},
number={1},
pages={69-76},
issn={1572-9516},
doi={10.1007/BF00708656},
url={https://doi.org/10.1007/BF00708656}
}

@article{PhysRevD.24.1516,
  title = {Pointer basis of quantum apparatus: Into what mixture does the wave packet collapse?},
  author = {Zurek, W. H.},
  journal = {Phys. Rev. D},
  volume = {24},
  issue = {6},
  pages = {1516--1525},
  numpages = {0},
  year = {1981},
  month = {Sep},
  publisher = {American Physical Society},
  doi = {10.1103/PhysRevD.24.1516},
  url = {https://link.aps.org/doi/10.1103/PhysRevD.24.1516}
}

@article{PhysRevD.26.1862,
  title = {Environment-induced superselection rules},
  author = {Zurek, W. H.},
  journal = {Phys. Rev. D},
  volume = {26},
  issue = {8},
  pages = {1862--1880},
  numpages = {0},
  year = {1982},
  month = {Oct},
  publisher = {American Physical Society},
  doi = {10.1103/PhysRevD.26.1862},
  url = {https://link.aps.org/doi/10.1103/PhysRevD.26.1862}
}

@Article{Joos1985,
author={Joos, E.
and Zeh, H. D.},
title={The emergence of classical properties through interaction with the environment},
journal={Zeitschrift f{\"u}r Physik B Condensed Matter},
year={1985},
month={Jun},
day={01},
volume={59},
number={2},
pages={223-243},
issn={1431-584X},
doi={10.1007/BF01725541},
url={https://doi.org/10.1007/BF01725541}
}

@article{RevModPhys.75.715,
  title = {Decoherence, einselection, and the quantum origins of the classical},
  author = {Zurek, Wojciech Hubert},
  journal = {Rev. Mod. Phys.},
  volume = {75},
  issue = {3},
  pages = {715--775},
  numpages = {0},
  year = {2003},
  month = {May},
  publisher = {American Physical Society},
  doi = {10.1103/RevModPhys.75.715},
  url = {https://link.aps.org/doi/10.1103/RevModPhys.75.715}
}

@article{PhysRevLett.99.180403,
  title = {Classical World Arising out of Quantum Physics under the Restriction of Coarse-Grained Measurements},
  author = {Kofler, Johannes and Brukner, Caslav},
  journal = {Phys. Rev. Lett.},
  volume = {99},
  issue = {18},
  pages = {180403},
  numpages = {4},
  year = {2007},
  month = {Nov},
  publisher = {American Physical Society},
  doi = {10.1103/PhysRevLett.99.180403},
  url = {https://link.aps.org/doi/10.1103/PhysRevLett.99.180403}
}

@article{PhysRevLett.101.090403,
  title = {Conditions for Quantum Violation of Macroscopic Realism},
  author = {Kofler, Johannes and Brukner, Caslav},
  journal = {Phys. Rev. Lett.},
  volume = {101},
  issue = {9},
  pages = {090403},
  numpages = {4},
  year = {2008},
  month = {Aug},
  publisher = {American Physical Society},
  doi = {10.1103/PhysRevLett.101.090403},
  url = {https://link.aps.org/doi/10.1103/PhysRevLett.101.090403}
}

@book{duncan2012conceptual,
  author = {Duncan, Anthony},
  title = {The Conceptual Framework of Quantum Field Theory},
  publisher = {Oxford University Press},
  year = {2012},
  month = {aug},
  isbn = {9780199573264},
  doi = {10.1093/acprof:oso/9780199573264.001.0001},
  url = {https://doi.org/10.1093/acprof:oso/9780199573264.001.0001}
}

@article{PhysRevD.53.7327,
  title = {Decoherence of quantum fields: Pointer states and predictability},
  author = {Anglin, J. R. and Zurek, W. H.},
  journal = {Phys. Rev. D},
  volume = {53},
  issue = {12},
  pages = {7327--7335},
  numpages = {0},
  year = {1996},
  month = {Jun},
  publisher = {American Physical Society},
  doi = {10.1103/PhysRevD.53.7327},
  url = {https://link.aps.org/doi/10.1103/PhysRevD.53.7327}
}

@article{KUBLER1973405,
title = {Dynamics of quantum correlations},
journal = {Annals of Physics},
volume = {76},
number = {2},
pages = {405-418},
year = {1973},
issn = {0003-4916},
doi = {https://doi.org/10.1016/0003-4916(73)90040-7},
url = {https://www.sciencedirect.com/science/article/pii/0003491673900407},
author = {O Kübler and H.D Zeh}
}

@misc{kofler2010macroscopic,
  author = {Kofler, Johannes and Buri{\'c}, Nikola and Brukner, {\v C}aslav},
  title = {Macroscopic realism and spatiotemporal continuity},
  year = {2010},
  eprint = {0906.4465},
  archivePrefix = {arXiv},
  primaryClass = {quant-ph},
  url = {https://arxiv.org/abs/0906.4465}
}

@misc{bibak2026classicallimitquantummechanics,
      title={The classical limit of quantum mechanics through coarse-grained measurements}, 
      author={Fatemeh Bibak and Carlo Cepollaro and Nicolás Medina Sánchez and Borivoje Dakić and Časlav Brukner},
      year={2026},
      eprint={2503.15642},
      archivePrefix={arXiv},
      primaryClass={quant-ph},
      url={https://arxiv.org/abs/2503.15642}, 
}

@Article{Omnes1988,
author={Omn{\`e}s, Roland},
title={{Logical reformulation of quantum mechanics. III. Classical limit and irreversibility}},
journal={Journal of Statistical Physics},
year={1988},
month={Nov},
day={01},
volume={53},
number={3},
pages={957-975},
issn={1572-9613},
doi={10.1007/BF01014232},
url={https://doi.org/10.1007/BF01014232}
}

@book{peres1995quantum,
  title={Quantum Theory: Concepts and Methods},
  author={Peres, A.},
  isbn={9780792336327},
  lccn={93032994},
  series={Fundamental Theories of Physics},
  year={1995},
  publisher={Springer Dordrecht}
}

@article{perelomov1977generalized,
  author = {Perelomov, Askold M.},
  title = {Generalized coherent states and some of their applications},
  journal = {Soviet Physics Uspekhi},
  volume = {20},
  number = {9},
  pages = {703--720},
  year = {1977},
  doi = {10.3367/UFNr.0123.197709b.0023}
}

@article{PhysRevLett.112.010402,
  title = {Coarsening Measurement References and the Quantum-to-Classical Transition},
  author = {Jeong, Hyunseok and Lim, Youngrong and Kim, M. S.},
  journal = {Phys. Rev. Lett.},
  volume = {112},
  issue = {1},
  pages = {010402},
  numpages = {5},
  year = {2014},
  month = {Jan},
  publisher = {American Physical Society},
  doi = {10.1103/PhysRevLett.112.010402},
  url = {https://link.aps.org/doi/10.1103/PhysRevLett.112.010402}
}

@book{weinberg1995quantum,
  author    = {Weinberg, Steven},
  title     = {The Quantum Theory of Fields. Volume I: Foundations},
  publisher = {Cambridge University Press},
  year      = {1995}
}

@misc{landsman2005classicalquantum,
      title={Between classical and quantum}, 
      author={N. P. Landsman},
      year={2005},
      eprint={quant-ph/0506082},
      archivePrefix={arXiv},
      primaryClass={quant-ph},
      url={https://arxiv.org/abs/quant-ph/0506082}, 
}

@article{PhysRevLett.70.1187,
  title = {Coherent states via decoherence},
  author = {Zurek, Wojciech H. and Habib, Salman and Paz, Juan Pablo},
  journal = {Phys. Rev. Lett.},
  volume = {70},
  issue = {9},
  pages = {1187--1190},
  numpages = {0},
  year = {1993},
  month = {Mar},
  publisher = {American Physical Society},
  doi = {10.1103/PhysRevLett.70.1187},
  url = {https://link.aps.org/doi/10.1103/PhysRevLett.70.1187}
}

@article{PhysRevE.50.2538,
  title = {Decoherence produces coherent states: An explicit proof for harmonic chains},
  author = {Tegmark, Max and Shapiro, Harold S.},
  journal = {Phys. Rev. E},
  volume = {50},
  issue = {4},
  pages = {2538--2547},
  numpages = {0},
  year = {1994},
  month = {Oct},
  publisher = {American Physical Society},
  doi = {10.1103/PhysRevE.50.2538},
  url = {https://link.aps.org/doi/10.1103/PhysRevE.50.2538}
}

@article{PhysRevLett.85.3552,
  title = {Robustness and Diffusion of Pointer States},
  author = {Di\'osi, Lajos and Kiefer, Claus},
  journal = {Phys. Rev. Lett.},
  volume = {85},
  issue = {17},
  pages = {3552--3555},
  numpages = {0},
  year = {2000},
  month = {Oct},
  publisher = {American Physical Society},
  doi = {10.1103/PhysRevLett.85.3552},
  url = {https://link.aps.org/doi/10.1103/PhysRevLett.85.3552}
}

@book{cohn2013measure,
  title={Measure theory},
  author={Cohn, Donald L},
  volume={2},
  year={2013},
  publisher={Springer}
}

@book{goldstein2002classical,
  author    = {Goldstein, Herbert and Poole, Charles P. and Safko, John L.},
  title     = {Classical Mechanics},
  edition   = {3},
  publisher = {Addison-Wesley},
  year      = {2002},
  isbn      = {978-0-201-65702-9}
}

@book{landau1975classical,
  author    = {Landau, L. D. and Lifshitz, E. M.},
  title     = {The Classical Theory of Fields},
  series    = {Course of Theoretical Physics},
  volume    = {2},
  edition   = {4},
  publisher = {Pergamon Press},
  year      = {1975},
  isbn      = {978-0-08-018176-9}
}

@book{fetter1971quantum,
  author    = {Fetter, Alexander L. and Walecka, John Dirk},
  title     = {Quantum Theory of Many-Particle Systems},
  publisher = {McGraw-Hill},
  address   = {New York},
  year      = {1971}
}

@Article{Hepp1974,
author={Hepp, Klaus},
title={The classical limit for quantum mechanical correlation functions},
journal={Communications in Mathematical Physics},
year={1974},
month={Dec},
day={01},
volume={35},
number={4},
pages={265-277},
issn={1432-0916},
doi={10.1007/BF01646348},
url={https://doi.org/10.1007/BF01646348}
}

@book{Billingsley1995,
  author    = {Patrick Billingsley},
  title     = {Probability and Measure},
  edition   = {3},
  publisher = {John Wiley \& Sons},
  year      = {1995}
}

@book{Bogachev2007,
  author    = {Vladimir I. Bogachev},
  title     = {Measure Theory},
  volume    = {1},
  publisher = {Springer},
  year      = {2007}
}

@book{Kallenberg2021,
  author    = {Olav Kallenberg},
  title     = {Foundations of Modern Probability},
  edition   = {3},
  publisher = {Springer},
  year      = {2021}
}

@book{Kolmogorov1956,
  author    = {A. N. Kolmogorov},
  title     = {Foundations of the Theory of Probability},
  publisher = {Chelsea Publishing Company},
  year      = {1956},
  note      = {Second English edition}
}

@book{holevo2011probabilistic,
  title={Probabilistic and statistical aspects of quantum theory},
  author={Holevo, Alexander S},
  volume={1},
  year={2011},
  publisher={Springer Science \& Business Media}
}

@article{wigner1939unitary,
  author = {Wigner, Eugene P.},
  title = {{On unitary representations of the inhomogeneous Lorentz group}},
  journal = {Annals of Mathematics},
  volume = {40},
  number = {1},
  pages = {149--204},
  year = {1939},
  doi = {10.2307/1968551}
}

@article{luders1950zustandsanderung,
  title={{\"U}ber die Zustands{\"a}nderung durch den Me{\ss}proze{\ss}},
  author={L{\"u}ders, Gerhart},
  journal={Annalen der Physik},
  volume={443},
  number={5-8},
  pages={322--328},
  year={1950},
  publisher={Wiley Online Library},
  doi = {10.1002/andp.19504430510}
}

@article{shannon1948mathematical,
  title={A mathematical theory of communication},
  author={Shannon, Claude Elwood},
  journal={The Bell system technical journal},
  volume={27},
  number={3},
  pages={379--423},
  year={1948},
  publisher={Nokia Bell Labs}
}

@book{cover1999elements,
  title={Elements of information theory},
  author={Cover, Thomas M},
  year={1999},
  publisher={John Wiley \& Sons}
}

@article{Emary_2014,
doi = {10.1088/0034-4885/77/1/016001},
url = {https://doi.org/10.1088/0034-4885/77/1/016001},
year = {2013},
month = {dec},
publisher = {IOP Publishing},
volume = {77},
number = {1},
pages = {016001},
author = {Emary, Clive and Lambert, Neill and Nori, Franco},
title = {{Leggett–Garg inequalities}},
journal = {Reports on Progress in Physics}
}

@article{PhysRevLett.54.857,
  title = {Quantum mechanics versus macroscopic realism: Is the flux there when nobody looks?},
  author = {Leggett, A. J. and Garg, Anupam},
  journal = {Phys. Rev. Lett.},
  volume = {54},
  issue = {9},
  pages = {857--860},
  numpages = {0},
  year = {1985},
  month = {Mar},
  publisher = {American Physical Society},
  doi = {10.1103/PhysRevLett.54.857},
  url = {https://link.aps.org/doi/10.1103/PhysRevLett.54.857}
}

@article{DaviesLewis1970,
  author = {Davies, E. B. and Lewis, J. T.},
  title = {An operational approach to quantum probability},
  journal = {Communications in Mathematical Physics},
  volume = {17},
  pages = {239--260},
  year = {1970},
  doi = {10.1007/BF01647093}
}

@article{BuschLahti1996,
  author = {Busch, Paul and Lahti, Pekka J.},
  title = {The Standard Model of Quantum Measurement Theory: History and Applications},
  journal = {Foundations of Physics},
  volume = {26},
  number = {7},
  pages = {875--893},
  year = {1996},
  doi = {10.1007/BF02148831}
}

@article{schrodinger1926stetige,
  title={Der stetige {\"U}bergang von der Mikro-zur Makromechanik},
  author={Schr{\"o}dinger, Erwin},
  journal={naturwissenschaften},
  volume={14},
  number={28},
  pages={664--666},
  year={1926},
  publisher={Springer-Verlag Berlin/Heidelberg}
}

@article{zhang1990coherent,
  author = {Zhang, Wei-Min and Feng, Da Hsuan and Gilmore, Robert},
  title = {Coherent states: Theory and some applications},
  journal = {Rev. Mod. Phys.},
  volume = {62},
  number = {4},
  pages = {867--927},
  year = {1990},
  doi = {10.1103/RevModPhys.62.867},
  url = {https://link.aps.org/doi/10.1103/RevModPhys.62.867}
}

@article{KochenSpecker1967,
 ISSN = {00959057, 19435274},
 URL = {http://www.jstor.org/stable/24902153},
 author = {Simon Kochen and E. P. Specker},
 journal = {Journal of Mathematics and Mechanics},
 number = {1},
 pages = {59--87},
 publisher = {Indiana University Mathematics Department},
 title = {The Problem of Hidden Variables in Quantum Mechanics},
 urldate = {2026-06-08},
 volume = {17},
 year = {1967}
}

@article{Spekkens2005,
  title = {Contextuality for preparations, transformations, and unsharp measurements},
  author = {Spekkens, R. W.},
  journal = {Phys. Rev. A},
  volume = {71},
  issue = {5},
  pages = {052108},
  numpages = {17},
  year = {2005},
  month = {May},
  publisher = {American Physical Society},
  doi = {10.1103/PhysRevA.71.052108},
  url = {https://link.aps.org/doi/10.1103/PhysRevA.71.052108}
}

@article{AbramskyBrandenburger2011,
doi = {10.1088/1367-2630/13/11/113036},
url = {https://doi.org/10.1088/1367-2630/13/11/113036},
year = {2011},
month = {nov},
publisher = {IOP Publishing},
volume = {13},
number = {11},
pages = {113036},
author = {Abramsky, Samson and Brandenburger, Adam},
title = {The sheaf-theoretic structure of non-locality and contextuality},
journal = {New Journal of Physics}
}

@misc{FedidaGlowacki2026,
      title={Foundations of Relational Quantum Field Theory I: Scalars}, 
      author={Samuel Fedida and Jan Głowacki},
      year={2026},
      eprint={2507.21601},
      archivePrefix={arXiv},
      primaryClass={quant-ph},
      url={https://arxiv.org/abs/2507.21601}, 
}

@article{kowalski2018coherent,
  author = {Kowalski, Krzysztof and Rembieli{\'n}ski, Jakub and Gazeau, Jean-Pierre},
  title = {On the coherent states for a relativistic scalar particle},
  journal = {Annals of Physics},
  volume = {399},
  pages = {204--223},
  year = {2018},
  doi = {10.1016/j.aop.2018.10.014}
}

\end{document}